\newtheorem{theorem}{Theorem}
\newtheorem{corollary}[theorem]{Corollary}
\newtheorem{proof}{Proof}
\newtheorem{definition}{Definition}
\newtheorem{lemma}{Lemma}
\renewcommand{\thesection}{\arabic{section}}
\renewcommand{\thesubsection}{\thesection.\arabic{subsection}}
\renewcommand{\theequation}{\thesection.\arabic{equation}}
\titlespacing\section{0pt}{6pt plus 4pt minus 2pt}{6pt plus 2pt minus 2pt}
\titlespacing\subsection{0pt}{6pt plus 4pt minus 2pt}{6pt plus 2pt minus 2pt}
\titlespacing\subsubsection{0pt}{6pt plus 4pt minus 2pt}{6pt plus 2pt minus 2pt}
\newcommand{\wrap}{}
\newcommand{\comp}{\llcorner} %composition of V with C; stmaryrd, bold
\renewcommand{\vec}[1]{\boldsymbol {#1}}
\newcommand{\vect}[1]{\boldsymbol {#1}}
\newcommand{\tens}[1]{\boldsymbol{#1}}
\newcommand{\q}{\hat{p}}
\newcommand{\pos}{\vec{{q}}}
\newcommand{\mom}{\vec{\dot{q}}}
\newcommand{\sign}{\mathsf{sign}}
\newcommand{\f}{f}  
\newcommand{\hphi}{\vartheta}
\newcommand{\hf}{\hat{f}}
\newcommand{\x}{\vec{x}}
\renewcommand{\u}{\vec{u}}
\newcommand{\m}{\vec{m}}
\newcommand{\n}{\vec{n}}
\newcommand{\s}{\vec{s}}
\renewcommand{\r}{\vec{r}}
\newcommand{\valpha}{\vec{\alpha}}
\newcommand{\wa}{\mathring{\alpha}}
\renewcommand{\wp}{\mathring{p}}
\newcommand{\chemdens}{\psi}
\newcommand{\chempdf}{f}
\newcommand{\B}{\vec{B}}
\newcommand{\C}{\vec{C}}
\newcommand{\I}{\vec{I}}   
\newcommand{\U}{\vec{U}}
\newcommand{\V}{\vec{V}}
\newcommand{\W}{\vec{W}}
\newcommand{\X}{\vec{X}}
\newcommand{\Y}{\vec{Y}}
\newcommand{\0}{\vec{0}}
\newcommand{\1}{\vec{1}}
\newcommand{\2}{\vec{2}}
\newcommand{\D}{\mathcal{D}}
\newcommand{\R}{\mathbb{R}}
\newcommand{\N}{\mathbb{N}}
\newcommand{\Vop}{{\mathcal{V}}}
\newcommand{\Wop}{{\mathcal{W}}}
\newcommand{\vgrad}{\tens{G}}
\newcommand{\vgradtil}{\widetilde{\tens{G}}}
\newcommand{\J}{\tens{J}}
\newcommand{\M}{\tens{M}}
\newcommand{\K}{\tens{K}}
\newcommand{\F}{F}  %\widehat{\varphi}
\newcommand{\vkappa}{\vec{\kappa}}
\newcommand{\hu}{\hat{\u}}
\newcommand{\dhu}{\dot{\hat{\u}}}
\newcommand{\vpartial}{\vec{\partial}}
\newcommand{\obs}{g}
\newcommand{\obss}{h}
\newcommand{\obsss}{\varsigma}
\newcommand{\lpos}{\llbracket}  %stmaryrd
\newcommand{\rpos}{\rrbracket}  %stmaryrd
\newcommand{\lgen}{\Lbag}  %%stmaryrd
\newcommand{\rgen}{\Rbag}  %%stmaryrd
\newcommand{\lgenn}{\llparenthesis}  %%stmaryrd
\newcommand{\rgenn}{\rrparenthesis}  %%stmaryrd
\newcommand{\delfour}{\blacktriangledown}  %stmaryrd
\newcommand{\thn}{$\thorn$} %wasysym
\newcommand{\lvel}{\langle \hspace{-2pt} \langle \hspace{-2pt} \langle}  
\newcommand{\rvel}{\rangle \hspace{-2pt} \rangle \hspace{-2pt} \rangle} 
\newcommand{\lvol}{[ \hspace{-1pt} [ \hspace{-1pt}  [} 
\newcommand{\rvol}{] \hspace{-1pt} ] \hspace{-1pt}  ]}
\newcommand{\vminfive}{\vspace{-5pt}}
\newcommand{\vminten}{\vspace{-10pt}}
\newcommand{\vminfifteen}{\vspace{-20pt}}
\begin{document}

% Title. If the supplement option is on, then "Supplementary Material"
% is automatically inserted before the title.
\title{Rethinking the Reynolds Transport Theorem, Liouville Equation, and Perron-Frobenius and Koopman Operators}

% Authors: full names plus addresses.
% Authors: full names plus addresses.

\author{Robert K. Niven}
\email{Email r.niven@adfa.edu.au}
\affiliation{School of Engineering and Information Technology, The University of New South Wales, Northcott Drive, Canberra ACT 2600, Australia.}
\affiliation{Institut Pprime (CNRS, Universit\'e de Poitiers, ISAE-ENSMA), Poitiers, France.}

\author{Laurent Cordier}
\affiliation{Institut Pprime (CNRS, Universit\'e de Poitiers, ISAE-ENSMA), Poitiers, France.}

\author{Eurika Kaiser}
\affiliation{Dept of Mechanical Engineering, University of Washington, Seattle, WA, USA.}
%\affiliation{School of Mathematics and Statistics, The University of New South Wales, Sydney NSW 2052, Australia.}

\author{Michael Schlegel}
\affiliation{Institut f\"ur Str\"omungsmechanik und Technische Akustik (ISTA),
Technische Universit\"at Berlin, 
D-10623 Berlin, Germany.}

\author{Bernd R. Noack}
\affiliation{Institute for Turbulence-Noise-Vibration Interaction and Control,
Harbin Institute of Technology, Shenzhen,
%Room 312, Building C, University Town, Xili, Shenzhen 518058,
%People's Republic of 
China}
\affiliation{Institut f\"ur Str\"omungsmechanik und Technische Akustik (ISTA),
Technische Universit\"at Berlin, 
%% Stra{\ss}e des 17. Juni 134 [NOT NECESSARY]
D-10623 Berlin, Germany.}

\date{23 November 2020}% 

\begin{abstract}
The Reynolds transport theorem provides a generalized conservation law for the transport of a conserved quantity by fluid flow through a continuous connected control volume. It is close connected to the Liouville equation for the conservation of a local probability density function, which in turn leads to the Perron-Frobenius and Koopman evolution operators. All of these tools can be interpreted as continuous temporal maps between fluid elements or domains, connected by the integral curves (pathlines) described by a velocity vector field. We here review these theorems and operators, to present a unified framework for their extension to maps in different spaces. These include (a) spatial maps between different positions in a time-independent flow, connected by a velocity gradient tensor field, and (b) parametric maps between different positions in a manifold, connected by a generalized tensor field. The general formulation invokes a multivariate extension of exterior calculus, and the concept of a probability differential form. The analyses reveal the existence of multivariate continuous (Lie) symmetries induced by a vector or tensor field associated with a conserved quantity, which are manifested as integral conservation laws in different spaces. The findings are used to derive generalized conservation laws, Liouville equations and operators for a number of fluid mechanical and dynamical systems, including spatial (time-independent) and spatiotemporal fluid flows, flow systems with pairwise or $n$-wise spatial correlations, phase space systems, Lagrangian flows, spectral flows, and systems with coupled chemical reaction and flow processes. 
\end{abstract}

% REQUIRED
\keywords{
Reynolds transport theorem, Liouville equation, Lie symmetry, Perron-Frobenius operator, Koopman operator, multivariate exterior calculus
}

% REQUIRED
\pacs{
35B53, %Liouville theorems and Phragmen-Lindelof theorems in context of PDEs
35Q49, %Transport equations (For calculus of variations and optimal control, see 49Q22; for fluid mechanics, see 76F25; for statistical mechanics, see 82C70, 82D75; for operations research, see 90B06; for mathematical programming, see 90C08)
45K05, %Integro-partial differential equations [See also 34K30, 35R09, 35R10, 47G20]
47D06, %One-parameter semigroups and linear evolution equations [See also 34G10, 34K30]
58A15, %Exterior differential systems (Cartan theory)
60B15, %Probability measures on groups or semigroups, Fourier transforms, factorization
70H33, %Symmetries and conservation laws, reverse symmetries, invariant manifolds and their bifurcations, reduction for problems in Hamiltonian and Lagrangian mechanics
76A02 %Foundations of fluid mechanics
}

\maketitle

%\tableofcontents

\section{\label{Intro}Introduction}

In the early 20th century, building on his successes in the analysis of fluid turbulence \cite{Reynolds_1883, Reynolds_1895}, Osborne Reynolds presented what is now called the {\it Reynolds transport theorem}, a generalized equation for the transport of a conserved quantity by fluid flow through a stationary or moving continuous control volume \cite{Reynolds_1903}. This reduces in particular circumstances to the integral and differential conservation laws (such as for mass, momentum and energy) of fluid mechanics.  The Reynolds transport theorem and its subsidiary conservation equations -- with their paradigm of an Eulerian velocity field -- as well as Reynolds' insights into fluid turbulence, now provide the foundation for the overwhelming proportion of theoretical and numerical models used by practitioners in fluid mechanics.  

In the older field of classical mechanics, Liouville \cite{Liouville_1838} presented a relation for the derivative of a state function which, when later applied in statistical mechanics, gives a conservation equation for the local probability density function in time \cite{Liouville_1838, Lutzen_1990}. %The latter is now termed the {\it Liouville equation} (noting there are several similar-named equations and interpretations). 
While often grouped with the Fokker-Planck equation \cite[e.g.,][]{Risken_1984}, the latter includes the effect of stochastic processes or diffusion.  In the early 20th century, developments in matrix theory \cite{Perron_1907, Frobenius_1912} led to the Perron-Frobenius (or Ruelle-Perron-Frobenius) operator \cite{Ruelle_1968} and its dual Koopman operator \cite{Koopman_1931, Koopman_vonNeumann_1932}, for extrapolation of a time-evolving density or observable, respectively, from an initial value. These operators have the advantage of linearity, enabling the conversion of a nonlinear dynamical system into a linear evolution equation, {albeit at the expense of infinite dimensionality of the operator}. Over the past decade, there has been considerable interest in the theory and application of these operators to a variety of dynamical and fluid flow systems \cite[e.g.,][]{Mezic_2005, Rowley_etal_2009, Chen_etal_2012, Budisic_etal_2012, Bagheri_2013, Mezic_2013, Bagheri_2014}.

%\onecolumngrid

\setlength\tabcolsep{3pt}

\begin{figure*}[t] \footnotesize \sffamily
%\begin{center}
\makebox[\textwidth][c]{
\begin{tabular*}{522pt}{| m{55pt} | m{130pt} | m{130pt} | m{180pt} |}  %needs array package
\hline
\vminfifteen
\small{Space}
\vminfifteen
&
\vminfifteen
\begin{gather*}
\text{\small{Geometric space $\Omega \subset \R^3$}}
\end{gather*}
\vminfifteen
&
\vminfifteen
\begin{gather*}
\text{\small{Velocity domain $\D \subset \R^3$}}
\end{gather*}
\vminfifteen
&
\vminfifteen
\begin{gather*}
\text{\small{Submanifold $\Omega^n$ $\subset$ manifold $M^n$}} 
\end{gather*}
\vminfifteen
  \\
  \hline
\vminfifteen
\small{Coordinates, parameters}
\vminfifteen
&
\vminfifteen
\begin{gather*}
\text{\small{Geometric coordinates $\x$, }}
\\
\text{\small{temporal map $t$}}
\end{gather*}
  \vminfifteen
&  \vminfifteen
\begin{gather*}
\text{\small{Velocity coordinates $\u$, }}
\\
\text{\small{spatial map $\x$}}
\end{gather*}
  \vminfifteen
  &
    \vminfifteen
\begin{gather*}
\text{\small{Local coordinates $\X$, }} 
\\
\text{\small{parametric map $\C$}} 
\end{gather*}
  \vminfifteen
  \\
  \hline
\small{Vector or} \small{tensor field}
&
  \vminfifteen
  \begin{gather*} \text{\small{Velocity vector field }} \u_{rel}(\x,t) \end{gather*}
  \vminfifteen
&  \vminfifteen
  \begin{gather*} \text{\small{Velocity gradient tensor field }} \\ \vgrad_{rel}(\u,\x) \end{gather*}
  \vminfifteen
  &
    \vminfifteen
  \begin{gather*} \text{\small{Tensor field }} \V \end{gather*}
  \vminfifteen
  \\
  \hline
\small{Density}
&
  \vminfifteen
\begin{gather*}
\text{\small{Volumetric density }}\\
\alpha(\x,t) 
\end{gather*}
  \vminfifteen
&  \vminfifteen
\begin{gather*}
\text{\small{Phase space density }} \\
\varphi(\u, \x) 
\end{gather*}
  \vminfifteen
  &
    \vminfifteen
\begin{gather*}
\text{\small{Field of differential $r$-forms }}
\omega^r 
%\\ \text{with local coordinates}
\end{gather*}
  \vminfifteen
  \\
  \hline
\small{Reynolds transport theorem}
&
 \begin{picture}(120,110)
 \put(12,0){ \includegraphics[height=108pt]{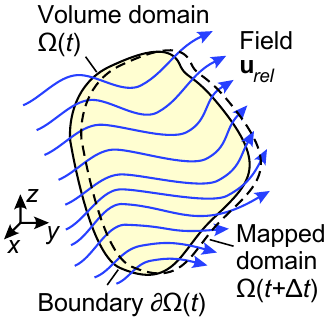}}
 \end{picture}
  \vminfive
 \begin{gather*}
  \dfrac{d}{dt} \iiint\limits_{\Omega(t)} \alpha \, d^3 \x 
=   \dfrac{D}{Dt} \iiint\limits_{\Omega(t)} \alpha \, d^3 \x 
\\ 
=   \iiint\limits_{\Omega(t)}  \biggl[ \dfrac{\partial \alpha}{\partial t}   +  
\nabla_{\x} \cdot (\alpha \, \u_{rel} ) \biggr ] d^3 \x 
\end{gather*}
 \vminfive
&
 \begin{picture}(120,110)
\put(12,0){ \includegraphics[height=108pt]{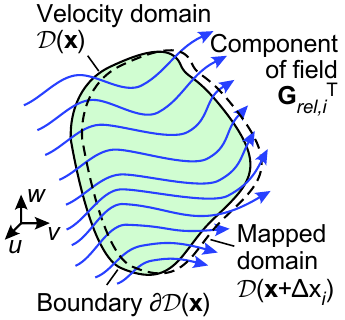} }
 \end{picture}
  \vminfive
\begin{gather*}
d \iiint\limits_{\D(\x)} \varphi \, d^3 \u 
 =
\biggl[
\iiint\limits_{\D(\x )}
\bigl(
\nabla_{\x} \varphi
+
\\
\nabla_{\u} \cdot ( \varphi \, \vgrad_{rel}^\top )
\bigr) d^3 \u
\biggr] \cdot
d \x
\end{gather*}
 \vminfive
&
  \begin{picture}(130,110)
\put(12,-3){ \includegraphics[height=108pt]{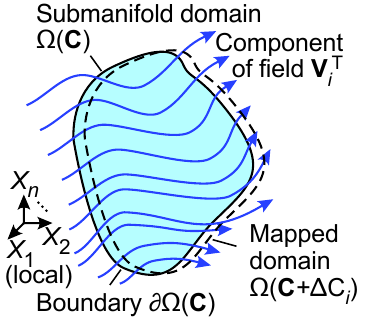} }
 \end{picture}
  \vminfive
 \begin{multline*}
\hat{d} \int\limits_{\Omega(\C)}  \omega^r 
=  \biggl[ \int\limits_{\Omega(\C)} \mathcal{L}_{\V \comp \C}^{(\C)}  \, \omega^r \biggr] \cdot {d\C} 
\\
=   \biggl[  \int\limits_{\Omega(\C)}  
\vpartial_{\C} \omega^r +  i_{\V}^{(\C)}  \, {d} \omega^r + {d} ( i_{\V}^{(\C)} \, \omega^r  )
\biggr] \cdot  {d\C} 
\end{multline*}
  \vminfive
 \\
 \hline
 \small{Probability} &
 \vminfifteen
  \begin{gather*}
 p(\x|t) d^3\x
 \end{gather*}
 \vminfifteen
&
  \vminfifteen
\begin{gather*}
f(\u|\x) d^3\u
\end{gather*}
 \vminfifteen
&
  \vminfifteen
  \begin{gather*}
\text{\small{Field of probability $r$-forms }} \rho^r 
\\
\text{\small{ conditioned on }} \C
%\rho^r \ge 0 
%\int\limits_{\Omega(\C)} \rho^r =1
\end{gather*}
 \vminfifteen
\\
 \hline
 \small{Liouville equation}
 &
 \vminten
 \begin{gather*}
 \dfrac{\partial p}{\partial t}   +  \nabla_{\x} \cdot (p \, \u_{rel} )=0
  \end{gather*}
 \vminten
&
 \vminten
 \begin{gather*}
\nabla_{\x} f  +  
 \nabla_{\u} \, \cdot ( f \, \vgrad_{rel}^\top ) = \vect{0}
  \end{gather*}
 \vminten
&
 \vminten
\begin{gather*}
 \mathcal{L}_{\V \comp \C}^{(\C)} \, \rho^r
= \vec{0}
\end{gather*}
 \vminten
 \\
 \hline
 \small{Observable} &
 \vminfifteen
  \begin{gather*}
 \obs(\x,t) 
 \end{gather*}
 \vminfifteen
&
  \vminfifteen
\begin{gather*}
\obss(\u,\x) 
\end{gather*}
 \vminfifteen
&
  \vminfifteen
  \begin{gather*}
 \text{\small{Field of differential 0-forms }} \obsss
\end{gather*}
 \vminfifteen
\\
\hline
 \small{Expected value} &
 \vminfifteen
  \begin{gather*}
 \text{\small{Volumetric mean}}
 \\
\lvol {\obs} \rvol (t) =\iiint\limits_{\Omega(t)} \obs \, p \, d^3 \x  
 \end{gather*}
 \vminfive
 &
  \vminfifteen
\begin{gather*}
 \text{\small{Ensemble mean}}
 \\
\lvel{\obss} \rvel (\x) = \iiint\limits_{\D(\x)} \obss \, f \, d^3 \u  
\end{gather*}
 \vminfive
&
  \vminfifteen
  \begin{gather*}
 \text{\small{Submanifold mean}}
 \\
\lgen {\obsss} \rgen (\C)  = \int\limits_{\Omega(\C)} \obsss \, \rho^r 
\end{gather*}
 \vminfive
\\
\hline
\small{Evolution operators} &
 \vminfifteen
 \begin{gather*}
p(\x|t)=\hat{P}_t \, p(\x|0)
\\
\obs(\x,t)=\hat{K}_t \, \obs(\x,0)
 \end{gather*}
 \vminfifteen
&
 \vminfifteen
 \begin{gather*}
f(\u|\x)=\hat{{P}}_{\x} \, f(\u| \vec{0} )
\\
\obss(\u,{\x})=\hat{{K}}_{\x} \, \obss(\u,\vec{0})
 \end{gather*}
 \vminfifteen
&
\vminfifteen
 \begin{gather*}
\rho^r_{\C}=\hat{P}_{\C} \, \rho^r_{\0}
\\
\obsss_{\C}=\hat{{K}}_{\C} \, \obsss_{\0}
 \end{gather*}
 \vminfifteen
\\
\hline
\end{tabular*} 
}
\caption{Summary of the main formulations presented in this study (for definitions of symbols, see text).}
\label{fig:scope}
\end{figure*}

%\twocolumngrid

Despite more than a century of mathematical generalization to different fields and spaces, including of the major integral theorems of vector calculus (the gradient, divergence and Stokes' theorems), most presentations of the Reynolds transport theorem and Liouville equation are expressed in terms of the time evolution of the density of a conserved quantity carried by a three-dimensional velocity field. However, some hints have emerged of more general formulations. The equivalence of conservation laws and symmetries has been appreciated since the famous works of Lie \cite{Lie_Engel_1888, Lie_1891} and Noether \cite{Noether_1918}, and multiparameter Lie groups and other generalizations have been invoked by some authors \cite[e.g.,][]{Bluman_Kumei_1996, Baumann_2000, Oliveri_2010}. Recently, there has been new interest in the rescaling of fluid flow equations using one-parameter Lie transformations, including of the Reynolds transport theorem, Navier-Stokes and Reynolds-averaged Navier-Stokes equations  \cite{Haltas_Ulusoy_2015, Ercan_Kavvas_2015, Ercan_Kavvas_2017}. 
Furthermore, Sharma and co-workers \cite{Sharma_etal_2016} introduced spatial and spatiotemporal Koopman operators for the analysis of turbulent flow systems, to exploit underlying symmetries (coherent structures) evident in the Navier-Stokes equations. These new formulations and their connections to singular value decomposition (SVD) and dynamic mode decomposition (DMD) are now the subject of intense scrutiny in the literature \cite[e.g.,][]{Hemati_etal_2017, Arbabi_Mezic_2017, Proctor_etal_2018, Kutz_etal_2018, LeClainche_Vega_2018a, LeClainche_Vega_2018b, Giannakis_etal_2019, Froyland_etal_2020, Umbria_etal_2020, Giannakis_Das_2020, Perez_etal_2020, Bai_etal_2020}.  

Many years ago, Flanders \cite{Flanders_1973} viewed the Reynolds transport theorem as not merely a theorem of fluid mechanics, but a three-dimensional generalization of the Leibniz rule for differentiation of an integral. If so, it is far more general and powerful than its current usage might suggest.  Flanders then extended the theorem to the flow of an $r$-dimensional compact submanifold within an $n$-dimensional manifold, expressed using the formalism of exterior calculus \cite{Flanders_1973, Lee_2009, Frankel_2013}. Recently this was generalized to include the analysis of evolving cycles or differential chains \cite{Harrison_2015}, and thereby to fixed and evolving irregular domains on a manifold \cite{Seguin_Fried_2014, Seguin_etal_2014, Falach_Segev_2014}. Several authors have also reported a {\it surface transport theorem}, a two-dimensional analog of the Reynolds transport theorem \cite{Gurtin_etal_1989, Ochoa-Tapia_etal_1993, Slattery_Sagis_Oh_2007, Fosdick_Tang_2009, Lidstrom_2011, Seguin_etal_2014}. However, all these formulations still only provide a one-parameter (temporal) map induced by a stationary or time-evolving velocity vector field.  

%spat1
Separately, a number of researchers have presented a spatial variant of the Reynolds transport theorem, termed the {\it spatial averaging theorem}, based on spatial rather than time derivatives of volumetric integrals. This theorem connects the volume average of a gradient (or divergence) to the gradient (or divergence) of a volume average, with important applications to flows in porous media and multiphase flows. The theorem has been presented in several forms \cite{Anderson_Jackson_1967, Whitaker_1967a, Whitaker_1967b, Slattery_1967, Marle_1967, Whitaker_1969, Slattery_1972, Bachmat_1972, Whitaker_1973, Gray_Lee_1977} %Veverka_1981
and is the subject of various proofs \cite{Whitaker_1967b, Marle_1967,  Whitaker_1969, Slattery_1972, Bachmat_1972, Whitaker_1973, Gray_Lee_1977, Cushman_1982, Cushman_1983, Howes_Whitaker_1985, Whitaker_1985, Chen_1994, Whitaker_1999, Slattery_1999}.  It has also been generalized to give a variety of averaging relations in geometric space and time \cite{Drew_1971, Bachmat_1972, Whitaker_1973, Gray_Lee_1977, Cushman_1982, Cushman_1983, Cushman_1984, Whitaker_1986, Bachmat_Bear_1986, Plumb_Whitaker_1988, Bear_Bachmat_1991, Ochoa-Tapia_etal_1993, Gray_etal_1993, Quintard_Whitaker_1994a, Grau_Cantero_1994, He_Sykes_1996, Whitaker_1999, Slattery_1999, Davit_etal_2010, Gray_Miller_2013, Wood_2013, Pokrajac_deLemos_2015, Takatsu_2017}, but not, it appears, to other spaces. %, nor has it been reformulated in an exterior calculus framework. 
%Bear_Bachmat_1986,  Quintard_Whitaker_1993, Quintard_Whitaker_1994b, Valdes-Parada_etal_2007, 

In this review, we first explore (\S\ref{sect:temp_anal}) the volumetric-temporal formulation of the Reynolds transport theorem, and its lesser-known connection to the Liouville equation and the Perron-Frobenius and Koopman operators. These insights are then applied to develop a unified framework for the derivation of these theorems and operators in different spaces, firstly in \S\ref{sect:spat_anal} to give their velocimetric-spatial analogs, and then in \S\ref{sect:general} to provide more general parametric formulations. The most general formulations invoke multivariate extensions of several operators of exterior calculus, including the Lie derivative, and the concept of a probability differential form. The formulations presented herein are summarized in Figure \ref{fig:scope}, and are supported by mathematical proofs given in the main text and Appendices. The breadth of the findings are then demonstrated in \S\ref{sect:ex_sys} by application to a variety of flow and dynamical systems, including spatial (time-independent) and spatiotemporal fluid flows, flow systems with pairwise or $n$-wise spatial correlations, phase space systems, Lagrangian flows, spectral flows, and systems with coupled chemical reaction and flow processes. Our concluding comments are given in \S\ref{sect:concl}.

\section{\label{sect:temp_anal}Temporal Analyses} 
\subsection{\label{sect:temp_Re_tr} Volumetric-Temporal Reynolds Transport Theorem} 
We first revisit an extended form of the standard or {\it volumetric-temporal} Reynolds transport theorem \cite{Reynolds_1903}, where the two adjectives refer respectively to the domain of integration and the parameter, which becomes the differentiation variable. For later generality, we use a slightly different notation to that commonly used in fluid mechanics.  
\begin{theorem}
Consider a continuum represented by the Eulerian description, in which each local property of a fluid can be specified as a function of Cartesian position coordinates $\x=[x,y,z]^\top$ and time $t$ as the fluid moves past. 
Let $\alpha(\x, t)$ be the concentration or density of a conserved quantity (scalar, vector or tensor) within the fluid, expressed per unit volume. 
{Let $\alpha(\x, t)$ be continuous and continuously differentiable in space and in time throughout a moving body of fluid (the ``fluid volume'', ``material volume'' or ``domain'') $\Omega(t)$, for all positions up to the boundary and all times considered. The total derivative of the conserved quantity within the fluid volume} as it moves through an enclosed, moving, smoothly deformable region of space (the ``control volume'') satisfies \cite{Truesdell_Toupin_1960, Slattery_1972, Bear_Bachmat_1991, White_1986, Dvorkin_Goldschmidt_2005, Munson_etal_2010, Lidstrom_2011}:
\begin{gather}
%\boxed{
\begin{split}
\frac{d}{dt} \iiint\limits_{\Omega(t)} \alpha \, d^3 \x 
&=    \iiint\limits_{\Omega(t)}  \frac{\partial \alpha}{\partial t}  \, d^3 \x + \oiint\limits_{\partial \Omega(t)} \alpha \, \u_{rel} \cdot d^2 \x  
%\label{eq:Re_tr1}
\wrap  
=   \iiint\limits_{\Omega(t)}  \biggl[ \frac{\partial \alpha}{\partial t}   +  \nabla_{\x} \cdot (\alpha \, \u_{rel} ) \biggr ] d^3 \x,
%\label{eq:Re_tr2}
\end{split}
%}
\label{eq:Re_tr}
\end{gather}
where 
%$\x = [x,y,z]^\top$ is position, $t$ is time, 
%$\Omega(t)$ is the domain, 
$\partial \Omega(t)$ is the domain boundary, $\u_{rel}(\x,t)$ is the velocity of the fluid relative to the control volume, $d/dt$ is the total derivative (here equivalent to the material or substantial derivative, often written $D/Dt$), $\partial/ \partial t$ is the derivative at fixed position, $\nabla_{\x} = \partial/\partial [x, y, z]^\top$ is the nabla operator with respect to $\x$,  $d^3 \x = dV= dxdydz$ is an infinitesimal volume element in the domain, and $d^2 \x = d\vect{A} = \vect{n} dA$ is an infinitesimal directed area element at the boundary, where $\vect{n}$ is the outward unit normal. 
\end{theorem}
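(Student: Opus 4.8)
The plan is to reduce the moving-domain integral to an integral over a fixed reference domain by a Lagrangian change of variables, differentiate under the integral sign, and then transform back. First I would introduce the flow map $\vec{\chi}(\vec{X},t)$, the position at time $t$ of the fluid particle that occupies $\vec{X}$ in a reference configuration $\Omega_0=\Omega(t_0)$, so that $\Omega(t)=\vec{\chi}(\Omega_0,t)$ and $\partial\vec{\chi}/\partial t=\u_{rel}$ evaluated along trajectories. Since the control volume is smoothly deformable, $\vec{\chi}(\cdot,t)$ is a $C^1$ diffeomorphism with strictly positive Jacobian $J(\vec{X},t)=\det(\partial\vec{\chi}/\partial\vec{X})$. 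Changing variables gives
\[
\iiint\limits_{\Omega(t)} a(\x,t)\,d^3\x=\iiint\limits_{\Omega_0} a(\vec{\chi}(\vec{X},t),t)\,J(\vec{X},t)\,d^3\vec{X},
\]
in which the domain of integration no longer depends on $t$.

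Next I would differentiate in $t$. Because $a$ is continuous and continuously differentiable in space and time up to the boundary, and $\vec{\chi}$ is smooth, the integrand is $C^1$ in $t$ on the compact set $\overline{\Omega_0}$, so $d/dt$ passes under the integral sign:
\[
\frac{d}{dt}\iiint\limits_{\Omega(t)} a\,d^3\x=\iiint\limits_{\Omega_0}\left[\frac{da}{dt}\,J+a\,\frac{\partial J}{\partial t}\right]d^3\vec{X},
\]
where $da/dt=\partial a/\partial t+(\u_{rel}\cdot\nabla_\x)a$ is the material derivative along the trajectory through $(\x,t)$. The crucial identity is Euler's expansion formula $\partial J/\partial t=J\,(\nabla_\x\cdot\u_{rel})$, which follows from Jacobi's formula for the derivative of a determinant applied to the deformation gradient together with the chain-rule relation $\partial_t(\partial\vec{\chi}/\partial\vec{X})=(\nabla_\x\u_{rel})(\partial\vec{\chi}/\partial\vec{X})$. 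Substituting and recombining the two terms produces the integrand $[\partial a/\partial t+\nabla_\x\cdot(a\,\u_{rel})]\,J$, and the inverse change of variables back to $\Omega(t)$ yields the final volume-integral expression in the statement. Applying the divergence theorem on $\Omega(t)$ to the term $\nabla_\x\cdot(a\,\u_{rel})$ — legitimate since $a\,\u_{rel}$ is $C^1$ up to the smooth boundary — converts it to the surface flux $\oiint_{\partial\Omega(t)} a\,\u_{rel}\cdot d^2\x$ and recovers the middle expression.

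I expect the two technical heart-points to be the rigorous justification of differentiation under the integral sign — precisely where the stated regularity of $a$ and smoothness of $\Omega(t)$ are used — and the proof of Euler's expansion formula. An alternative, more geometric proof avoids the reference configuration altogether: one writes the difference quotient $\Delta t^{-1}\bigl[\int_{\Omega(t+\Delta t)} a(\cdot,t+\Delta t)-\int_{\Omega(t)} a(\cdot,t)\bigr]$, adds and subtracts $\int_{\Omega(t)} a(\cdot,t+\Delta t)$, and notes that $\Omega(t+\Delta t)$ differs from $\Omega(t)$ by a thin shell swept by the boundary with signed volume element $(\u_{rel}\cdot\vec{n})\,dA\,\Delta t+o(\Delta t)$; as $\Delta t\to 0$ the first difference gives $\iiint_{\Omega(t)}(\partial a/\partial t)\,d^3\x$ and the second gives $\oiint_{\partial\Omega(t)} a\,\u_{rel}\cdot d^2\x$, i.e. the middle form directly, with the divergence theorem again supplying the last. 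I would present the change-of-variables argument as the primary proof and note this variant as a remark.
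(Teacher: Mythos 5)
Your proposal is correct and matches the paper's own treatment: the paper gives exactly these two proofs in its appendix, a continuum-mechanics argument via the difference quotient and the thin boundary shell swept out with signed volume $\u_{rel}\Delta t \cdot \vec{n}\, dA$, and a Lagrangian coordinate transformation using the Jacobian determinant and Euler's expansion formula $dJ/dt = J\,\nabla_{\x}\cdot\u_{rel}$, followed in each case by the divergence theorem. You have merely reversed which of the two is presented as primary.
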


\begin{proof}
Proofs of \eqref{eq:Re_tr}, for either the extended form given here or for the simpler case of a stationary control volume (see discussion below), have been given using the tools of continuum mechanics \cite{Reynolds_1903, Prager_1961, Sokolnikoff_Redheffer_1966, White_1986, Bear_Bachmat_1991, Tai_1992, Leal_2007, Munson_etal_2010}, Lagrangian coordinate transformation \cite{Aris_1962, Slattery_1972, Flanders_1973, Spurk_1997, Dvorkin_Goldschmidt_2005} and exterior calculus \cite{Flanders_1973, Frankel_2013, Lee_2009, Lidstrom_2011}.  Variants of the first two proofs of \eqref{eq:Re_tr} are given in  \ref{sect:Apx_Re_temp}. 
An extended exterior calculus formulation is also given in \S\ref{sect:general}, and shown to reduce to a generalized vector calculus formulation. 
\end{proof}

We note that \eqref{eq:Re_tr} is a special case of the Helmholtz transport theorem, involving surfaces that are not closed \cite{Tai_1992}. Furthermore, extensions of \eqref{eq:Re_tr} have been derived for fluids with fixed or moving discontinuities in $\alpha(\x,t)$ and/or in $\u(\x,t)$, requiring additional integral terms \citep[e.g.][]{Truesdell_Toupin_1960, Dvorkin_Goldschmidt_2005, Myers_2015}. As mentioned earlier, extensions of the Reynolds transport theorem have also been presented for evolving irregular domains and rough surfaces \cite{Seguin_Fried_2014, Seguin_etal_2014, Falach_Segev_2014}.

%\begin{remark}

Examples of the conserved quantity $\alpha(\x,t)$ commonly used in the Reynolds transport theorem \eqref{eq:Re_tr} include the fluid mass density $\rho$; the mass density (concentration) $\rho_c$ of a chemical species $c$; the linear momentum $\rho \u$, the angular momentum $\rho(\r \times \u)$, where $\r$ is the radius of a local lever arm, the energy density $\rho e$, where $e$ is the specific energy; the charge density $\rho z$, where $z$ is the specific charge; and the entropy density $\rho s$, where $s$ is the specific entropy \cite{Aris_1962, White_1986, Spurk_1997, Leal_2007, Munson_etal_2010}.  In standard applications, the left-hand term $\frac{d}{dt} \iiint\nolimits_{\Omega(t)} \alpha \, d^3 \x$ of the Reynolds transport theorem is then used to capture any non-zero sources or sinks of the conserved quantity represented by $\alpha$. For the examples given these include, respectively, the rate of production of fluid mass $Dm_f/dt$ in the fluid volume (usually taken as zero); the rate of production $Dm_c/dt$ of the mass of species $c$ due to chemical reaction in the fluid volume; the total force on the fluid volume $\sum \vec{F}_{FV}$; the total torque on the fluid volume $\sum \vec{T}_{FV}$; the sum of heat and work flows $(\dot{Q}_{in} + \dot{W}_{in})$ into the fluid volume; the total electric current $I_{FV}$ into the fluid volume; and the sum of the entropy production and non-fluid entropy flow rate $(\dot{\sigma} + \dot{S}^{nf}_{FV})$ into the fluid volume \cite{Aris_1962, White_1986, Spurk_1997, Leal_2007, Munson_etal_2010}. 

In \eqref{eq:Re_tr}, we must carefully consider the meaning of the relative velocity $\u_{rel}$. In the surface integral form, $\u_{rel}$ expresses the velocity of the fluid relative to the control volume at the boundary, in some references described as the velocity of the fluid surface $\partial \Omega(t)$ \cite[e.g.,][]{Truesdell_Toupin_1960, Dvorkin_Goldschmidt_2005, Lidstrom_2011, Takatsu_2017}.  For Cartesian velocity coordinates, this can be identified as $\u_{rel} = \u - \u_{CV}$, where $\u_{CV}$ is the velocity of the control volume and $\u$ is the intrinsic velocity of the fluid \cite{Reynolds_1903, White_1986, Munson_etal_2010} (see analysis in  \ref{sect:Apx_Re_temp}). 
In consequence $\u_{rel} \cdot \vec{n}$ gives the volumetric flux normal to and out of the control surface.
In the volumetric integral form, $\u_{rel}$ expresses the velocity of any point in the fluid relative to the moving control volume. The latter thus invokes -- by the Gauss-Ostrogradsky divergence theorem -- the existence of a continuous and continuously differentiable vector field $\u_{rel}$, which by continuity must extend throughout the entire space in which the fluid is present. %{Indeed, the density $\alpha(\x,t)$ of the conserved quantity of interest must also satisfy this extended regularity condition.}
For consistency, the total or substantial derivative should be defined with respect to this moving frame of reference \cite{Slattery_1972, Truesdell_Toupin_1960, Dvorkin_Goldschmidt_2005}:
\begin{align}
\frac{d \alpha}{dt} =\frac{D \alpha}{Dt} :=\frac{\partial \alpha}{\partial t} + \nabla_{\x} \alpha \cdot \u_{rel}
\label{eq:total_deriv}
\end{align}
(see discussion in  \ref{sect:Apx_Re_temp}). Combining \eqref{eq:total_deriv} and the final form of \eqref{eq:Re_tr} gives a total derivative form of the Reynolds transport theorem \cite[e.g.,][]{Aris_1962, Slattery_1972, Dvorkin_Goldschmidt_2005}:
\begin{align}
\frac{d}{dt} \iiint\limits_{\Omega(t)} \alpha \, d^3 \x 
&=   \iiint\limits_{\Omega(t)}  \biggl[ \frac{d \alpha}{d t}   +  \alpha \nabla_{\x} \cdot  \u_{rel}   \biggr ] d^3 \x.  
\label{eq:Re_tr5}
\end{align}
For a stationary control volume $\u_{CV}=\0$, \eqref{eq:Re_tr} and \eqref{eq:Re_tr5} reduce to intrinsic forms of the Reynolds transport theorem, based on the intrinsic velocity field $\u$. For both a stationary control volume $\u_{CV}=\0$ and a stationary fluid $\u=\0$, the surface integral term (or equivalently, the divergence term) in \eqref{eq:Re_tr} vanishes.
%Many authors limit their analysis to this form of the Reynolds transport theorem, and do not consider moving control volumes \cite[e.g.,][]{Flanders_1973, Prager_1961, Tai_1992,  Sokolnikoff_Redheffer_1966, Leal_2007, Munson_etal_2010, Aris_1962}.

%The meaning of these velocities in the Eulerian and Lagrangian descriptions is discussed further in  \ref{sect:Apx_Re_temp}. 

%\end{remark}

%Although usually seen as a balance equation for an extensive quantity in a fluid, the Reynolds transport theorem \eqref{eq:Re_tr} and \eqref{eq:Re_tr5} provides an important mathematical result, of much broader validity \cite{Flanders_1973, Frankel_2013}. In this study, it is extended to the analysis of the intrinsic or forced translation or transformation of any vector or tensor field, in any parametric coordinates.

As mentioned, some authors have reported a {\it surface transport theorem}, a two-dimensional analog of the Reynolds transport theorem \eqref{eq:Re_tr} for the total derivative of the surface integral of a surface density \cite{Seguin_etal_2014, Gurtin_etal_1989, Ochoa-Tapia_etal_1993, Slattery_Sagis_Oh_2007, Fosdick_Tang_2009, Lidstrom_2011}. This contains surface and line integral terms, which can be  combined using a surface divergence theorem \cite{Ochoa-Tapia_etal_1993, Slattery_Sagis_Oh_2007, Lidstrom_2011, Gray_etal_1993}. This theorem has been proven by differential calculus methods \cite{Ochoa-Tapia_etal_1993, Gurtin_etal_1989, Fosdick_Tang_2009, Gray_etal_1993} and Lagrangian coordinate transformation \cite{Slattery_Sagis_Oh_2007}, analogous or related to the proofs given in  \ref{sect:Apx_Re_temp}, and also by exterior calculus methods \cite{Seguin_etal_2014, Lidstrom_2011}. 

\subsection{\label{sect:temp_prob} Probabilistic Analysis and the Temporal Liouville Equation} 

The connections between the Reynolds transport theorem and Liouville equation are not widely known, but are reported by some authors \cite[e.g.,][]{Ehrendorfer_2003}. Consider a fluid flow system with the observables described using a multivariate random variable for position $\vect{\Upsilon}_{\x} = [\Upsilon_x, \Upsilon_y, \Upsilon_z]^\top$ with values $\x$, and a random variable for time $\Upsilon_t$ with values $t$\footnote{We note that  random variables of observable quantities are commonly denoted by corresponding capital letters \cite[e.g.,][]{Cover_T_2006}. Due to clashes with standard symbols, we use a different notation.}.
We then define the joint-conditional probability density function (pdf) $p(\x|t)$ over the domain $\Omega(t)$, to represent the probability that at the time infinitesimally close to the specified time $t$, a fluid element will be infinitesimally close to the position $\x$:
\begin{equation}
\begin{split}
p (\x|t) \, d^3 \x = p (x,y,z|t) \, dxdydz
\approx
\text{Prob}
&\Biggl(
\begin{matrix} 
x \le \Upsilon_x \le x+dx \\
y \le \Upsilon_y \le y+dy \\
z \le \Upsilon_z \le z+dz 
\end{matrix} 
\Biggl | \;
t \le \Upsilon_t \le t+dt
\Biggr).
%\wrap
\end{split}
\label{eq:pdf_xvect_sub_t}
\end{equation}
{(Formally, the pdf $\protect{p(\x|t)}$ is defined over continuous intervals of space and time, from which \eqref{eq:pdf_xvect_sub_t} applies in the infinitesimal limits \cite[e.g.,][]{Feller_1966}.)}
The pdf will satisfy normalization for any time $t$:
\begin{align}
1 = \iiint\limits_{\Omega(t)} \, p (\x|t) \, d^3 \x.
\label{eq:local_norm_x}
\end{align}
We also define the time-dependent volumetric average of an observable $\obs(\x, t)$:
\begin{align}
\lvol{\obs} \rvol (t) 
= \iiint\limits_{\Omega(t)} \obs(\x, t) \, p (\x|t) \, d^3 \x. 
\label{eq:vol_average}
\end{align}
%%%%%
Now if $p(\x | t)$ satisfies the same regularity conditions as the density $\alpha(\x,t)$ in \S\ref{sect:temp_anal}, we can substitute $\alpha(\x,t) = p(\x|t)$ into the volumetric-temporal Reynolds transport theorem \eqref{eq:Re_tr}, to directly give:
\begin{align}
\begin{split}
\frac{d}{dt}  \iiint\limits_{\Omega(t)} p \, d^3 \x 
&=   \iiint\limits_{\Omega(t)}  \biggl[ 
\frac{\partial p}{\partial t}   +  \nabla_{\x} \cdot (p \, \u_{rel} ) \biggr ] d^3 \x,
\end{split}
\label{eq:Re_tr_x_prob}
\end{align}
However, from \eqref{eq:local_norm_x}, the left-hand side of \eqref{eq:Re_tr_x_prob} vanishes for all $t$. 
This leads to the following theorem:
\begin{theorem}
%We thus see from equation \eqref{eq:Re_tr7_x} that 
Let $p(\x|t)$ be the probability density of the position $\x$ at the specified time $t$, defined over a fluid volume $\Omega(t)$ containing a relative velocity vector field $\u_{rel}(\x,t)$. Let $p(\x|t)$ be continuous and continuously differentiable in space and in time throughout $\Omega(t)$, for all positions up to the boundary and all times considered. For all $\x \in \Omega(t)$ and all times $t$:
\begin{align}
%\boxed{
\frac{\partial p}{\partial t}   +  \nabla_{\x} \cdot (p \, \u_{rel} )=0.
%}
\label{eq:Liouville}
\end{align}
\end{theorem}
\begin{proof}
Eq.\ \eqref{eq:Liouville} follows directly from \eqref{eq:local_norm_x}, \eqref{eq:Re_tr_x_prob} and the fundamental lemma of the calculus of variations \cite{Weinstock_1952, Gelfand_Fomin_1963}, for all choices of compactly supported continuous and continuously differentiable pdfs $p(\x|t)$.  
\end{proof}

%The Liouville equation \eqref{eq:Liouville} therefore expresses the local conservation of the pdf $p(\x|t)$, subject to the relative velocity field $\u_{rel}$. 
%
Eq.\ \eqref{eq:Liouville} is known as the standard or temporal {\it Liouville equation} for a fluid flow system, for conservation of the conditional pdf $p(\x|t)$ under the relative velocity $\u_{rel}$. 
We emphasize that the above proof does not apply to discontinuous or non-differentiable $p(\x|t)$, and important exceptions may occur, e.g., in a shock wave or mixing layer. 
%
%The connections between the proof and the properties of a Reynolds average are examined further in  \ref{sect:Apx_Re_average}.

\subsection{\label{sect:temp_ops} Temporal Perron-Frobenius and Koopman Operators}

Taking the analysis farther, the solution to \eqref{eq:Liouville} can be written as the probabilistic evolution equation $p(\x|t)=\hat{P}_t \, p(\x|0)$, where $\hat{P}_t$ is the Perron-Frobenius operator, with origin $t=0$ measured in the relative coordinate system of $t$ \cite[e.g.,][]{Froyland_Padberg_2009, Kaiser_etal_2014}. Examining a probability product, it is readily verified that this is linear, giving $\hat{P}_t = \exp(t \, \hat{L}_t)$, in which $\hat{L}_t$ is the (multiplicative) temporal Liouville operator defined by $\hat{L}_t p = -\nabla_{\x} \cdot (p \, \u_{rel} )$. 
The Koopman operator $\hat{K}_t$ adjoint to $\hat{P}_t$ can then be defined from the volume average \eqref{eq:vol_average}, based on the duality:
\begin{align}
\begin{split}
\lvol{\obs} \rvol (t) 
%= \iiint\limits_{\Omega(t)} \obs(\x, t) \, p (\x|t) \, d^3 \x  
&= \iiint\limits_{\Omega(t)} \obs(\x, t) \, \hat{P}_t \, p(\x|0) \, d^3 \x  
\wrap
= \iiint\limits_{\Omega(t)} \hat{K}_t \, \obs(\x,0) \, p (\x|t) \, d^3 \x.
\end{split}
\label{eq:op_duality}
\end{align}
The Koopman operator provides an evolution equation for the observable $\obs(\x,t)=\hat{K}_t \, \obs(\x,0)$, and can be determined by spectral decomposition, with close connections to SVD and DMD \cite[e.g.,][]{Mezic_2005, Rowley_etal_2009, Chen_etal_2012, Budisic_etal_2012, Bagheri_2013, Mezic_2013, Bagheri_2014, Hemati_etal_2017, Arbabi_Mezic_2017, Proctor_etal_2018, Kutz_etal_2018, LeClainche_Vega_2018a, LeClainche_Vega_2018b, Giannakis_etal_2019, Froyland_etal_2020, Umbria_etal_2020, Giannakis_Das_2020, Perez_etal_2020, Bai_etal_2020}. 

\subsection{\label{sect:temp_other} Further Simplifications}

{\bf Intrinsic Flows}: For a stationary frame of reference $\u_{CV}$ $=\0$, we recover the intrinsic Liouville equation \cite{Pottier_2010}:
\begin{align}
%\boxed{
\frac{\partial p}{\partial t}   +  \nabla_{\x} \cdot (p \, \u )=0.
%}
\label{eq:Liouville_intr}
\end{align}
This expresses the local conservation of $p$ under its intrinsic motion \cite{Pottier_2010}. 
This can be compared to the Fokker-Planck equation \cite{Risken_1984}:
\begin{align}
%\boxed{
\frac{\partial p}{\partial t}   +  \nabla_{\x} \cdot (p \, \u) -  \nabla^2_{\x} : (\tens{D} \, p )=0,
%}
\label{eq:Fokker-Planck}
\end{align}
in which $\nabla_{\x}^2=\nabla_{\x} (\nabla_{\x})^\top$ is the second derivative or Hessian operator, $\tens{D}$ is a diffusion tensor and ``$:$'' is the tensor scalar product. Evidently, the Fokker-Planck equation is inconsistent with Reynold's transport theorem \eqref{eq:Re_tr}, and contains a pdf which is not conserved locally.  The distinction lies in the fact that in \eqref{eq:Liouville_intr}, the pdf $p(\x|t)$ is considered to extend over the entire domain $\Omega(t)$, whereas in \eqref{eq:Fokker-Planck} it also undergoes diffusion into previously unoccupied regions.

From dynamical systems theory, we can consider \eqref{eq:Liouville_intr} to be induced by 
%\begin{align}
%\frac{d\x}{dt} = \u = \tens{\mathcal{b}} (\x)
${d\x}/{dt} = \u = \tens{\mathcal{F}} (\x)$,
%\label{eq:dynsys}
%\end{align}
where $\tens{\mathcal{F}}$ is the (vector) propagator \cite{Gaspard1995pre}.
For incompressible or solenoidal flow $\nabla_{\x} \cdot \u = 0$, \eqref{eq:Liouville_intr} simplifies further to give the more common total derivative form:
\begin{align}
%\boxed{
\frac{\partial p}{\partial t}   +  \nabla_{\x} p \cdot \u     = \frac{d p}{d t}  =0.
%}
%\label{eq:Liouville_incompr}
\end{align}

%\subsubsection{Two-Dimensional Systems}

\noindent {\bf Two-Dimensional Flows}: Alternatively, consider the special case of two-dimensional flow with position $\x=[x,y]^\top$ and relative velocity $\u_{rel}= [u_{rel},v_{rel}]^\top $. 
Defining the stream function $\Psi$ by the relations
%\begin{align}
%u_{rel} = \frac{\partial \Psi}{\partial y}, \quad v_{rel} = - \frac{\partial \Psi}{\partial x}
$u_{rel} = {\partial \Psi}/{\partial y}$ and $v_{rel} = - {\partial \Psi}/{\partial x}$
%\end{align}
\cite[c.f.,][]{Prager_1961, Munson_etal_2010}, %Anderson_2001, 
substitution in the general Liouville equation \eqref{eq:Liouville}, using the relative solenoidal condition $\nabla_{\x} \cdot \u_{rel}=0$, gives the Hamiltonian-like form:
\begin{align}
%\boxed{
\frac{\partial p}{\partial t}   + \biggl( \frac{\partial \Psi}{\partial y} \frac{\partial p}{\partial x} - \frac{\partial \Psi}{\partial x}   \frac{\partial p}{\partial y} \biggr) = 0.
%}
\label{eq:Liouville2D}
\end{align}
This definition allows for a moving and smoothly deforming control volume.
The stream function $\Psi$ is normal to the relative velocity potential $\Phi$, defined for irrotational relative flow $\nabla_{\x} \times \u_{rel}=0$ by $\u_{rel}=\nabla_{\x} \Phi$, hence 
%\begin{align}
%u_{rel} = \frac{\partial \Phi}{\partial x}, \quad v_{rel} =  \frac{\partial \Phi}{\partial y}
$u_{rel} = {\partial \Phi}/{\partial x}$ and $v_{rel} =  {\partial \Phi}/{\partial y}$
%\end{align}
\cite{Prager_1961, Munson_etal_2010}. %Anderson_2001
For steady flows, these give a {\it flow net} of curvilinear orthogonal coordinates $(\Psi,\Phi)$, tangential and normal to the relative velocity vector.

%The extension of this formulation to phase space systems, based on the Hamiltonian rather than the stream function, is examined in \S\ref{sect:general}.

\section{\label{sect:spat_anal}Three-Dimensional Spatial Analyses} 

\subsection{\label{sect:spat_av_thm} Volumetric-Spatial Reynolds Transport Theorem (Spatial Averaging Theorem)}

%spat2
As mentioned in the introduction, several researchers have developed a spatial analog of the Reynolds transport theorem \eqref{eq:Re_tr}, involving spatial rather than time derivatives of volumetric averages. A one-dimensional form of this theorem can be written as \cite{Slattery_1967, Whitaker_1969, Slattery_1972, Takatsu_2017}:
\begin{gather}
\frac{d}{ds} \iiint\limits_{\Omega_f(t)} \alpha \, d^3 \x 
=    \iiint\limits_{\Omega_f(t)}  \frac{\partial \alpha}{\partial s}  \, d^3 \x + \oiint\limits_{\partial \Omega_f(t)} \alpha \, \frac{d\x}{ds} \cdot d^2 \x  
\label{eq:spatial_av_1D}
\end{gather}
where $\alpha(\x(s),t)$  again is the generalized volumetric density, $s$ is an intrinsic spatial coordinate, $\Omega_f(t)$ is the fluid volume and $\partial \Omega_f(t)$ is the fluid surface in a multiphase system (including interior surfaces). All other quantities are as defined in \S\ref{sect:temp_anal}. For the $\imath$th phase, this gives the following gradient and divergence spatial averaging theorems \cite{Whitaker_1967a, Whitaker_1967b, Marle_1967, Whitaker_1969, Slattery_1972, Bachmat_1972, Whitaker_1973, Gray_Lee_1977, Cushman_1982, Howes_Whitaker_1985, Chen_1994, Whitaker_1999, Slattery_1999}:
\begin{gather}
\begin{split}
\nabla_{\x} \lpos \alpha_{\imath} \rpos_{\imath}
&=    \lpos  \nabla_{\x} \alpha_{\imath} \rpos_{\imath} - \frac{1}{V} \oiint\limits_{\partial \Omega_{\imath}(t)} \alpha_{\imath} \, d^2 \x  
\\
\nabla_{\x} \cdot \lpos \valpha_{\imath} \rpos_{\imath}
&=    \lpos  \nabla_{\x} \cdot \valpha_{\imath} \rpos_{\imath} - \frac{1}{V} \oiint\limits_{\partial \Omega_{\imath}(t)} \valpha_{\imath} \cdot d^2 \x  
\end{split}
\label{eq:spatial_av_3D}
\end{gather}
where $\alpha_{\imath}$ is a scalar density and $\valpha_{\imath}$ is a vector or tensor density, in each case within the $\imath$th phase, $V$ is a fixed volume, $\lpos \alpha_{\imath} \rpos_{\imath} = V^{-1} \iiint\nolimits_{\Omega_{\imath}(t)} \alpha_{\imath} dV$ %(or a variant thereof \cite{Cushman_1982})
is the volumetric phase average of $\alpha_{\imath}$, $\Omega_{\imath} (t)$ is the component of the volume occupied by the $\imath$th phase, and $\partial \Omega_{\imath} (t)$ is the surface or interface of the $\imath$th phase.   Eqs.\ \eqref{eq:spatial_av_3D} adopt the convention that the unit normal $\n$ points out of the $\imath$th phase, giving the negative sign. Originally formulated by analogy with the Reynolds transport theorem, %\cite{Whitaker_1967a, Slattery_1967, Whitaker_1969, Slattery_1972}, 
direct proofs of \eqref{eq:spatial_av_3D} have been provided by differential calculus methods \cite{Whitaker_1967b, Whitaker_1969, Slattery_1972, Bachmat_1972, Cushman_1982, Cushman_1983, Whitaker_1985, Whitaker_1999, Slattery_1999, Takatsu_2017}, a convolution method using a phase indicator or generalized function \cite{Marle_1967, Gray_Lee_1977, Gray_etal_1993, Chen_1994}, and by Lagrangian coordinate transformation \cite{Howes_Whitaker_1985, Slattery_1999} (analogous to that given in  \ref{sect:Apx_Re_temp}). For compressible fluids, an additional correction term is needed \cite{Cushman_1983, Takatsu_2017}. 
These insights have been used to derive volume-averaged versions of conservation equations for flow in porous media and multifluid systems \cite[e.g.,][]{Anderson_Jackson_1967, Whitaker_1967a, Whitaker_1967b, Slattery_1967, Marle_1967, Whitaker_1969, Drew_1971, Slattery_1972, Whitaker_1973, Whitaker_1985, Chen_1994, Whitaker_1999, Slattery_1999}, and have subsequently revealed a plethora of spatial and temporal averaging theorems applicable to such systems, enabling averaging over volumes, surfaces and curves, accounting for the effects of different scales \cite[e.g.,][]{Slattery_1967, Marle_1967, Drew_1971, Slattery_1972, Bachmat_1972, Whitaker_1973, Gray_Lee_1977, Cushman_1982, Cushman_1983, Cushman_1984, Whitaker_1986, Bachmat_Bear_1986, Plumb_Whitaker_1988, Bear_Bachmat_1991, Ochoa-Tapia_etal_1993, Gray_etal_1993, Chen_1994, Quintard_Whitaker_1994a, Grau_Cantero_1994, He_Sykes_1996, Whitaker_1999, Slattery_1999, Davit_etal_2010, Gray_Miller_2013, Wood_2013, Pokrajac_deLemos_2015, Takatsu_2017}. %Veverka_1981, Bear_Bachmat_1986,  Quintard_Whitaker_1993, Quintard_Whitaker_1994b, Valdes-Parada_etal_2007, 

While the spatial averaging theorem is not examined further here, its formulation confirms the existence of alternative mathematical formulations of the Reynolds transport theorem \eqref{eq:Re_tr}. These are explored in greater generality in the following sections.

\subsection{\label{sect:spat_Re_tr} Velocimetric-Spatial Reynolds Transport Theorem}

We now examine a different class of time-independent (steady) flow systems, to give the following theorem.
\begin{theorem}
Consider a time-independent (steady) continuum represented by Eulerian volumetric and velocimetric (phase space) coordinates, in which each local property of a fluid is specified as a function of the Cartesian velocity $\u = [u,v,w]^\top$ and position $\x = [x,y,z]^\top$. Let $\varphi(\u,\x)$ be the density of a conserved quantity within a fluid, expressed per unit of velocity and volume space. 
Let $\varphi(\u, \x)$ be continuous and continuously differentiable with respect to velocity and position throughout a defined position-dependent region of velocity space (the ``domain'') $\D(\x)$, for all velocities up to its boundary and all positions considered.
The total differential of the integral of $\varphi(\u, \x)$ over the domain, {relative to an enclosed, position-dependent, smoothly deforming region of velocity space (the ``velocimetric control volume'')} satisfies:
\begin{gather}
%\boxed{
\begin{split}
d \iiint\limits_{\D(\x)} \varphi \, d^3 \u 
&
=   
\biggl[
\iiint\limits_{\D(\x )}
\nabla_{\x} \varphi \; d^3 \u
+
\oiint\limits_{\partial \D(\x)} \varphi \, \vgrad_{rel}^\top \cdot d^2 \u
\biggr] \cdot d \x
\\
&=    
\biggl[
\iiint\limits_{\D(\x )}
\bigl(
\nabla_{\x} \varphi
+
\nabla_{\u} \cdot ( \varphi \, \vgrad_{rel}^\top )
\bigr) d^3 \u
\biggr] \cdot
d \x,
\end{split}
%}
\label{eq:Re_tr_u_3D}
\end{gather}
where %$\varphi(\u, \x)$ is the density of a conserved property of a fluid (scalar, vector or tensor) in velocity space, expressed per unit of velocity volume,
%$\D(\x)$ is the domain in velocity space, 
$\partial \D(\x)$ is the domain boundary (velocity surface), 
$d$ is the differential operator,
$d^3 \u = dudvdw$ is an infinitesimal velocimetric element within the domain, 
$d^2 \u = \n_B dB$ is an infinitesimal directed area element on the velocimetric boundary, 
in which $dB$ is the boundary area element and $\n_B$ is its outward unit normal,
$d \x = [dx,dy,dz]^\top$ is the differential of (vector) position, 
and
$\vgrad_{rel}=\vgrad_{rel}(\u,\x):= \nabla_{\x} \u_{rel}$ is the relative velocity gradient tensor field\footnote{In fluid mechanics, the velocity gradient tensor field is commonly denoted $\protect{{\partial \u}/{\partial \x}}$ or $\protect{\nabla_{\x} \u}$ with components $\protect{{\partial u_j}/{\partial x_i}}$. However, such notation creates  confusion over its functional dependencies. To avoid this, we use a distinct symbol for the velocity gradient; for a longer discussion see  \ref{sect:Apx_Re_spatial}.}.
We here use the $\partial (\to)/\partial (\downarrow)$ convention for vector derivatives, 
hence $\nabla_{\x} = {\partial}/{\partial \x}$ is the spatial gradient operator, 
and 
$\nabla_{\u} = {\partial}/{\partial \u}$ is the gradient operator in velocity space, assuming Cartesian coordinates $\x$ and $\u$.
In \eqref{eq:Re_tr_u_3D}, the quantity $\vgrad_{rel}^\top \cdot d^2 \u = \vgrad_{rel} \, d^2 \u=\vgrad_{rel} \, \n_B dB$ contains the tensor-vector product, while for consistency with the derivative convention, the divergence operation is defined by $\nabla_{\u} \cdot ( \varphi \, \vgrad_{rel}^\top )= [\nabla_{\u}^\top (\varphi \, \vgrad_{rel}^\top )]^\top$.
\end{theorem}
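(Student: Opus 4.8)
The plan is to prove \eqref{eq:Re_tr_u_3D} by transcribing a standard proof of the temporal Reynolds transport theorem \eqref{eq:Re_tr} into the spatial/velocimetric setting, under the dictionary: time $t$ $\leftrightarrow$ vector position $\x$; advecting velocity field $\u_{rel}(\x,t)$ $\leftrightarrow$ relative velocity-gradient tensor field $\vgrad_{rel}(\u,\x)$; material domain $\Omega(t)$ $\leftrightarrow$ velocimetric domain $\D(\x)$; spatial divergence $\leftrightarrow$ velocimetric divergence $\nabla_\u\cdot$. Set $F(\x):=\iiint_{\D(\x)} b(\u,\x)\,d^3\u$; the goal is to evaluate its differential $dF=(\nabla_\x F)\cdot d\x$ directly, for an arbitrary infinitesimal $d\x$.

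The first step is to split $F(\x+d\x)-F(\x)$ into two first-order pieces. (i) The change of the integrand at fixed $\u$ over the undeformed domain contributes $\bigl[\iiint_{\D(\x)}\nabla_\x b\,d^3\u\bigr]\cdot d\x$; continuity and continuous differentiability of $b$ and smoothness of $\D(\x)$ justify differentiating under the integral sign. (ii) The change due to the motion of $\partial\D(\x)$: under $d\x$, a boundary point $\u\in\partial\D(\x)$ is displaced in velocity space by $\vgrad_{rel}\,d\x$, which is the defining property of the ``relative'' velocity-gradient field and of the velocimetric control volume, exactly mirroring the control surface in \eqref{eq:Re_tr} moving with $\u_{rel}=\u-\u_{CV}$ (here $\vgrad_{rel}=\vgrad-\vgrad_{CV}$). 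This sweeps a shell of signed thickness $(\vgrad_{rel}\,d\x)\cdot\n_S$, so weighting by $b$ and integrating over $\partial\D(\x)$ gives $\oiint_{\partial\D(\x)} b\,(\vgrad_{rel}\,d\x)\cdot d^2\u=\bigl[\oiint_{\partial\D(\x)} b\,\vgrad_{rel}^\top\cdot d^2\u\bigr]\cdot d\x$, the transpose appearing when the position-space differential is factored out of the contraction. Adding (i) and (ii) yields the first equality of \eqref{eq:Re_tr_u_3D}.

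For the second equality, apply the generalized Gauss-Ostrogradsky divergence theorem in velocity space to the tensor field $b\,\vgrad_{rel}^\top$ on $\D(\x)$, using the convention $\nabla_\u\cdot(b\,\vgrad_{rel}^\top):=[\nabla_\u^\top(b\,\vgrad_{rel}^\top)]^\top$ fixed in the statement; this replaces the surface term by $\iiint_{\D(\x)}\nabla_\u\cdot(b\,\vgrad_{rel}^\top)\,d^3\u$, and collecting both terms under one integral gives the divergence form. The incompressibility hypothesis on the fluid presumably enters to keep the per-unit-velocity-volume accounting consistent and, as with the compressibility corrections to the spatial averaging theorem \cite{Cushman_1983,Takatsu_2017}, to preclude an extra term; identifying precisely where it is invoked is a minor point for the detailed argument.

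The main obstacle is step (ii): rigorously justifying that $\partial\D(\x)$ is advected by $\vgrad_{rel}$ and that the first-order shell volume equals $\bigl[\oiint b\,\vgrad_{rel}^\top\cdot d^2\u\bigr]\cdot d\x$. Unlike the temporal theorem, the ``parameter'' $\x$ is three-dimensional, so $F$ is differentiated along all three directions of $d\x$ at once and the shell computation must be consistent in each. The clean fix is the Lagrangian-transformation route used for \eqref{eq:Re_tr}: introduce the velocimetric flow map $\u=\vec{\Phi}(\u_0;\x)$ by integrating $d\u=\vgrad_{rel}\,d\x$ from a reference position $\x_0$, pull $F$ back to the fixed domain $\D(\x_0)$, and differentiate with the analogue of Euler's expansion formula, $dJ=J\,(\nabla_\u\cdot\vgrad_{rel})\,d\x$ for $J=\det(\partial\u/\partial\u_0)$. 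The cost is that one must invoke path-independence of $\vec{\Phi}$ (a Frobenius integrability condition on $\vgrad_{rel}$, automatic when it is a genuine gradient $\partial\u/\partial\x$) or else carry the computation along a fixed path from $\x_0$ to $\x$; after that, the only remaining delicacy is the tensor bookkeeping of transposes in reconciling the surface and divergence forms.
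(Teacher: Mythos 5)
Your proposal matches the paper's own treatment in \ref{sect:Apx_Re_spatial}, which gives exactly your two routes: a continuum-mechanics shell argument (change of integrand at fixed $\u$ plus a boundary layer of signed thickness $\vgrad_{rel,i}^\top\Delta x_i\cdot\n_S$, followed by the divergence theorem in velocity space) and a reference-velocity coordinate transformation using the Jacobian identity $\partial_{x_i}|\partial\u/\partial\u_0|=|\partial\u/\partial\u_0|\,\nabla_\u\cdot\vgrad_{rel,i}^\top$. The ``main obstacle'' you flag is handled in the paper simply by computing each partial derivative $\partial\widehat{b}/\partial x_i$ separately with a one-dimensional Taylor expansion before assembling the differential (plus an alternative directional-derivative version), and your observations that incompressibility is never visibly invoked and that path-independence of the velocimetric flow map is assumed without comment are both accurate readings of gaps the paper itself leaves implicit.
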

\begin{proof}
Two distinct proofs of \eqref{eq:Re_tr_u_3D} are given in  \ref{sect:Apx_Re_spatial}. 
Eq.\ \eqref{eq:Re_tr_u_3D} can also be derived from the general exterior calculus formulation presented in \S\ref{sect:general}, and shown to reduce to a generalized vector calculus formulation. 
\end{proof}

We can describe \eqref{eq:Re_tr_u_3D} as a {\it three-dimensional velocimetric-spatial Reynolds transport theorem}, or more precisely a transformation theorem. 
Its formulation bears many similarities to analyses of molecular systems in %position-momentum 
phase space \cite{Pottier_2010} %[e.g.][chap.\ 3]
(examined in \S\ref{sect:ex_sys}), 
but the integration extends only over the velocity space, in sympathy with the common probabilistic description of turbulent flow \cite[e.g.,][]{Monin_Yaglom_1971a, Batchelor_1967, Pope_2000}. 
The quantity $\varphi(\u,\x)$ can be interpreted physically as the conserved quantity carried per unit of velocity and volume space by a fluid element with a velocity between $\u$ and $\u + d\u$ at the position between $\x$ and $\x + d\x$, i.e., the density in six-dimensional Eulerian phase space. Integration of $\varphi(\u,\x)$ over the velocity domain therefore gives the volumetric density $\alpha(\x)$ of the conserved quantity at this position. 

The physical interpretation of \eqref{eq:Re_tr_u_3D} is analogous to that for the temporal formulation \eqref{eq:Re_tr}: a differential change in the integral of a local quantity $\varphi(\u,\x)$ over the velocity space  
can be subdivided into changes which occur within the control volume coincident with the velocity domain $\D(\x)$, and changes which take place due to (spatial) translations into or out of the domain through the velocity surface $\partial \D(\x)$. Using a velocimetric form of the divergence theorem, this is equivalent to the sum of changes within the domain and changes arising from a velocity divergence term. %Of course, if the velocity domain is infinite ($\D(\x)=\R^3$), there can be no motion through the velocity boundary, and the surface integral component will vanish. 
The combined velocimetric integral in \eqref{eq:Re_tr_u_3D} thus assumes a continuous and continuously differentiable tensor field $\vgrad_{rel}$, which extends over the entire velocity space within which the domain $\D(\x)$ is embedded.  

In sympathy with the temporal formulation, \eqref{eq:Re_tr_u_3D} adopts a relative velocity gradient, which can be decomposed (assuming Cartesian velocity and position coordinates) into two components:
\begin{align}
\vgrad_{rel}
= \vgrad-  \vgrad_{CV}, 
\label{eq:vel_tensor_relative} 
\end{align}
where $\vgrad$ is the intrinsic field 
and $ \vgrad_{CV} $ is the spatial rate of change of the local velocity coordinate system, as represented by a velocimetric control volume. For consistency, this control volume must provide a smoothly-varying tensorial frame of reference.
For flow of a  compressible Newtonian fluid, the intrinsic velocity gradient is related (implicitly) to the shear stress tensor field, here defined positive in compression \cite{Bird_etal_2002, White_2006}:
\begin{align}
\tens{\tau} =  - \mu (\vgrad  + \vgrad ^\top) - \lambda \, \vec{\delta} \, \Delta,
\label{eq:Newtonian}
\end{align}
where $\mu$ is the dynamic viscosity, $\lambda$ is the second or dilatational viscosity, $\vec{\delta}$ is the Kronecker delta tensor and $\Delta = \nabla_{\x} \cdot \u$ is the divergence of the velocity field.

In consequence, for this category of flow systems expressed using $(\u,\x)$ coordinates, the velocity gradient tensor  field $\vgrad$ -- or equivalently, the shear stress tensor field $\tens{\tau}$ -- provides an intrinsic spatial connection between different velocimetric domains. This is similar to the way in which, for a flow system described by $(\x,t)$ coordinates, the velocity field provides an intrinsic temporal connection -- a transport equation -- between different volumetric domains.

We have not been able to identify any previous report of the velocimetric-spatial Reynolds transport theorem \eqref{eq:Re_tr_u_3D} in the scientific literature.

\subsection{\label{sect:spat_prob} Probabilistic Analysis and the Spatial Liouville Equation}

Now consider a probabilistic form of the spatial formulation, based on the three-dimensional random variable for the velocity vector $\vect{\Upsilon}_{\u} = [\Upsilon_u, \Upsilon_v, \Upsilon_w]^\top$ with values $\u$,  subject to the three-dimensional random variable for position $\vect{\Upsilon}_{\x} = [\Upsilon_x, \Upsilon_y, \Upsilon_z]^\top$ with values $\x$.  
{We then define the joint-conditional pdf $f(\u|\x)$ over the domain $\D(\x)$, to represent the probability that the fluid element infinitesimally close to the specified position $\x$ will have a velocity infinitesimally close to $\u$:}
\begin{equation}
\begin{split}
%\mathrm{d}
f (\u|\x) \, d^3 \u = f (u,v,w|x,y,z) \, dudvdw 
\approx \text{Prob}
&\Biggl(
\begin{matrix} 
u \le \Upsilon_u \le u+du \\
v \le \Upsilon_v \le v+dv \\
w \le \Upsilon_w \le w+dw 
\end{matrix} 
\Biggl | \;
\begin{matrix} 
x \le \Upsilon_x \le x+dx \\
y \le \Upsilon_y \le y+dy \\
z \le \Upsilon_z \le z+dz
\end{matrix} 
\Biggr).
%\wrap
\end{split}
\label{eq:pdf_uvect_sub_xvect}
\end{equation}
{(Formally, $f(\u|\x)$ is defined over continuous intervals of velocity and position, from which \eqref{eq:pdf_uvect_sub_xvect} is obtained in the infinitesimal limits \cite[e.g.,][]{Feller_1966}.)}
Although not usually written in conditional form, we recognize $f (\u|\x)$ -- more commonly written $f (\u|\vec{r})$ as a function of relative position $\vec{r}$ -- as the basis of the Reynolds-averaged Navier-Stokes formulation, and the single-position correlation functions of turbulent fluid mechanics \cite[e.g.,][]{Monin_Yaglom_1971a, Batchelor_1967, Pope_2000, Hinze_1975}.  

Taking the velocity domain $\D(\x) \subseteq \mathbb{R}^3$ to be a function of $\x$, the pdf will be normalized at each position $\x$:
\begin{align}
1 = \iiint\limits_{\D(\x)} \, f (\u|\x) \, d^3 \u.
\label{eq:local_norm}
\end{align}
For any observable $\obss(\u, \x)$ in phase space coordinates, we can also define the conditional expectation:
\begin{align}
\lvel {\obss} \rvel (\x) = \iiint\limits_{\D(\x)} \obss (\u, \x) \, f (\u|\x) \, d^3 \u . %= E[a] 
\label{eq:velocity_average}
\end{align}
This can be interpreted physically as the ensemble mean of $\obss(\u,\x)$, i.e.\ its average over all values of the instantaneous velocity $\u \in \D(\x)$ at position $\x$, enabling a statistical rather than strict definition at each point. In many studies, \eqref{eq:velocity_average} is assumed equivalent to the local time mean $\overline{\obss} (\x)$. In the present work, we maintain the most general interpretation of \eqref{eq:velocity_average}, without any ergodic hypothesis. 

Now if $f(\u |\x)$ satisfies the same regularity conditions as the density $\varphi(\u,\x)$ in \S\ref{sect:spat_anal}, we can substitute $f(\u |\x)= \varphi(\u,\x)$ into the spatial Reynolds transport theorem \eqref{eq:Re_tr_u_3D}, giving:
\begin{align}
\begin{split}
d \iiint\limits_{\D(\x)} f \, d^3 \u 
=    
\biggl[
\iiint\limits_{\D(\x )}
\bigl(
\nabla_{\x} f
+
\nabla_{\u} \cdot ( f \, \vgrad_{rel}^\top )
\bigr) d^3 \u
\biggr] \cdot
d \x,
\end{split}
\label{eq:Re_tr_u_3D_prob}
\end{align}
From \eqref{eq:local_norm}, the left-hand side of \eqref{eq:Re_tr_u_3D_prob} vanishes for all $\x$. 
This gives the following theorem:
\begin{theorem}
Let $f(\u| \x)$ be the probability density of the velocity $\u$ at the specified position $\x$, defined over a velocity domain $\D(\x)$ containing a relative velocity gradient tensor field $\vgrad_{rel}$. Let $f(\u| \x)$ be continuous and continuously differentiable with respect to velocity and position throughout $\D(\x)$, for all velocities up to its boundary and all positions considered. For all $\u \in \D(\x)$ and all positions $\x$:
\begin{align}
\nabla_{\x} f  +  \nabla_{\u} \, \cdot ( f \, \vgrad_{rel}^\top ) 
= \vect{0}.
\label{eq:Liouville_3D}
\end{align}
\end{theorem}
\begin{proof}
Eq.\ \eqref{eq:Liouville_3D} follows directly from  \eqref{eq:local_norm}, \eqref{eq:Re_tr_u_3D_prob} and the fundamental lemma of the calculus of variations \cite{Weinstock_1952, Gelfand_Fomin_1963}, for all choices of compactly supported continuous and continuously differentiable functions $f(\u| \x)$.  
\end{proof}

\begin{corollary}
From \eqref{eq:Liouville_3D}, each spatial component must independently vanish:
\begin{align}
%\boxed{
\frac{\partial f}{\partial x_i}  +  \nabla_{\u} \, \cdot \bigl(f  \vgrad_{rel,i}^\top \bigr)  
= 0,
\qquad
\forall x_i \in [x,y,z],
%}
\label{eq:Liouville_1D_each}
\end{align}
where $ \vgrad_{rel,i} = \partial \u/\partial x_i$ is the $i$th row of the velocity gradient tensor. 
\end{corollary}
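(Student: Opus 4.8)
The plan is to recognize that \eqref{eq:Liouville_3D} is a genuine \emph{vector} identity in $\mathbb{R}^3$: both terms on its left-hand side carry a single free spatial index running over $x,y,z$, so the statement already contains three scalar equations, and a vector in $\mathbb{R}^3$ vanishes if and only if every component vanishes against the standard basis. The whole content of the corollary is therefore just a careful resolution of \eqref{eq:Liouville_3D} into components.

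First I would identify the $i$th component of each term. By the $\partial(\to)/\partial(\downarrow)$ convention, $\nabla_{\x} f$ has $i$th component $\partial f/\partial x_i$. For the second term I would unpack the divergence convention fixed in \S\ref{sect:spat_Re_tr}, namely $\nabla_{\u} \cdot (f \vgrad_{rel}^\top) = [\nabla_{\u}^\top (f \vgrad_{rel}^\top)]^\top$, to show that its $i$th entry is $\sum_{j=1}^{3} \partial (f\, G_{rel,ij})/\partial u_j$, where $G_{rel,ij}$ is the $ij$th component of $\vgrad_{rel}$. Writing $\vgrad_{rel,i} = [G_{rel,i1},G_{rel,i2},G_{rel,i3}] = \partial\u/\partial x_i$ for the $i$th row of $\vgrad_{rel}$, this entry is exactly $\nabla_{\u}\cdot(f\,\vgrad_{rel,i}^\top)$.

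Combining the two, the $i$th component of the vanishing vector \eqref{eq:Liouville_3D} reads $\partial f/\partial x_i + \nabla_{\u}\cdot(f\,\vgrad_{rel,i}^\top)=0$, which is precisely \eqref{eq:Liouville_1D_each}; since $i$ ranges independently over the three (linearly independent) coordinate directions, the three scalar equations hold separately, for all $\u\in\D(\x)$ and all $\x$, with no regularity assumptions beyond those already imposed on $f$ (and hence implicitly on $\vgrad_{rel}$). Alternatively, one can read the same conclusion off the $d\x$-contracted relation \eqref{eq:Re_tr7_u_3D}: the bracketed velocimetric integral vector is dotted into an arbitrary differential displacement $d\x$, and since commutativity \eqref{eq:commut_vel} must hold for every such $d\x$, each component of that integrand vector vanishes, which is the component localization argument that produced \eqref{eq:Liouville_3D} in the first place.

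The only real subtlety — the ``hard part'', such as it is — is the index bookkeeping: one must verify that the $i$th slot of $\nabla_{\u}\cdot(f\vgrad_{rel}^\top)$ pairs with the $i$th \emph{row} of $\vgrad_{rel}$ rather than its $i$th column, which matters because $\vgrad_{rel}$ is not symmetric in general. Once that alignment is confirmed via the stated transpose convention, the result is immediate.
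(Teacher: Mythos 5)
Your proposal is correct and matches the paper's (implicit) reasoning exactly: the corollary is simply the componentwise resolution of the vector equation \eqref{eq:Liouville_3D}, and your careful check that the $i$th slot of $\nabla_{\u}\cdot(f\,\vgrad_{rel}^\top)$ pairs with the $i$th \emph{row} $\vgrad_{rel,i}=\partial\u/\partial x_i$ under the stated transpose convention is precisely the bookkeeping the paper leaves to the reader. Nothing further is needed.
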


Eq.\ \eqref{eq:Liouville_3D} can be interpreted as a {\it three-dimensional spatial Liouville equation}. 
We again emphasize that if the pdf is not continuous or continuously differentiable, e.g., due to a discontinuity in the velocity gradient, \eqref{eq:Liouville_3D} may be invalid.
%
%The connections between this analysis and Reynolds averaging are examined in  \ref{sect:Apx_Re_average}.

Despite an extensive search, we have not identified any previous report of the three-dimensional or one-dimensional spatial Liouville equation \eqref{eq:Liouville_3D}-\eqref{eq:Liouville_1D_each} in the fluid mechanics or physics literature, or even in the probability literature. A contributing factor may be that in the traditional Liouville equation derived by Gibbs \cite{Gibbs_1884}, based on the pdf $f(\pos,\mom|t)$ in $6N$-dimensional phase space (where $\pos$ and $\mom$ are position and momentum vectors), all parameters are functions of time, leading exclusively to a temporal Liouville equation
%, written as a multidimensional extension of equation \eqref{eq:Liouville_incompr} or \eqref{eq:Liouville2D}
\cite[e.g.,][]{Tolman_1938, Landau_Lifshitz_1969} (see also \S\ref{sect:ex_sys}).
% (we analyse this system in \S\ref{sect:ex_sys}(\ref{sect:Hamilt})). %, Harris_1971, Schuster_Just_2005}. 
%As shown in \S\ref{sect:general}, 
Spatial Liouville equations are also accessible using the apparatus of exterior calculus (see \S\ref{sect:general}), but we have not found any previous study to do so -- noting that this requires a multiparameter Lie derivative, divorcing this operator from the concept of physical time.

\subsection{\label{sect:spat_ops} Spatial Perron-Frobenius and Koopman Operators}

Taking the analysis farther, the solution to \eqref{eq:Liouville_3D} is $f(\u|\x)=\hat{{P}}_{\x} \, f(\u| \vec{0} )$, using a three-dimensional spatial Perron-Frobenius operator $\hat{{P}}_{\x}$, in which the origin $\x=\0$ is defined in the relative coordinate system of $\x$. Again this is linear, giving $\hat{{P}}_{\x} = \protect{\exp(\x \cdot  \, \hat{\vec{L}}_{\x})}$, in which $\hat{\vec{L}}_{\x}$ is a vector spatial Liouville operator defined by $\hat{\vec{L}}_{\x} f = - \nabla_{\u} \, \cdot ( f \, \vgrad_{rel}^\top )$.
The adjoint three-dimensional spatial Koopman operator $\hat{{K}}_{\x}$ can be defined from the ensemble average \eqref{eq:velocity_average} using the duality:
\begin{align}
\begin{split}
\lvel{\obss} \rvel (\x)
&= \iiint\limits_{\D(\x)} \obss(\u, \x) \, \hat{P}_{\x} \, f(\u|\vec{0}) \, d^3 \u 
\wrap
= \iiint\limits_{\D(\x)} \hat{K}_{\x} \, \obss(\u,\vec{0}) \, f (\u|\x) \, d^3 \u,
\end{split}
\label{eq:op_duality_spat}
\end{align}
giving the spatial evolution equation $\obss(\u,{\x})=\hat{{K}}_{\x} \, \obss(\u,\vec{0})$.  As noted, spatial Koopman operators or related methods have been invoked by several authors for the analysis of spatial symmetries in fluid flow systems \citep[e.g.,][]{Sharma_etal_2016, Hemati_etal_2017, Arbabi_Mezic_2017, Proctor_etal_2018, Kutz_etal_2018, LeClainche_Vega_2018a, LeClainche_Vega_2018b, Giannakis_etal_2019, Froyland_etal_2020, Umbria_etal_2020, Giannakis_Das_2020, Perez_etal_2020, Bai_etal_2020}.

In consequence, if one has local information on a time-independent flow system at one position, either in probabilistic form or in the form of a conserved observable property, it is possible to extrapolate this information using the spatial Perron-Frobenius and Koopman operators to all positions within the velocity gradient field. As shown in \S\ref{sect:ex_sys}, these operators can be further extended to spatiotemporal systems.

\subsection{\label{sect:spat_other} Further Simplifications}

{\bf Intrinsic Velocity Gradients}: For a fixed velocity gradient frame of reference $\vgrad_{CV} =0$, we obtain the intrinsic spatial Liouville equation:
\begin{align}
%\boxed{
\nabla_{\x} f  +  \nabla_{\u} \, \cdot (f \vgrad^\top)  
= \vect{0},
%}
\label{eq:Liouville_3D_intr}
\end{align}
expressing the natural variation of $f$ with $\x$. Eq.\ \eqref{eq:Liouville_3D_intr} can be considered to be induced by
%\begin{align}
$\vgrad^\top = \tens{\Xi} (\u)$, 
%\end{align}
a system of spatial partial differential equations with tensor operator $\tens{\Xi}$.  

For incompressible or solenoidal flow $\nabla_{\x} \cdot \u = 0$ of a Newtonian fluid with a symmetric shear stress tensor, reduction of \eqref{eq:Newtonian} to the explicit relation $\vgrad = -  \tens{\tau}/2 \mu  $ gives the simplified spatial Liouville equation:
\begin{align}
%\boxed{
\nabla_{\x} f  -  \nabla_{\u} \, \cdot f \frac{   \tens{\tau}^\top}{2 \mu}  
= \vect{0},
%}
\label{eq:Liouville_3D_intr_Newt}
\end{align}
or if the shear stress tensor is homogenous in velocity space:
\begin{align}
%\boxed{
\nabla_{\x} f  -  \frac{   \tens{\tau}^\top}{2 \mu}   \cdot  \nabla_{\u} f 
= \vect{0}.
%}
\label{eq:Liouville_3D_intr_Newt2}
\end{align}

\noindent {\bf One-Dimensional Geometries}: For flows with a one-directional velocity gradient aligned with one $x_i$ from $[x_1,x_2,x_3]$, for example plane parallel flow in the zone of established flow, the above analysis will reduce to a one-dimensional spatial Reynolds theorem, which can be written as the total derivative:
\begin{align}
\begin{split}
\frac{d}{dx_i} \iiint\limits_{\D(\x)} \varphi \, d^3 \u 
&=   %\sum\limits_{i=1}^3  
\iiint\limits_{\D(\x )}
\frac {\partial \varphi}{\partial x_i} \; d^3 \u
+
\oiint\limits_{\partial \D(\x)} \varphi \, \vgrad_{rel,i}^\top  \cdot  d^2 \u
\wrap
=    %\sum\limits_{i=1}^3  
\iiint\limits_{\D(\x )}
\biggl[
\frac {\partial \varphi}{\partial x_i}
+
\nabla_{\u} \cdot ( \varphi \, \vgrad_{rel,i}^\top ) \biggr] d^3 \u.
\end{split}
\label{eq:Re_tr_u_1D}
%%\raisetag{45pt}
\end{align}
Substituting $\varphi=f$, we obtain a single one-dimensional spatial Liouville equation in the form of \eqref{eq:Liouville_1D_each}. 

\vspace{10pt}

\noindent {\bf Two-Dimensional Velocity Gradients}: A special case involves a two-dimensional velocity vector $\u=[u,v]^\top $ with gradients in a single direction $x_i$, leading to a velocity gradient vector $(d \u/d x_i)_{rel} = [(d u/d x_i)_{rel},(d v/d x_i)_{rel}]^\top$ (here reverting to traditional notation). If this is also subject to the no-divergence condition $\protect{\nabla_{\u} \cdot (d \u/ d x_i)_{rel} =0}$, we can define a spatial stream function or {\it velocity gradient function} $\Gamma$ by 
%\begin{align}
%\biggl( \frac{du}{dy} \biggr)_{rel} = \frac{\partial \Gamma}{\partial v}, 
%\quad
%\biggl( \frac{dv}{dy}  \biggr)_{rel} = - \frac{\partial \Gamma}{\partial u}
$(d \u/dx_i)_{rel} = [{\partial \Gamma}/{\partial v},  - {\partial \Gamma}/{\partial u}]^\top$.
%\end{align}
Substitution into the one-dimensional spatial Liouville equation \eqref{eq:Liouville_1D_each} gives the Hamiltonian form:
\begin{align}
%\boxed{
\frac{\partial f  }{\partial x_i}   +\biggl( \frac{\partial \Gamma}{\partial v} \frac{\partial f}{\partial u}  -  \frac{\partial \Gamma}{\partial u} \frac{\partial f}{\partial v} \biggr)
     =0.
% }
\label{eq:Liouville_new2D}
\end{align}
For a curl-free gradient $\nabla_{\u} \times (d\u/dx_i)_{rel} =\vec{0}$, the function $\Gamma$ will be normal to a {\it velocity gradient potential} $\Lambda$ defined by $(d\u/dx_i)_{rel} = \nabla_{\u} \Lambda=[{\partial \Lambda}/{\partial u}, {\partial \Lambda}/{\partial v}]^\top$. 
%\end{align}
For constant gradients these give a {\it gradient net} of curvilinear orthogonal coordinates $(\Gamma,\Lambda)$, tangential and normal to the velocity gradient vector.

%\newpage
\section{General Formulations} \label{sect:general} 
\subsection{Exterior Calculus Formulations} \label{sect:diffform} 
%\subsubsection{Reynolds Transport Theorems} 
{\bf Reynolds Transport Theorem on a Manifold:}
The above analyses are now generalized to the analysis of differential forms, using an extended multivariate formulation of exterior calculus.

\begin{theorem} %ReTT exterior calculus - standard
\label{thm:ReTT_ext_calc}
Consider an $n$-dimensional orientable differentiable manifold $M^n$, described using a patchwork of local coordinate systems, in which there is an $r$-dimensional oriented compact submanifold $\Omega^r$. Let $\V$ be a smooth vector or tensor field in $M^n$, parameterized by an $m$-dimensional parameter vector $\C$, but not itself a function of $\C$. 
Let $\omega^r$ represent a field of $r$-forms in $M^n$ associated with a conserved quantity, which is locally continuous and continuously differentiable with respect to the local coordinates and parameter $\C$ within $\Omega^r$, for all coordinates up to its boundary and all parameter values considered.
The integral of $\omega^r$ over the submanifold satisfies:
\begin{gather}
\begin{split}
d \int\limits_{\Omega(\C)}   \omega^r 
&=  \biggl[ \int\limits_{\Omega(\C)} \mathcal{L}_{\V}^{(\C)}  \omega^r \biggr] \cdot {d\C} 
\\&
=  \biggl[ \int\limits_{\Omega(\C)}  \, i_{\V}^{(\C)}  \, d  \omega^r + {\oint\limits_{\partial \Omega(\C)}  }  \,   i_{\V}^{(\C)}  \, \omega^r 
\biggr]  \cdot {d\C} 
\wrap
=   \biggl[  \int\limits_{\Omega(\C)}  
 i_{\V}^{(\C)}  \, d \omega^r + d  ( i_{\V}^{(\C)} \, \omega^r  )
\biggr] \cdot  {d\C},
\end{split}
\label{eq:Re_tr_diffform}
%%\raisetag{60pt}
\end{gather}
where $d$ is now the exterior derivative, $\partial \Omega(\C)$ is the submanifold boundary, $\mathcal{L}_{\V}^{(\C)}$ is a multiparameter Lie derivative with respect to $\V$ over parameters $\C$, $i_{\V}^{(\C)}$ is a multiparameter interior product with respect to $\V$ over parameters $\C$, and 
``$\cdot$'' is the usual vector scalar product (dot product). 
\end{theorem}
\begin{proof}
The definitions of the multiparameter operators used in \eqref{eq:Re_tr_diffform}, and its proof, are provided in  \ref{sect:Apx_Re_exterior}. 
\end{proof}

Eq.\ \eqref{eq:Re_tr_diffform} can be recognized as a {\it generalized parametric Reynolds transport theorem} -- or more precisely, a transformation theorem -- applicable to a differential form associated with a conserved quantity on a manifold.  For flow in a geometric space with $\C=t$, \eqref{eq:Re_tr_diffform} reduces to the one-parameter exterior calculus formulation of the Reynolds transport theorem, based on the time-independent velocity field $\u$  [\onlinecite[][\S 9.2]{Lee_2009}; \onlinecite[][eqs.\ 0.49, 4.33-4.34]{Frankel_2013}].

{\bf Augmented Reynolds Transport Theorem on a Manifold:} In many flow systems, the vector or tensor field $\V$ will also be a function of the parameter $\C$. This can be handled by augmenting the manifold $M^n$ with the parameter space $\R^m$, giving the augmented field represented by $ \V \comp \C$, where $\comp$ here denotes an augmentation operator [c.f. \onlinecite{Flanders_1973, Frankel_2013}]. Eq.\ \eqref{eq:Re_tr_diffform} then applies based on the field $\V \comp \C$ in the augmented manifold $M^n \times \R^m$. This yields the following theorem:

\begin{theorem}  %ReTT exterior calculus - augmented
Consider an $n$-dimensional orientable differentiable manifold $M^n$ containing an $r$-dimensional oriented compact submanifold $\Omega^r$ and a smooth vector or tensor field $\V$, all defined as in Theorem \ref{thm:ReTT_ext_calc}. Let the vector or tensor field now be a function of $\C$. Let $\omega^r$ be a field of $r$-forms in $M^n$, defined as in Theorem \ref{thm:ReTT_ext_calc}. 
The integral $\omega^r$ over the submanifold satisfies:
\begin{align}
\begin{split}
\hat{d} \int\limits_{\Omega(\C)}  \omega^r 
=  \biggl[ \int\limits_{\Omega(\C)} \mathcal{L}_{\V \comp \C}^{(\C)}  \, \omega^r \biggr] \cdot {d\C} 
&=  \biggl[ \int\limits_{\Omega(\C)}  \vpartial_{\C} \omega^r 
+ \int\limits_{\Omega(\C)}  \, i_{\V}^{(\C)}  \, d  \omega^r 
+ {\oint\limits_{\partial \Omega(\C)}  }  \,   i_{\V}^{(\C)}  \, \omega^r 
\biggr]  \cdot {d\C} 
\\&=   \biggl[  \int\limits_{\Omega(\C)}  
\vpartial_{\C} \omega^r +  i_{\V}^{(\C)}  \, {d} \omega^r + {d} ( i_{\V}^{(\C)} \, \omega^r  )
\biggr] \cdot  {d\C} 
\end{split}
%%\raisetag{55pt}
\label{eq:Re_tr_diffform_aug}
\end{align}
where $\hat{d}$ is an extended exterior derivative based on the augmented coordinates, and $\vpartial_{\C} = [\partial/\partial C_1, ..., \partial/\partial C_m]^\top$ is a vector partial differential operator with respect to $\C$. For consistency, \eqref{eq:Re_tr_diffform_aug} retains the notation ${d}$ for the exterior derivative based on the standard local coordinates. 
\end{theorem}
\begin{proof}
The proof of \eqref{eq:Re_tr_diffform_aug} is given in  \ref{sect:Apx_Re_exterior_ext}.  
\end{proof}

Eq.\ \eqref{eq:Re_tr_diffform_aug} can be described as an {\it augmented generalized parametric Reynolds transport theorem}, based on the vector or tensor field $\V(\C)$. When $\C$ is expressed in Cartesian coordinates, the first term in the integrand of \eqref{eq:Re_tr_diffform_aug2} can be written as $\nabla_{\C}  \omega^r$. For flow in a geometric space with $\C=t$, \eqref{eq:Re_tr_diffform_aug} reduces to the augmented one-parameter exterior calculus formulation of the Reynolds transport theorem, based on the time-varying velocity field $\u(t)$ [\onlinecite{Flanders_1973}, eq.\ 7.2, \onlinecite{Frankel_2013}, eqs.\ 0.50, 4.42].

\vspace{10pt}
%\subsubsection{Liouville Theorems and Operators}
{\bf Liouville Theorem and Operators on a Manifold:}
Extending the analyses presented in \S\ref{sect:temp_anal} and \S\ref{sect:spat_anal}, the above Reynolds transport theorems can be used to derive probabilistic differential equations and corresponding operators. Consider a field of probability $r$-forms represented by $\rho^r = \rho^r_{\C}$, a function of a patchwork of local coordinate systems over the submanifold $\Omega(\C)$, and conditioned on the parameter $\C$. This can be defined by the following axioms:
\begin{align}
\rho^r \ge 0
\hspace{10pt} 
\text{ and } 
\hspace{5pt}
\int\limits_{\Omega(\C)} \rho^r =1
\label{eq:prob_form_constr}
\end{align}
The underlying subtleties in this definition -- arising from its marriage of measure theory and exterior calculus -- are discussed in  \ref{sect:Apx_prob_forms}. 
We further define the expected value of a scalar field (0-form) $\obsss$ over the submanifold by: 
\begin{align}
\lgen {\obsss} \rgen (\C)  = \int\limits_{\Omega(\C)} \obsss \, \rho^r
\label{eq:submanifold_average}
\end{align}
The Reynolds transport theorem on a manifold \eqref{eq:Re_tr_diffform} then leads to the following theorem:

\begin{theorem} %Liouville exterior calculus - standard
\label{thm:Liouville_eq_ext_calc}
Let $\rho^r$ be a field of $r$-forms, representing the probability density at each position in an $n$-dimensional orientable differentiable manifold $M^n$, as a function of the local coordinates and conditioned on the $m$-dimensional parameter vector $\C$. 
Let $\rho^r$ be locally continuous and continuously differentiable with respect to the local coordinates and parameter $\C$ within an $r$-dimensional oriented compact submanifold $\Omega(\C)$ in $M^n$, for all coordinates up to its boundary and all parameter values considered.
Let $\V$ be a vector or tensor field in $\Omega(\C)$, which is independent of $\C$. 
For each point in $\Omega(\C)$ and all $\C$:
\begin{gather}
%\boxed{
 \mathcal{L}_{\V}^{(\C)} \rho^r
=  i_{\V}^{(\C)}  \, d \rho^r + d \,  (i_{\V}^{(\C)} \, \rho^r )
= \vec{0}.
 % } 
 \label{eq:Liouville_diffform}
\end{gather}
\end{theorem}

\begin{proof}
Substituting $\omega^r= \rho^r $ in \eqref{eq:Re_tr_diffform}, the left-hand side gives $d \int\nolimits_{\Omega(\C)} \rho^r$, the exterior derivative of a $0$-form, equivalent to its differential. By normalization \eqref{eq:prob_form_constr}, this vanishes for all $\C$. Eq.\ \eqref{eq:Liouville_diffform} follows from \eqref{eq:Re_tr_diffform} and the fundamental lemma of the calculus of variations (in an exterior calculus formulation), for all choices of compactly supported continuous and continuously differentiable probability forms $\rho^r$. 
\end{proof}

Eq.\ \eqref{eq:Liouville_diffform} can be interpreted as a {\it generalized parametric Liouville equation}, which expresses the local conservation of the probability $r$-form under the (intrinsic) variation of its conditions. We also recognise \eqref{eq:Liouville_diffform} as a multiparameter extension of the Cartan relation of exterior calculus \citep[e.g.][]{Frankel_2013}, applied to a probability form.
Its solution can be written as $\rho^r_{\C}=\hat{P}_{\C} \, \rho^r_{\0}$, which defines an exterior Perron-Frobenius operator $\hat{P}_{\C}$. 
From the submanifold average \eqref{eq:submanifold_average}, this will have an adjoint exterior Koopman operator $\hat{{K}}_{\C}$ defined by the observable map $\obsss_{\C}=\hat{{K}}_{\C} \, \obsss_{\0}$.

{\bf Augmented Liouville Theorem and Operators on a Manifold:}
As discussed, in many systems the field $\V$ is also a function of $\C$. Using the augmented Reynolds transport theorem \eqref{eq:Re_tr_diffform_aug}, we can extract the theorem:

\begin{theorem} %Liouville exterior calculus - augmented
Let $\rho^r$ be a field of probability $r$-forms in an $n$-dimensional orientable differentiable manifold $M^n$,
defined as in Theorem \ref{thm:Liouville_eq_ext_calc}. 
Let $\V$ be a vector or tensor field in the $r$-dimensional submanifold $\Omega(\C)$ in $M^n$, a function of the local coordinates and also of $\C$. For each point in $\Omega(\C)$ and all $\C$:
\begin{gather}
 \mathcal{L}_{\V \comp \C}^{(\C)} \rho^r
= \vpartial_{\C} \rho^r +  i_{\V}^{(\C)}  \, {d} \rho^r + {d} ( i_{\V}^{(\C)} \, \rho^r)
= \vec{0}.
 \label{eq:Liouville_diffform_aug}
\end{gather}
\end{theorem}
\begin{proof}
The proof of Theorem \ref{thm:Liouville_eq_ext_calc} applied to \eqref{eq:Re_tr_diffform_aug} gives \eqref{eq:Liouville_diffform_aug}, thus for all choices of compactly supported continuous and continuously differentiable probability forms $\rho^r$. 
\end{proof}

Eq.\ \eqref{eq:Liouville_diffform_aug} provides an augmented generalized parametric Liouville equation based on the probability $r$-form $\rho^r$, applicable for fields $\V(\C)$. Following the previous procedure, it can be used to define exterior Perron-Frobenius and Koopman operators for the augmented flow system.

Examining the literature, although multiparameter Lie groups and other generalizations have been examined \cite[e.g.,][]{Bluman_Kumei_1996, Baumann_2000, Oliveri_2010}, 
neither the full multiparameter Reynolds transport theorems \eqref{eq:Re_tr_diffform} or \eqref{eq:Re_tr_diffform_aug} nor their associated Liouville equations \eqref{eq:Liouville_diffform} or  \eqref{eq:Liouville_diffform_aug} appear to have been reported previously. 
As noted, one-parameter exterior calculus formulations of the Reynolds transport theorem have been reported \cite{Flanders_1973, Frankel_2013}. 
A temporal Liouville equation has also been written for an arbitrary conserved $r$-form in terms of the standard Lie derivative \cite[e.g.,][]{Nash_2015}, but not (we believe) in terms of a probability $r$-form. 

\subsection{Parametric Formulations} \label{sect:paramform} 
%\subsubsection{Reynolds Transport Theorem}
{\bf Parametric Reynolds Transport Theorem:} 
The augmented version of the exterior calculus formulation of the Reynolds transport theorem leads to the following theorem:
\begin{theorem}
Consider an $n$-dimensional space $M$ described using global Cartesian coordinates $\X$, containing an $n$-dimensional compact domain $\Omega$. 
Let $\V=(\nabla_{\C} \X)^\top$ be a smooth vector or tensor field in $M$, where $\C$ is an $m$-dimensional Cartesian parameter vector, such that $\V$ is a function of $\C$. 
Let $\omega^n$ be an $n$-dimensional compact material form based on the density $\psi(\X,\C)$ of a conserved quantity in $M$. Let $\psi(\X, \C)$ be continuous and continuously differentiable with respect to $\X$ and $\C$ throughout the domain $\Omega(\C)$, for all coordinates up to its boundary and all parameter values considered.
The integral of $\psi$ over the submanifold satisfies:
\begin{gather}
%\boxed{
\begin{split}
d\int\limits_{\Omega(\C)}  \psi \, d^n \X 
&=  \biggl[  \int\limits_{\Omega(\C)}  \nabla_{\C} \psi  \; d^n \X +{\oint\limits_{\partial \Omega(\C)} }  \; \psi \, \V \cdot d^{n-1} \X  \biggr] \cdot {d\C} 
%\label{eq:Re_tr0_gen}
\\&
=    \biggl[ \int\limits_{\Omega(\C)}  \bigl[ \nabla_{\C} \psi +  \nabla_{\X} \cdot \bigl(  \psi \, \V  \bigr ) \bigr] d^n \X  \biggr] \cdot {d\C},
\end{split}
%}
\label{eq:Re_tr_vect_calc_gen}
\end{gather}
where $d^n\X$ is an $n$-dimensional volume element in $\Omega(\C)$, $d^{n-1}\X$ is an $(n-1)$-dimensional directed area element on the boundary $\partial \Omega(\C)$, and in which the gradient and divergence operators are extended to their $n$- and $m$-dimensional variants.
\end{theorem}
\begin{proof}
The proof of \eqref{eq:Re_tr_vect_calc_gen} is given in  \ref{sect:Apx_Re_param}. 
\end{proof}

Eq.\ \eqref{eq:Re_tr_vect_calc_gen} provides a generalized parametric Reynolds transport theorem for a system with global Cartesian coordinates $\X$ and parameters $\C$. 
We emphasise that $\V$ in \eqref{eq:Re_tr_vect_calc_gen} is defined as the reverse of the $\partial (\to)/\partial (\downarrow)$ convention,  consistent with \eqref{eq:Re_tr_diffform} and \eqref{eq:Re_tr_diffform_aug}; furthermore the divergence is defined by 
$ \nabla_{\X} \cdot (  \psi \, \V  ) =  [\nabla_{\X}^\top ( \psi \, \V )]^\top$. 

\vspace{10pt}
%\subsubsection{Liouville Equation and Operators}
{\bf Parametric Liouville Equation and Operators:}
The parametric Reynolds transport theorem \eqref{eq:Re_tr_vect_calc_gen} leads directly to the following theorem:

\begin{theorem}
Let $\q(\X|\C)$ be the probability density in the compact domain $\Omega(\C)$, a function of the $n$-dimensional global Cartesian coordinates $\X$ and conditioned on the $m$-dimen\-sional Cartesian parameter $\C$. 
Let $\q(\X | \C)$ be continuous and continuously differentiable with respect to $\X$ and $\C$ throughout  $\Omega(\C)$, for all coordinates up to its boundary and all parameter values considered.
Let $\V$ be a smooth vector or tensor field defined by $\V:=(\nabla_{\C} \X)^\top$, such that $\V$ is a function of $\C$. 
For all $\X \in \Omega(\C)$ and all $\C$: 
\begin{gather}
\nabla_{\C} \, \q +  \nabla_{\X} \cdot \bigl( \q \, \V  \bigr )
= \vec{0}.
\label{eq:Liouville_vect_calc_gen}
\end{gather}
\end{theorem}
\begin{proof}
Substituting $\psi=\q$ in \eqref{eq:Re_tr_vect_calc_gen}, the left-hand side $d \int\nolimits_{\Omega(\C)} \q \, d^n \X$ vanishes by the normalization of $\q$. Eq.\ \eqref{eq:Liouville_vect_calc_gen} follows from the fundamental lemma of the calculus of variations, for all choices of compactly supported continuous and continuously differentiable pdfs $\q$. 
\end{proof}

Eq.\ \eqref{eq:Liouville_vect_calc_gen} gives a parametric Liouville equation in a global coordinate system, expressing the conservation of the pdf $\q(\X | \C)$ under the (intrinsic) variation of its parameters $\C$. 
This can be considered to be induced by the dynamical system $\V = \vec{\mathcal{F}}(\X,\C)$ with operator $\vec{\mathcal{F}}$. Its solution can be written $\q(\X|\C)=\hat{P}_{\C} \, \q(\X|\vec{0})$ using a linear parametric Perron-Frobenius operator $\hat{P}_{\C} =  \exp(\C \cdot  \hat{\vec{L}}_{\C})$, in which the origin $\C=\0$ is measured in the relative coordinate system of $\C$, and $\hat{\vec{L}}_{\C}$ is a parametric Liouville operator defined by $\hat{\vec{L}}_{\C} \q= - \nabla_{\X} \cdot \bigl(  \q \, \V  \bigr ) $. The adjoint parametric Koopman operator $\hat{{K}}_{\C}$, defined using the moment $\lgenn {\varsigma} \rgenn (\C) = \int\nolimits_{\Omega(\C)} \varsigma \q \, d^n \X$, gives the observable equation $\varsigma(\X,{\C})=\hat{{K}}_{\C} \, \varsigma(\X,\vec{0})$.

\section{Applications}  \label{sect:ex_sys} %%%%%%%%%%%%%%%%%%%%%%%%

We now have the apparatus to construct multidimensional parametric Reynolds transport theorems, 
Liouville equations and evolution operators for a variety of physical systems.  Several examples are examined (in intrinsic form) below. 

%\newcounter{Lcount2}
%\begin{list}{(\alph{Lcount2})}{\usecounter{Lcount2} \topsep 3pt \itemsep 3pt \parsep 0pt \leftmargin 18pt \rightmargin 0pt \listparindent 0pt \itemindent 0pt}
%have examined a spatiotemp version, which recovers ReTT and spatial av theorem; see extra file

\subsection{Velocimetric spatiotemporal fluid flow systems} \label{sect:spatiotempvel} These apply to spatially and time-varying flows described by the density $\varphi(\u(\x,t),\x,t)$, giving the Reynolds transport equation:
\begin{align}
\begin{split}
&d \iiint\limits_{\D(\x,t)} \varphi \, d^3 \u 
\\
&=   \biggl[ \iiint\limits_{\D(\x,t)} \nabla_{\x} \varphi \; d^3 \u + \oiint\limits_{\partial \D(\x,t)} \varphi \, \vgrad^\top  \cdot  d^2 \u \biggr] \cdot d \x
+ \biggl[ \iiint\limits_{\D(\x,t)} \frac{\partial \varphi}{\partial t} \; d^3 \u + \oiint\limits_{\partial \D(\x,t)} \varphi \, \dot{\u}  \cdot  d^2 \u \biggr] dt
\\
&=   \biggl[ \iiint\limits_{\D(\x,t)} \bigl( \nabla_{\x} \varphi + \nabla_{\u} \cdot ( \varphi \, \vgrad ^\top) \bigr) d^3 \u \biggr] \cdot
d \x 
+  \biggl[ \iiint\limits_{\D(\x,t)} \biggl( \frac{\partial \varphi}{\partial t} + \nabla_{\u} \cdot ( \varphi \, \dot{\u} ) \biggr) d^3 \u \biggr] dt,
\end{split}
\label{eq:ReTT_uxt}
\end{align}
where $\D(\x,t)$ is the domain and $\dot{\u}= {\partial \u}/{\partial t}$. Introducing the pdf $\f(\u | \x, t)$, the corresponding joint Liouville equations are:
\begin{align}
&\begin{cases}
\nabla_{\x} \f   +  \nabla_{\u} \, \cdot ( \f \, \vgrad^\top )     = \vect{0} \\
\dfrac{\partial \f}{\partial t}   +  \nabla_{\u} \cdot (\f \, \dot{\u} )=0
\end{cases},
\label{eq:Liouville_uxt}
\end{align}
Introducing the four-dimensional operator $\vec{\delfour}_{\x} = [\partial/\partial x, \partial/\partial y, \partial/\partial z, \partial/\partial t]^\top$ and tensor-vector field $\vgradtil = \delfour_{\x} \u$, the latter can be written as:
\begin{align}
\delfour_{\x} \f   +  \nabla_{\u} \, \cdot ( \f \, \vgradtil^\top )     = \vect{0}.
\label{eq:Liouville_uxt_4D}
\end{align}
%Defining $\delfour_{\x}= [\nabla_{\x}, \partial/\partial t]^\top$, 
This can be considered induced by $\vgradtil^\top=\tens{\mathcal{F}}(\u)$. The Liouville equation \eqref{eq:Liouville_uxt_4D} and moment $\lvel {\obss} \rvel $ \eqref{eq:velocity_average} based on the observable $\obss(\u,\x,t)$ then give the spatiotemporal maps $\f(\u | \x,t)=\hat{{P}}_{\x,t} \, \f(\u | \0,0)$ and $\obss(\u,{\x},t)=\hat{{K}}_{\x,t} \, \obss(\u,\0,0)$, invoking the spatiotemporal operators
$\hat{P}_{\x,t} = \exp([\x,t] \,  \hat{\vec{L}}_{\x,t})$,
$\hat{\vec{L}}_{\x,t} \, \f= - \nabla_{\u} \cdot ( \f \, \vgradtil^\top) = - \nabla_{\u} \cdot ( \f \, \tens{\mathcal{F}}(\u)) $ and 
$\hat{{K}}_{\x,t}$.
The connections between these operators and those examined recently in the literature \cite[e.g.,][]{Sharma_etal_2016, Hemati_etal_2017, LeClainche_Vega_2018a, LeClainche_Vega_2018b} warrant further examination.

%\hspace{10pt} In the exterior calculus literature, time-dependent flows have been treated differently, by augmentation of the manifold $M^n$ to include the time coordinate \cite{Frankel_2013}. %[e.g.][eqs.\ 0.50 and 4.42]
%Here we have in effect adopted a multidimensional ``flow'' $\phi^{\x,t} $, bringing additional conditions into the probability form or pdf. 

\subsection{Velocimetric spatiotemporal fluid flow systems with pairwise correlations} Moving directly to a probabilistic framework, these invoke the pairwise pdf 
$\f(\u_1(\x_1,t), $ $ \u_2(\x_2,t) $ $|\x_1, \x_2,t)$
\cite[e.g.,][]{Batchelor_1967, Monin_Yaglom_1971a, Pope_2000}, giving the Liouville equation system:
\begin{align}
\begin{cases}
\left. \begin{array}{llll}
\nabla_{\x_1} \f   
&+  \nabla_{\u_1} \, \cdot ( \f \, \vgrad_{\x_1}^\top )  
&    
&= \vect{0} 
\\
\nabla_{\x_2} \f   
&  
&+  \nabla_{\u_2} \, \cdot ( \f \, \vgrad_{\x_2}^\top )     
&= \vect{0} 
\\
\dfrac{\partial \f}{\partial t}   
&+  \nabla_{\u_1} \cdot ( \f \, \dot{\u}_1 )
&+  \nabla_{\u_2} \cdot ( \f \, \dot{\u}_2 )
&=0
\end{array} \right.
\end{cases},
\label{eq:Liouville_xtpair}
\end{align}
where $\nabla_{\x_k}$ is based on $\x_k$, $\dot{\u}_k= {\partial \u}_k/{\partial t}$ and $\vgrad_{\x_k}=\nabla_{\x_k} \u_k$. This expresses the dynamical system $\nabla_{\x_1,\x_2,t} [\u_1,\u_2] =\tens{\mathcal{F}}(\u_1,\u_2)$. Previous workers give only the temporal equation  \cite[e.g.,][]{Edwards_1964}. 
The Liouville equation \eqref{eq:Liouville_xtpair} and two-point moment $\lgenn {\obsss} \rgenn (\x_1,\x_2,t) $ based on the observable $\obsss(\u_1, \u_2, \x_1, \x_2,t)$ then give the pairwise maps $\f(\u_1,\u_2 | \x_1,\x_2,t)=\hat{{P}}_{\x_1,\x_2,t} \, \f(\u_1,\u_2 | \0,\0,t)$ and $\obsss(\u_1,\u_2, \x_1,\x_2,t)$ $=\hat{{K}}_{\x_1,\x_2,t} \, $ $\obsss(\u_1,\u_2, \0,\0,0)$, using the pairwise 
operators 
$\hat{P}_{\x_1,\x_2,t} = $ $\exp([\x_1,\x_2,t]  \hat{\vec{L}}_{\x_1,\x_2,t})$, 
$\hat{\vec{L}}_{\x_1,\x_2,t} \, \f$ $= - \nabla_{\u_1,\u_2} \cdot (  \f \, \tens{\mathcal{F}}(\u_1,\u_2) )$ and 
$\hat{{K}}_{\x_1,\x_2,t}$.

\hspace{10pt} For homogenous turbulence $\x_1 \mapsto \x, \x_2 \mapsto \x + \vec{r}, \u_1 \mapsto \u_0, \u_2 \mapsto \u_{\vec{r}}$, the pdf reduces to $\f(\u_0(t), \u_{\vec{r}}(\vec{r},t) | \vec{r},t)$ \cite{Batchelor_1967, Monin_Yaglom_1971a}, transforming \eqref{eq:Liouville_xtpair}:
\begin{equation}
\begin{cases}
\left. \begin{array}{llll}
%\nabla_{\x} \f   
%&%+  \nabla_{\u_{\0}} \, \cdot ( \f \, \vgrad_{\0} )  
%&    
%&= \vect{0} 
%\\
\nabla_{\vec{r}} \f   
&  
& +  \nabla_{\u_{\vec{r}}} \, \cdot ( \f \, \vgrad_{\r}^\top )     
&= \vect{0} 
\\
\dfrac{\partial \f}{\partial t}   
&+  \nabla_{\u_0} \cdot (\f \, \dot{\u}_{\0} )
&+  \nabla_{\u_{\vec{r}}} \cdot (\f \, \dot{\u}_{\r} )
&=0
\end{array} \right.
\end{cases},
\label{eq:Liouville_rtpair}
\end{equation}
using $\vgrad_{\0}=\nabla_{\x} \u_{\0}=\0$ and $\vgrad_{\r}=\nabla_{\r} \u_{\r}$.
This is induced by the dynamical system $\nabla_{\r,t} [\u_{\0},\u_{\r}] $ $=\tens{\mathcal{F}}(\u_{\0},\u_{\r})$, and gives maps with 
simplified 
homogeneous operators
$\hat{P}_{\r,t} =  \exp([\r,t]  \hat{\vec{L}}_{\r,t})$, 
$\hat{\vec{L}}_{\r,t} \, \f = - \nabla_{\u_{\0},\u_{\r}} \cdot ( \f \, \tens{\mathcal{F}}(\u_{\0},\u_{\r})) $ and 
$\hat{{K}}_{\r,t}$.
In isotropic flow, the velocity gradient is constant in all directions $\vgrad_{\r}=\vec{\delta} \, d|\u_{\vec{r}}|/dr =- \tau \vec{\delta}/ \mu$, allowing further simplification.

\subsection{Velocimetric spatiotemporal fluid flow systems with $n$-wise correlations} The probabilistic framework for the previous system can be extended to triadic, quartic or $n$-wise correlations, based on the pdf $\f(\u_1(\x_1,t), ..., \u_n(\x_n,t) | \x_1, ..., \x_n,t)$ \cite{Batchelor_1967, Monin_Yaglom_1971a, Pope_2000}. The Liouville system is then:
\begin{equation}
\begin{cases}
\left. \begin{array}{llllll}
\nabla_{\x_k} \f   
&+  \nabla_{\u_k} \, \cdot ( \f \, \vgrad_{\x_k}^\top)  
&= \vect{0}, 
\hspace{5pt}
 \forall k \in \{1,...,n\}
\\
\dfrac{\partial \f}{\partial t}   
&+  \sum\limits_{k=1}^n \nabla_{\u_k} \cdot (\f \, \dot{\u}_k)
&=0
\end{array} \right.
\end{cases}.
\label{eq:Liouville_xtnwise}
\end{equation}
The dynamical system can be written as $\nabla_{\x_1,...,\x_n,t} [\u_1,...,\u_n] =\tens{\mathcal{F}}(\u_1,...,\u_n)$. The Liouville system \eqref{eq:Liouville_xtnwise} and $n$-point moments then give the maps $\f=\hat{{P}}_{\x_1,...,\x_n,t} \, \f_{\0}$ and $\obsss=\hat{{K}}_{\x_1,...,x_n,t} \, \obsss_{\0}$, 
%$\f(\u_1,...,\u_n | \x_1,..., \x_n,t)=\hat{{P}}_{\x_1,...,\x_n,t} \, \f(\u_1,...,u_n | \0,...,\0,t)$ and $\varphi(\u_1,...,\u_n, \x_1,...,\x_n,t)=\hat{{K}}_{\x_1,...,x_n,t} \, \varphi(\u_1,...,\u_n, \0,...,\0,0)$, invoking 
based on the observable $\obsss(\u_1, ..., \u_n, \x_1, ..., \x_n,t)$ and the multipoint operators
$\hat{P}_{\x_1,...,\x_n,t} =  \exp([\x_1,...,\x_n,t]  \hat{\vec{L}}_{\x_1,...,\x_n,t})$, 
$\hat{\vec{L}}_{\x_1,...,\x_n,t}\, \f = $ 
$\protect{- \nabla_{\u_1,...,\u_n} \cdot (\f \, \tens{\mathcal{F}}(\u_1,...,\u_n))}$ and 
$\hat{{K}}_{\x_1,...,\x_n,t}$.

\subsection{Phase space systems (including molecular gases)} \label{sect:Hamilt} These can be described by the generalized phase space density $\varphi(\pos(t), \mom(t), t)$ and pdf $\rho(\pos (t) , \mom (t) |t)$ over the phase space $\Omega(t)$, based on the positions $\pos$ and momenta $\mom$ defined as $6N$-vectors to represent $N$ particles. These respectively give the phase space Reynolds transport theorem and Liouville equation (reverting to traditional notation):
\begin{align}
\begin{split}
\frac{d}{dt} \int\limits_{\Omega(t)} \varphi \, d^3 \pos \, d^3 \mom 
&= \int\limits_{\Omega(t)} \dfrac{\partial \varphi}{\partial t} \; d^3 \pos \, d^3 \mom
+ \oint\limits_{\partial \Omega(t)} \varphi \, \frac{d\pos}{dt} \cdot  d^2 \pos 
+ \oint\limits_{\partial \Omega(t)} \varphi \, \frac{d\mom}{dt}  \cdot  d^2 \mom 
\\
&=  \int\limits_{\Omega(t)} \biggl[ \dfrac{\partial \varphi}{\partial t} + \nabla_{\pos} \cdot \biggl( \varphi \, \frac{d\pos}{dt} \biggr) + \nabla_{\mom} \cdot \biggl( \varphi \, \frac{d\mom}{dt} \biggr) \biggr] d^3 \pos \, d^3 \mom 
\end{split}
\label{eq:ReTT_phase}
\\
&\frac{\partial \rho}{\partial t}   +  \nabla_{\pos} \cdot \biggl(\rho \, \frac{d \pos}{d t} \biggr) +  \nabla_{\mom} \cdot \biggl(\rho \, \frac{d \mom}{d t} \biggr)=0.
\label{eq:Liouville_phase}
\end{align}
The latter expresses the dynamical system $d [\pos,\mom] /d t= \tens{\mathcal{F}}(\pos,\mom)$. 
Indeed, the Boltzmann equation can be written in this form \cite{Harris_1971, Diu_etal_2001}. 
The Liouville equation \eqref{eq:Liouville_phase} and moment $\lgenn {\obsss}  \rgenn$ based on the observable $\obsss(\pos,\mom,t)$ then give the phase space maps $\rho(\pos,\mom | t)=\hat{{P}}_{t} \, \rho(\pos,\mom | 0)$ and $\obsss(\pos,\mom, t)=\hat{{K}}_{t} \, \obsss(\pos,\mom, 0)$, invoking temporal forms of the Perron-Frobenius and Koopman operators, %(\S\ref{sect:temp_ops}), 
now with Liouville operator 
$\hat{{L}}_{t}\, \rho= - \nabla_{\pos,\mom} \cdot (  \rho \, \tens{\mathcal{F}}(\pos,\mom)) $.

\hspace{10pt} Making the further assumption of zero-divergence %incompressible or non-dissipative 
flow \cite{Tuckerman_2003, Schuster_Just_2005} $\nabla_{\pos,\mom} \cdot [ \dot{\pos}, \ddot{\pos} ] =0$ (whence $\nabla_{\pos} \cdot \dot{\pos}=0$ and $\nabla_{\mom} \cdot \ddot{\pos} =0$) gives Liouville's theorem as written by Gibbs \cite{Gibbs_1884, Harris_1971}:
\begin{align}
\frac{d\rho}{dt} = \frac{\partial \rho}{\partial t}   + \biggl( \nabla_{\pos} \rho \cdot  \frac{d \pos}{d t}  +  \nabla_{\mom} \rho \cdot  \frac{d \mom}{d t} \biggr) =0.
\label{eq:Liouville_thm}
\end{align}
This is the oft-quoted statement of ``conservation of phase'', a special case of the more general result \eqref{eq:Liouville_phase}. Introducing the Hamiltonian $H(\pos,\mom)$ by the relations:
\begin{align}
\frac{d \pos}{d t} = \frac{\partial H}{\partial \mom}, \quad \frac{d \mom}{d t}  = - \frac{\partial H}{\partial \pos},
\label{eq:Hamilt}
\end{align}
this reduces to the Hamiltonian form \cite{Tolman_1938, Harris_1971, Pottier_2010}:
\begin{align}
\frac{d\rho}{dt} =\frac{\partial \rho}{\partial t}   + \biggl( \nabla_{\pos} \rho \cdot   \frac{\partial H}{\partial \mom}   - \nabla_{\mom} \rho \cdot  \frac{\partial H}{\partial \pos} \biggr)  = 0.
\label{eq:Liouville_Ham}
\end{align}
It is readily verified that the Hamiltonian form \eqref{eq:Hamilt} satisfies zero divergence \cite{Tuckerman_2003}.

\subsection{Lagrangian spatiotemporal fluid flow systems} These can described by the density $\wa(\x(\x_0, t),\x_0, t)$ and pdf $\wp(\x(\x_0, t) | \x_0, t)$ based on the initial position $\x_0$ \cite{Monin_Yaglom_1971a, Spurk_1997}. These give a Lagrangian form of the Reynolds transport theorem:
\begin{align}
\begin{split}
&d \iiint\limits_{\Omega(\x_0,t)} \wa \, d^3 \x 
\\
&=   \biggl[ \iiint\limits_{\Omega(\x_0,t)} \nabla_{\x_0} \wa \; d^3 \x 
+ \oiint\limits_{\partial \Omega(\x_0,t)} \wa \, \J^\top \cdot  d^2 \x \biggr] \cdot d \x_0
+ \biggl[ \iiint\limits_{\Omega(\x_0,t)} \frac{\partial \wa}{\partial t} \; d^3 \x 
+ \oiint\limits_{\partial \Omega(\x_0,t)} \wa \, \u  \cdot  d^2 \x \biggr] dt
\\
&=   \biggl[ \iiint\limits_{\Omega(\x_0,t)} \bigl( \nabla_{\x_0} \wa + \nabla_{\x} \cdot ( \wa \, \J^\top ) \bigr) d^3 \x \biggr] \cdot d \x_0 
+  \biggl[ \iiint\limits_{\Omega(\x_0,t)} \biggl( \frac{\partial \wa}{\partial t} + \nabla_{\x} \cdot ( \wa \, \u ) \biggr) d^3 \x \biggr] dt,
\end{split}
\label{eq:ReTT_Lagr}
\end{align}
where $\Omega(\x_0,t)$ is the domain, $\nabla_{\x_0}$ is based on $\x_0$ and $\J=\nabla_{\x_0} \x$.  Setting $\wa=\wp$ gives the Liouville equation system:
\begin{align}
\begin{cases}
\nabla_{\x_0} \wp   +  \nabla_{\x} \, \cdot ( \wp \, \J^\top )     = \vect{0} \\
\dfrac{\partial \wp}{\partial t}   +  \nabla_{\x} \cdot (\wp \, \u)=0
\end{cases},
\label{eq:Liouville_Lagr}
\end{align}
which can be summarized as:
\begin{align}
\delfour_{\x_0} \wp   +  \nabla_{\x} \, \cdot ( \wp \, \widetilde{\J}^\top )     = \vect{0},
\label{eq:Liouville_Lagr4D}
\end{align}
where $\widetilde{\J}= \delfour_{\x_0} \x$. This is induced by the system  $\widetilde{\J}^\top=\tens{\mathcal{F}}(\x)$. 
The Liouville equation \eqref{eq:Liouville_Lagr4D} and moment $\lgenn {\obsss}  \rgenn$ then give the Lagrangian maps $\wp(\x | \x_0,t)=\hat{{P}}_{\x_0,t} \, \wp(\x | \0,0)$ and $\obsss(\x,{\x}_0,t)=\hat{{K}}_{\x_0,t} \, \obsss(\x,\0,0)$, invoking the Lagrangian operators 
$\hat{P}_{\x_0,t} =  \exp([\x_0,t] \cdot  \hat{\vec{L}}_{\x_0,t})$, 
$\hat{\vec{L}}_{\x_0,t} \, \wp= - \nabla_{\x} \cdot (  \wp \, \widetilde{\J}^\top ) $ and 
$\hat{{K}}_{\x_0,t}$.

\subsection{Spectral flow systems} These have the generalized density $\hphi(\hu(\vkappa,t), \vkappa,t)$ and pdf $\hf(\hu(\vkappa,t) | \vkappa,t)$, where $\hu$ is the modal amplitude vector and $\vkappa$ the wavenumber vector \cite[e.g.,][]{Pope_2000}. This gives the spectral Reynolds transport theorem: 
\begin{align}
\begin{split}
&d \iiint\limits_{\D(\vkappa,t)} \hphi \, d^3 \hu
\\
&=   \biggl[ \iiint\limits_{\D(\vkappa,t)} \nabla_{\vkappa} \hphi \; d^3 \hu + \oiint\limits_{\partial \D(\vkappa,t)} \hphi \, \tens{\Lambda}^\top \cdot  d^2 \hu \biggr] \cdot d \vkappa
+ \biggl[ \iiint\limits_{\D(\vkappa,t)} \frac{\partial \hphi}{\partial t} \; d^3 \hu + \oiint\limits_{\partial \D(\vkappa,t)} \hphi \, \dhu  \cdot  d^2 \hu \biggr] dt
\\
&=   \biggl[ \iiint\limits_{\D(\vkappa,t)} \bigl( \nabla_{\vkappa} \hphi + \nabla_{\hu} \cdot ( \hphi \, \tens{\Lambda}^\top ) \bigr) d^3 \hu \biggr] \cdot d \vkappa
+  \biggl[ \iiint\limits_{\D(\vkappa,t)} \biggl( \frac{\partial \hphi}{\partial t} + \nabla_{\hu} \cdot ( \hphi \, \dhu ) \biggr) d^3 \hu \biggr] dt,
\end{split}
\label{eq:ReTT_Fourier}
\end{align}
where $\D(\vkappa,t)$ is the domain, $\tens{\Lambda}= \nabla_{\vkappa} \hu$ and $\dhu={\partial \hu}/{\partial t}$.  Setting $\hphi=\hf$ gives the Liouville equation system:
\begin{align}
\begin{cases}
\nabla_{\vkappa} \hf   +  \nabla_{\hu} \, \cdot ( \hf \, {\tens{\Lambda}}^\top )     = \vect{0} \\
\dfrac{\partial \hf}{\partial t}   +  \nabla_{\hu} \cdot ( \hf \, \dhu )=0
\end{cases},
\label{eq:Liouville_Fourier}
\end{align}
Taking $\widetilde{\tens{\Lambda}}=\delfour_{\vkappa} \hu$, this is induced by the system $\widetilde{\tens{\Lambda}}^\top=\tens{\mathcal{F}}(\hu)$. 
Eq.\ \eqref{eq:Liouville_Fourier} and the spectral moment $\lgenn {\obsss}  \rgenn$ then give the maps $\hf(\hu | \vkappa,t)=\hat{{P}}_{\vkappa,t} \, \hf(\hu | \0,0)$ and $\obsss(\hu, \vkappa,t)=\hat{{K}}_{\vkappa,t} \, \obsss(\hu, \0,0)$, invoking spectral operators
%Perron-Frobenius, Liouville and Koopman operators, respectively 
$\hat{P}_{\vkappa,t} =  \exp([\vkappa,t]  \hat{\vec{L}}_{\vkappa,t})$, 
$\hat{\vec{L}}_{\vkappa,t}\, \hf = - \nabla_{\hu} \cdot ( \hf \,\tens{\mathcal{F}}(\hu) )$ and 
$\hat{{K}}_{\vkappa,t}$. Many other spectral transformations of dynamical systems, in space and/or time, can be analysed in a similar manner.

\subsection{Coupled chemical reaction and flow systems} These can be described by the generalized density $\chemdens(\m(\x,t),$ $\u(\x,t), \x,t)$ and pdf $\chempdf(\m(\x,t), \u(\x,t) | \x,t)$, where $\m$ is a $k$-dimensional vector of mass (or molar) concentrations of different chemical species. This gives the Reynolds transport theorem:
\begin{align}
\begin{split}
d \int\limits_{\Omega(\x,t)} \chemdens \, d^k \m \, d^3 \u  
&=   \biggl[ \int\limits_{\Omega(\x,t)} \nabla_{\x} \chemdens \; d^k \m \, d^3 \u 
+ \oint\limits_{\partial \Omega(\x,t)} \chemdens \, \M^\top \cdot  d^{k-1} \m 
+ \oint\limits_{\partial \Omega(\x,t)} \chemdens \, \vgrad^\top \cdot   d^2 \u \biggr] \cdot d \x
\\
&\hspace{20pt}+   \biggl[ \int\limits_{\Omega(\x,t)} \frac{\partial \chemdens}{\partial t} \; d^k \m \, d^3 \u 
+ \oint\limits_{\partial \Omega(\x,t)} \chemdens \, \dot{\m} \cdot  d^{k-1} \m 
+ \oint\limits_{\partial \Omega(\x,t)} \chemdens \, \dot{\u} \cdot   d^2 \u \biggr]  d t
\\
&=   \biggl[ \int\limits_{\Omega(\x,t)} \Bigl( \nabla_{\x} \chemdens \;  
+  \nabla_{\m}  \cdot (\chemdens \, \M^\top)  
+ \nabla_{\u}  \cdot (\chemdens \, \vgrad^\top)  \Bigr) d^k \m \, d^3 \u \biggr] \cdot d \x
\\
&\hspace{20pt}+   \biggl[ \int\limits_{\Omega(\x,t)} \Bigr( \frac{\partial \chemdens}{\partial t} 
+ \nabla_{\m}  \cdot (\chemdens \, \dot{\m})
+   \nabla_{\u}  \cdot ( \chemdens \,\dot{\u}) \Bigr) d^k \m \, d^3 \u  \biggr]  d t
\end{split}
\label{eq:ReTT_chemflow}
\raisetag{50pt}
\end{align}
where $\Omega(\x,t)$ is the domain, $\nabla_{\m}=\partial/\partial \m$, $\M = \nabla_{\x}\m$ and $\dot{\m} = \partial \m/\partial t$. Setting $\chemdens=\chempdf$ yields the joint Liouville equations:
\begin{align}
\begin{cases}
\nabla_{\x} \chempdf   +  \nabla_{\m}  \cdot ( \chempdf \, \M^\top )     +  \nabla_{\u}  \cdot ( \chempdf \, \vgrad^\top )   = \vect{0} \\ 
\dfrac{\partial \chempdf}{\partial t}   +  \nabla_{\m} \cdot (\chempdf \, \dot{\m} ) + \nabla_{\u} \cdot (\chempdf \, \dot{\u})=0
\end{cases},
\label{eq:Liouville_chemflow}
\end{align}
which are induced by $\delfour_{\x} [\m,\u]=\tens{\mathcal{F}}(\vect{m}, \u)$. 
From the moment $\lgenn {\obsss}  \rgenn$ based on the observable $\obsss(\m, \u, \x,t)$, 
these in turn give the maps $\chempdf(\m, \u | \x,t)=\hat{{P}}_{\x,t} \, \chempdf(\m, \u | \0,0)$ and $\obsss(\m, \u,{\x},t)=\hat{{K}}_{\x,t} \, $ $\obsss (\m, \u,\0,0)$, invoking spatiotemporal Perron-Frobenius and Koopman operators similar to those in part (\ref{sect:spatiotempvel}), now with Liouville operator 
$\hat{\vec{L}}_{\x,t}\, \chempdf= - \nabla_{\m, \u} \cdot (\chempdf \, \tens{\mathcal{F}}(\vect{m}, \u)) $. 
These relations give a very different approach for the probabilistic analysis of chemical reaction dynamical systems (c.f. \cite{Gorban_etal_2004}).

\subsection{Chemical reaction-dependent flow systems} In the probabilistic description these are described by the probability $r$-form $\rho^r = \rho^r_{\x,\m,t}$, a function of the local velocity $\u(\x,\m,t)$ and conditioned on $\x$, $\m$ and $t$. Eq.\ \eqref{eq:Liouville_diffform_aug} gives the Liouville equation system:
\begin{align}
 \mathcal{L}_{\u \comp (\x,\m,t)}^{(\x, \m,t )} \rho^r 
=  \nabla_{\x,\m,t} \; \rho^r + i_{\u \comp (\x,\m,t)}^{(\x, \m,t)}  \, d \rho^r + d ( i_{\u \comp (\x,\m,t)}^{(\x,\m,t)} \, \rho^r)
= \vec{0}.
\label{eq:Liouville_chemflow2}
\end{align}
For global velocity coordinates and pdf $\chempdf(\u | \x, \m,t)$, these reduce to analogs of \eqref{eq:Liouville_uxt} and also $\nabla_{\m} \chempdf   +  \nabla_{\u} \, \cdot ( \chempdf \, \K^\top ) = \vect{0} $ with $\K =  \nabla_{\m} \u$ for variations in chemical species, induced by the system $[\nabla_{\x}, \nabla_{\m},\partial/\partial t]^\top \u= \delfour_{\x,\m} \u = \tens{\mathcal{F}}(\u)$. From the moment $\lgenn {\obsss}  \rgenn$ based on the observable $\obsss(\u, \x, \m,t)$, these give the spatiochemicotemporal maps $\chempdf(\u | \x,\m, t)= $ $\hat{{P}}_{\x,\m, t} \, $ $\chempdf(\u | \0,\0,0)$ and $\obsss(\u,{\x},\m, t)=$ $\hat{{K}}_{\x,\m, t}$ $\obsss(\u,\0,\0,0)$, 
invoking the spatiochemicotemporal operators %Perron-Frobenius, Liouville and Koopman operators
$\hat{P}_{\x,\m,t} =  \exp([\x,\m,t] \,  \hat{\vec{L}}_{\x,\m,t})$, 
$\hat{\vec{L}}_{\x,\m,t} \chempdf$ $= - \nabla_{\u} \cdot ( \chempdf \, \tens{\mathcal{F}}(\u)) $ and 
$\hat{{K}}_{\x,\m,t}$.

%\end{list}
%Each example will give corresponding Perron-Frobenius and Koopman operators, respectively for the evolution of the pdf and observable in $\C$-parameter space.

%\vspace{-10pt}

\section{Conclusions} \label{sect:concl} 

In this review, we present a unified framework for the derivation of Rey\-nolds transport theorems, Liouville equations and Perron-Frobenius and Koopman operators in different spaces, each of which provides a continuous map between different points or domains in these spaces, described by the integral curves of a vector or tensor field. These extend the well-known volumetric-temporal formulations of these theorems and operators, firstly to a velocimetric-spatial formulation, and then to more general parametric formulations. 
The velocimetric-spatial formulation provides spatial maps between positions in a time-independent flow field, connected by a velocity gradient tensor field, while the parametric formulations provide parametric maps between positions in a manifold, connected by a vector or tensor field. 
The most general parametric formulation invokes multivariate extensions of several exterior calculus operators including the flow, pullback, pushforward, Lie derivative and interior product, and the concept of a probability differential $r$-form. 
The analyses reveal the existence of multivariate continuous (Lie) symmetries  -- in time, space and/or parametric coordinates -- induced by a vector or tensor field associated with a conserved quantity, which will be manifested in the form of subsidiary integral conservation laws. These findings significantly expand the scope of available methods for the reduction of fluid flow and dynamical systems. 

To demonstrate their insights and breadth, the findings are used to present new formulations of these theorems and operators for several prominent case study systems in fluid mechanics and dynamical systems. These include spatial (time-independent) and spatiotemporal fluid flows, flow systems with pairwise or $n$-wise correlations, phase space systems, Lagrangian flows, spectral flows, and systems with coupled chemical reaction and flow processes. 

This study opens a number of important avenues for further research. The new formulations of the Reynolds transport theorem \eqref{eq:Re_tr_u_3D}, \eqref{eq:Re_tr_diffform}, \eqref{eq:Re_tr_diffform_aug} and \eqref{eq:Re_tr_vect_calc_gen} reveal the existence of alternative formulations of the known integral conservation laws based on different fluid densities, which require more detailed examination.
The connections between the multivariate continuous (Lie) symmetries revealed in \S\ref{sect:general}, and recent analyses of one-parameter Lie symmetries of the Reynolds transport theorem and related conservation laws \cite[e.g.,][]{Haltas_Ulusoy_2015, Ercan_Kavvas_2015, Ercan_Kavvas_2017}, as well as to other diffeomorphisms \cite{Bowen_Ruelle_1975, Ruelle_1976}, warrant further study. Furthermore, the connections between spatiotemporal Lie symmetries and coherent structures, for example as identified in the Navier-Stokes equations \cite[e.g.,][]{Sharma_etal_2016}, require further exploration. 
Finally, the new spatial and parametric Liouville equations and Perron-Frobenius and Koopman operators presented in \S\ref{sect:spat_anal}-\ref{sect:general} offer an assortment of new tools for the analysis of a wide variety of fluid flow and dynamical systems. 

\section*{Funding}
This work was funded by the Australian Research Council Discovery Projects grant DP140104402, and by Institute Pprime, CNRS, (former) R\'egion Poitou-Charentes and l'Agence Nationale de la Recherche Chair of Excellence (TUCOROM), all in Poitiers, France, and CentraleSup\'elec, Gif-sur-Yvette, France.

%%%%%%%%%%%%%%%%%%%%%%%%%%%%%%%%%%%%%%%%%%%
\section*{Acknowledgments}
This work was largely completed during two periods of sabbatical leave by RN in France in 2014 and 2018, supported by UNSW and French funding sources. We thank Daniel Bennequin, Juan Pablo Vigneaux and the group at IMJ-PRG, Universit\'e Paris Diderot, as well as Andreas Spohn, Markus Abel, Nicolas H\'erouard, Tony Kennedy and Ali Ghaderi for comments on early versions of the manuscript, Stephen Whitaker for a difficult-to-find reference, and Hayward Maberley of the UNSW Canberra library for literature support.

%%%%%%%%%%%%%%%%%%%%%%%%%%%%%%%%%%%%%%%%%%%
\newpage
%\newpage 

\appendix
\renewcommand{\thesection}{Appendix \Alph{section}}
\renewcommand{\thesubsection}{\Alph{section}.\arabic{subsection}}
\renewcommand{\thefigure}{\Alph{section}.\arabic{figure}}
\renewcommand{\theequation}{\Alph{section}.\arabic{equation}}

%Apx A
\section{\label{sect:Apx_Re_temp} Proofs of the Volumetric-Temporal Reynolds Transport Theorem} 

We first revisit two proofs of the extended form of the volumetric-temporal Reynolds transport theorem \eqref{eq:Re_tr}, based respectively on continuum mechanics \cite{Reynolds_1903, Prager_1961, Sokolnikoff_Redheffer_1966, White_1986, Bear_Bachmat_1991, Tai_1992, Streeter_etal_1998, Fox_etal_2004, Leal_2007, Munson_etal_2010} or Lagrangian coordinate transformation \cite{Aris_1962, Slattery_1972, Flanders_1973, Spurk_1997, Slattery_1999, Dvorkin_Goldschmidt_2005}.

\subsection{Proof 1: Continuum Mechanics}

For the first proof, we adopt the continuum assumption and Eulerian description of fluid flow, in which the generalized density $\alpha(\x,t)$ of a conserved property in volumetric space can be represented as a function of Cartesian position $\x = [x,y,z]^\top$ and time $t$ within a prescribed geometric region (control volume) as the fluid moves past.  We also consider the Lagrangian or material description of fluid flow, in which each fluid element is assigned a characteristic label, for example its position vector $\x_0$ at time $t_0$. The position of each fluid element at time $t$ is then $\x(\x_0,t)$, giving the fluid element velocity ${\u}^{L}(\x_0, t) = \partial \x(\x_0,t)/\partial t$. The two descriptions can then be united by the equivalence of the material and Eulerian velocities \cite{Spurk_1997, Durst_2008}:
\begin{align}
 \frac{\partial \x(\x_0, t)}{\partial t} = {\u}^L (\x_0, t) = \u(\x,t).
 \label{eq:Eul_Lagr}
\end{align}
The two descriptions in \eqref{eq:Eul_Lagr} also establish the equivalence of the substantial and total derivatives, here based on the intrinsic fluid velocity:
\begin{align}
\frac{D \alpha(\x,t)}{Dt} 
=\frac{\partial \alpha}{\partial t} + \nabla_{\x} \alpha \cdot \u (\x,t)
=\frac{\partial \alpha}{\partial t} + \nabla_{\x} \alpha \cdot \frac{\partial \x(\x_0,t)}{\partial t}
= \frac{d \alpha(\x,t)}{d t} .
\label{eq:total_deriv1}
\end{align}

We here adopt the standard treatment used in fluid mechanics, to consider the motion of a contiguous body of fluid $\Omega(t)$ -- commonly referred to as the {\it fluid volume} -- within its surrounding volumetric space. In the Eulerian description, the fluid volume is considered to be in motion with local velocity field $\u(\x, t)$ through a prescribed region of interest, known as the {\it control volume}. For the present analysis, we also consider a moving and smoothly deforming control volume with local velocity field $\u_{CV}$. A schematic diagram of this situation is shown in Figure \ref{fig:moving_CV}. 
\begin{figure}[h]
%\begin{center}
 \begin{picture}(400,110)
 \put(80,0){ \includegraphics[height=100pt]{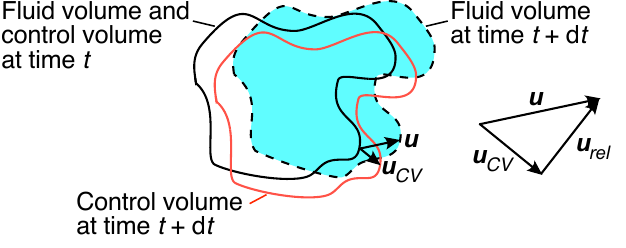}}
 \end{picture}
%\end{center}
\caption{Schematic diagram of the motions of a fluid volume and a moving control volume (redrawn after 
\citep[][Fig. 3.5]{White_1986} or \citep[][Fig. 4.21]{Munson_etal_2010}).}
\label{fig:moving_CV}
\end{figure}
As shown, at time $t$, both the fluid volume and control volume occupy the position shown with a black border. By time $t + dt$, the fluid volume has moved to the position shown in blue with a dashed border (here drawn without any change in shape), while the control volume has moved to the position shown with a red border (also drawn without change in shape). We are interested in the velocity of the fluid relative to the moving control volume, here denoted $\u_{rel}$. From the vector diagram shown -- for the Cartesian coordinate system used here -- it is evident that $\u_{CV}+\u_{rel}=\u$, hence:
\begin{equation}
\u_{rel}=\u-\u_{CV}.
\label{eq:rel_vel}
\end{equation}
Eq.\ \eqref{eq:rel_vel} is implied or derived in several fluid mechanics references that consider moving control volumes \citep[e.g.][]{White_1986, Streeter_etal_1998, Fox_etal_2004, Munson_etal_2010}, including the original analysis by Reynolds \cite{Reynolds_1903}. For curvilinear coordinate systems, an extended relation has also been presented \cite{Dvorkin_Goldschmidt_2005}. For a stationary control volume $\u_{CV}=\0$, \eqref{eq:rel_vel} simplifies to $\u_{rel} = \u$.  We further see that even if both the fluid volume and control volume are smoothly deforming, \eqref{eq:rel_vel} remains unchanged, except that the control volume velocity $\u_{CV}$ is no longer constant but becomes a velocity field $\u_{CV}(\x,t)$ \cite{White_1986, Munson_etal_2010}.

Consider the motion of the fluid relative to the moving and smoothly deforming control volume, for which a schematic diagram of several streamlines is given in Figure \ref{fig:boundary_els_vol}(a).  
%\newpage
\begin{figure}[h]
\begin{center}
 \begin{picture}(400,140)
 \put(50,0){ \includegraphics[height=130pt]{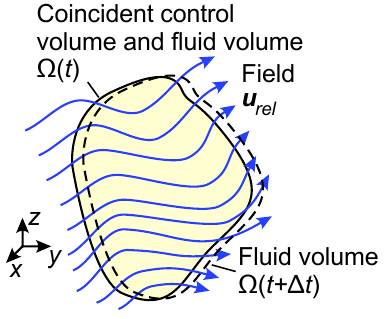}}
 \put(50,0){\small (a)}
 \put(230,0){ \includegraphics[height=100pt]{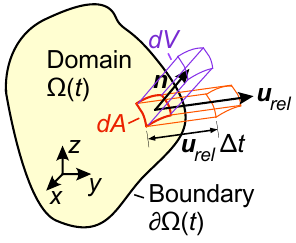} }
 \put(230,0){\small (b)} 
 \end{picture}
\end{center}
\caption{Schematic diagrams of (a) a velocity field for fluid flow relative to a geometric control volume, and (b) a volume element on the boundary induced by the flow.}
\label{fig:boundary_els_vol}
\end{figure}
For the moving frame of reference, it is necessary to redefine the substantial and total derivatives in terms of the relative velocity field, as given in \eqref{eq:total_deriv} \cite{Truesdell_Toupin_1960, Slattery_1972, Dvorkin_Goldschmidt_2005}:
\begin{align}
\frac{D \alpha(\x,t)}{Dt} 
=\frac{\partial \alpha}{\partial t} + \nabla_{\x} \alpha \cdot \u_{rel} (\x,t)
=\frac{\partial \alpha}{\partial t} + \nabla_{\x} \alpha \cdot \biggl(\frac{\partial \x(\x_0,t)}{\partial t}\biggr)_{rel}
= \frac{d \alpha(\x,t)}{d t} .
\label{eq:total_deriv2}
\end{align}
We examine the total conserved quantity $Q(t)$ in the fluid volume, given by the volumetric integral of the generalized density $\alpha(\x,t)$:
\begin{align}
Q(t) = \iiint\limits_{\Omega(t)} \alpha(\x,t) \, dV,
\end{align}
where %$V$ is the fluid or material volume, 
$dV=d^3\x=dxdydz$ is an infinitesimal volume element. Since $Q(t)$ is a function only of time, it is possible to write its total derivative as:
\begin{align}
\frac {d Q(t) }{dt}
= \frac {d }{dt} \iiint\limits_{\Omega(t)} \alpha(\x,t) \, dV
= \lim\limits_{\Delta t \to 0} \frac{1}{\Delta t} 
\biggl[ \iiint\limits_{\Omega(t + \Delta t)} \alpha(\x,t +\Delta t) \, dV - \iiint\limits_{\Omega(t)} \alpha(\x,t) \, dV \biggr].
\label{eq:Re_proof1}
\end{align}
%where $d/dt$ is the total derivative, often written $D/Dt$ and referred to as the substantial or material derivative (moving with the flow). 
The second form follows from the definition of the derivative, with $\Omega(t + \Delta t)$ here interpreted as the fluid volume (relative to the moving control volume) at time $t + \Delta t$. By a Taylor expansion \cite{Tai_1992}:
\begin{align}
\alpha(\x,t +\Delta t) = \alpha(\x,t) + \frac{\partial \alpha(\x,t) }{\partial t} \Delta t + \frac{1}{2} \frac{\partial^2 \alpha(\x,t) }{\partial t^2} (\Delta t)^2 + ...
\end{align}
%where $\partial/\partial t$ is the local time derivative (at fixed $\x$). 
Substitution into \eqref{eq:Re_proof1} gives
\begin{align}
\begin{split}
&\frac {d Q(t) }{dt}
\\&= \lim\limits_{\Delta t \to 0} \frac{1}{\Delta t} 
\biggl[ \iiint\limits_{\Omega(t + \Delta t)} 
\biggl( 
\alpha(\x,t) + \frac{\partial \alpha(\x,t) }{\partial t} \Delta t + \frac{1}{2} \frac{\partial^2 \alpha(\x,t) }{\partial t^2} (\Delta t)^2 + ...
\biggr) \, dV - \iiint\limits_{\Omega(t)} \alpha(\x,t) \, dV \biggr]
\\
&= \lim\limits_{\Delta t \to 0} \frac{1}{\Delta t} 
\iiint\limits_{\Omega(t +\Delta t)}
\frac{\partial \alpha(\x,t) }{\partial t} \Delta t \, dV
+
\lim\limits_{\Delta t \to 0} \frac{1}{\Delta t} 
\biggl[ \iiint\limits_{\Omega(t + \Delta t)} \alpha(\x,t)  dV - \iiint\limits_{\Omega(t)} \alpha(\x,t) \, dV \biggr]
\\
&= 
\iiint\limits_{\Omega(t )}
\frac{\partial \alpha(\x,t) }{\partial t}  \, dV
+
\lim\limits_{\Delta t \to 0} \frac{1}{\Delta t} \;
\iiint\limits_{\Omega(t + \Delta t) - \Omega(t)} \alpha(\x,t) \, dV .
\end{split}
\label{eq:Re_proof2}
\end{align}
Note the second-order and higher derivatives vanish in the limit.

We see that the second integral reduces to that of a thin domain (of variable sign) adjacent to the boundary, created by the relative motion of the fluid between $t$ and $t+\Delta t$. To examine this, consider a volume element $dV$ in this boundary region, illustrated schematically in Figure \ref{fig:boundary_els_vol}(b) \citep[e.g.][]{Prager_1961, Sokolnikoff_Redheffer_1966, White_1986, Bear_Bachmat_1991, Tai_1992, Streeter_etal_1998, Anderson_2001, Fox_etal_2004, Leal_2007, Munson_etal_2010}.  At time $t$, the rate of change of the fluid position relative to the control volume boundary is $(\partial \x(\x_0, t)/\partial t)_{rel} = \u_{rel} $. In time $\Delta t$, this will induce the displacement $\u_{rel} \Delta t$ in the direction of $\u_{rel}$. The volumetric element $dV$ is therefore the inclined cylinder formed by projection of the boundary element $dA$ over the inclined distance $\u_{rel} \Delta t$, accounting for its height in the direction of the outward unit normal $\n$. This gives the intrinsic length $d \ell = \u_{rel} \Delta t \cdot \n$, hence $dV = d \ell dA = \u_{rel} \Delta t \cdot \n \, dA$. For flow out of the control volume, this will be positive, while for inwards flow, this will be negative. Thus \eqref{eq:Re_proof2} reduces to
\begin{align}
\begin{split}
\frac {d Q(t) }{dt}
&= 
\iiint\limits_{\Omega(t )}
\frac{\partial \alpha(\x,t) }{\partial t}  \, dV
+
\lim\limits_{\Delta t \to 0} \frac{1}{\Delta t} \;
\oiint\limits_{\partial \Omega(t)} \alpha(\x,t) \, \u_{rel} \Delta t \cdot \n \, dA
\\
&= 
\iiint\limits_{\Omega(t )}
\frac{\partial \alpha(\x,t) }{\partial t}  \, dV
+
\oiint\limits_{\partial \Omega(t)} \alpha(\x,t) \, \u_{rel} \cdot \n \, dA,
\end{split}
\label{eq:Re_proof3}
\end{align} 
where $\partial \Omega(t)$ is the domain boundary. 
This directly gives the first form of the extended Reynolds transport theorem in \eqref{eq:Re_tr}. The second form in \eqref{eq:Re_tr} is obtained by the divergence theorem. $\blacksquare$

\subsection{Proof 2: Lagrangian Coordinate Transformation}

For the second proof, we follow a Lagrangian description \cite{Flanders_1973,Spurk_1997} and consider Lagrangian coordinates $\x_0=[x_0,y_0,z_0]^\top$ with the fixed original domain $\Omega(t_0)$. Rewriting the first two parts of \eqref{eq:Re_proof1} in Lagrangian coordinates gives
\begin{align}
\frac {d Q(t) }{dt}
= \frac {d }{dt} \iiint\limits_{\Omega(t)} \alpha(\x,t) \, dV
= \frac {d }{dt} \iiint\limits_{\Omega(t_0)} \alpha(\x(\x_0,t),t) \biggl | \frac{\partial \x}{\partial \x_0} \biggr | dV_0,
\label{eq:Re_proof_B1}
\end{align}
where $V_0=dx_0 dy_0 dz_0$ and $ | {\partial \x}/{\partial \x_0}  |$ is the determinant of the Jacobian matrix of Eulerian with respect to Lagrangian coordinates. The domain in the last part of \eqref{eq:Re_proof_B1} is now independent of time, so the derivative can be brought inside the integral. Furthermore, since fluid elements are unique and indivisible, the Jacobian ${\partial \x}/{\partial \x_0}$ will be everywhere non-singular. The derivative of the determinant, in the moving frame of reference, is \cite{Flanders_1973, Zwillinger_2003}:
\begin{align}
\frac{d}{dt} \biggl | \frac{\partial \x}{\partial \x_0} \biggr | 
= \biggl | \frac{\partial \x}{\partial \x_0} \biggr | \, \nabla_{\x} \cdot \biggl( \frac{\partial \x}{\partial t} \biggr)_{rel}
= \biggl | \frac{\partial \x}{\partial \x_0} \biggr | \, \nabla_{\x} \cdot \u_{rel}.
\label{eq:det_rel}
\end{align}
Expanding \eqref{eq:Re_proof_B1} using \eqref{eq:total_deriv2} and \eqref{eq:det_rel}, and then reverting back to the variable domain, gives:
\begin{align}
\begin{split}
\frac {d Q(t) }{dt}
%&=  \iiint\limits_{\Omega(t_0)} 
%\biggl [ \frac {d \alpha }{dt} \biggl | \frac{\partial \x}{\partial \x_0} \biggr | 
%+ \alpha \frac {d }{dt} \biggl | \frac{\partial \x}{\partial \x_0} \biggr | 
%\biggr ] dV_0
&=  \iiint\limits_{\Omega(t_0)} 
\biggl [ \biggl( \frac {\partial \alpha }{\partial t} + (\nabla_{\x} \alpha) \cdot \u_{rel}
\biggr) \biggl | \frac{\partial \x}{\partial \x_0} \biggr | 
+ \alpha \biggl | \frac{\partial \x}{\partial \x_0} \biggr | \, \nabla_{\x} \cdot \u_{rel}
\biggr ] dV_0
\\
&=  \iiint\limits_{\Omega(t)} 
\biggl [ \frac {\partial \alpha }{\partial t} +(\nabla_{\x} \alpha) \cdot \u_{rel}
+ \alpha  \nabla_{\x} \cdot \u_{rel}
\biggr ] dV.
\end{split}
\label{eq:Re_proof_B2}
\end{align}
This is identical to the second form of the extended Reynolds transport theorem in \eqref{eq:Re_tr}. The first form %in \eqref{eq:Re_tr} 
is then obtained by the divergence theorem. $\blacksquare$

%In a related formulation by \cite{Spurk_1997}, \eqref{eq:Re_proof_B1} is instead converted to an integral over the original fluid mass, thereby independent of time:
%\begin{align}
%\frac {d Q(t) }{dt}
%= \frac {d }{dt} \iiint\limits_{\Omega(t)} \alpha \, dV
%= \frac {d }{dt} \int\limits_{M} \, \alpha \, \frac{dV}{dm} \, dm
%= \frac {d }{dt} \int\limits_{M} \, \alpha \, \frac{1}{\rho} \, dm
%= \int\limits_{M} \, \biggl(  \frac {\partial c}{\partial t} \frac{1}{\rho} + \alpha \, \frac {\partial \frac{1}{\rho} }{\partial t}  \biggr) \, dm
%\label{eq:Re_proof_B3}
%\end{align}
%where $dm$ is the mass element, $M$ is the control mass and $\rho$ is the fluid density. ***From the Jacobi relation \eqref{eq:det_rel} now applied to a single-element determinant:
%\begin{align}
%\frac{d}{dt} \biggl | \frac{\partial \x}{\partial \x_0} \biggr | 
%= \biggl | \frac{\partial \x}{\partial \x_0} \biggr | \nabla_{\x} \cdot \biggl( \frac{\partial \x}{\partial t} \biggr)_{rel}
%= \biggl | \frac{\partial \x}{\partial \x_0} \biggr | \nabla_{\x} \cdot \u_{rel}
%\\
%\frac{d}{dt}  \frac{dV}{dm}
%=\frac{dV}{dm} \frac{d}{dV} \frac{dV}{dt}
%\end{align}
%\begin{align}
%\frac{d}{dt}  \frac{1}{\rho}
%=\frac{\partial}{\partial t}  \frac{1}{\rho} +  \nabla_{\x}  \frac{1}{\rho}  \cdot \u_{rel} 
%\end{align}

%\begin{align}
%\frac {d Q(t) }{dt}
%= \int\limits_{M} \, \biggl(  \frac {\partial \alpha}{\partial t} \frac{1}{\rho} + \alpha \frac{d}{dt}  \frac{1}%{\rho} - \alpha  \nabla_{\x}  \frac{1}{\rho}  \cdot \u_{rel}  \biggr) \, dm
%\label{eq:Re_proof_B4}
%\end{align}

%***? FIX***

%Apx B
\section{\label{sect:Apx_Re_spatial} Proofs of the Velocimetric-Spatial Reynolds Transport Theorem} 
\setcounter{equation}{0}

We now provide two proofs of the velocimetric-spatial Reynolds transport theorem  \eqref{eq:Re_tr_u_3D} for time-independent flows, based respectively on arguments from continuum mechanics and a coordinate transformation method. These follow the essential details of the volumetric-temporal proofs in   \ref{sect:Apx_Re_temp}. 

\subsection{Proof 1: Continuum Mechanics (Steady Flow)}

\begin{figure}[h]
\begin{center}
 \begin{picture}(400,130)
 \put(50,0){ \includegraphics[height=130pt]{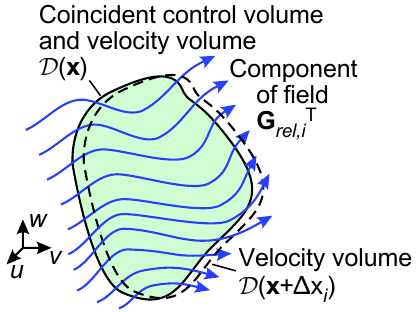}}
 \put(50,0){\small (a)}
 \put(230,0){ \includegraphics[height=100pt]{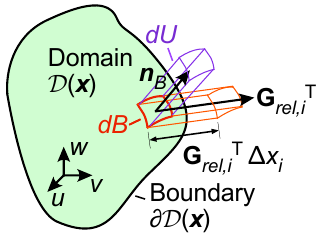} }
 \put(230,0){\small (b)} 
 \end{picture}
\end{center}
\caption{Schematic diagrams showing the $i$th component of (a) a velocity gradient field for steady flow relative to a velocimetric control volume, and (b) a velocity volume element on the domain boundary induced by the tensor field.}
\label{fig:boundary_els_vel}
\end{figure}

For the first proof we again make the continuum assumption, and consider an Eulerian phase space (volumetric and velocimetric) description of fluid flow, in which the density $\varphi(\u,\x)$ of a conserved quantity in velocimetric space can be represented as a  function of velocity $\u = [u,v,w]^\top$ and position $\x = [x,y,z]^\top$, for which $[\u,\x]^\top$ gives a Cartesian coordinate system. In this description, $\varphi(\u,\x)$ can be interpreted physically as the conserved quantity carried per unit of velocity and volume space by a fluid element with a velocity between $\u$ and $\u + d\u$ at the position between $\x$ and $\x + d\x$. This representation assumes time-independent flow, for example a steady velocity field. %, although in the probabilistic formulation $\varphi(\u,\x, t(\u))=\varphi(\u,\x)$ this can be relaxed to consider statistical rather than strict stationarity. 
We also consider an alternative description in which the velocity at each point $\u^R(\u_0, \x)$ is a function of the velocity $\u_0$ at some reference location $\x_0$, for which the spatial coordinates $\x$ are independent variables. The two descriptions are united by the equivalence of the velocity gradient tensor:%\footnote{For this exposition we could have adopted a different symbol for the velocity gradient, but have chosen not to do so}:
\begin{align}
\frac{\partial \u^R}{\partial \x} (\u_0, \x) = \vgrad (\u,\x) .
\end{align}
As with the temporal analysis, we also incorporate a spatially varying velocimetric control volume which undergoes a changing reference velocity gradient $\vgrad_{CV}$, giving the relative gradient  $\vgrad_{rel} = \vgrad - \vgrad_{CV}$.  %For this we again adopt (implicit) redefined velocity coordinates relative to the moving domain.  
A set of field lines for such a system -- for example for one spatial coordinate of the tensor -- is illustrated schematically in Figure \ref{fig:boundary_els_vel}(a). 
 
Now consider the integral of the generalized phase space density $\varphi(\u,\x)$ over the velocity domain:
\begin{align}
\F(\x)= \iiint\limits_{\D(\x)} \varphi(\u,\x) \, dU,
\label{eq:bint_vel}
\end{align}
where $dU=d^3\u =dudvdw$ is the velocity volume element. $\F(\x)$ corresponds to the total conserved quantity (integrated over the velocity space) per unit volume at the position between $\x$ and $\x + d\x$, i.e., it is equivalent to the generalized volumetric density $\alpha(\x)$, in this case for a time-independent system. 
Since $\F(\x)$ is multivariate, it is not possible to define the total derivative, but we can directly consider its differential:
\begin{align}
 d \F(\x) 
= \nabla_{\x} \F(\x) \cdot d\x
= \sum\limits_{i=1}^3  \frac{\partial \F(\x)}{\partial x_i}  d x_i
= \sum\limits_{i=1}^3  \frac{\partial}{\partial x_i} \biggl[ \iiint\limits_{\D(\x)} \varphi(\u,\x) \, dU \biggr] d x_i,
\label{eq:Re_proof0_vel}
\end{align}
using $\x = [x_1,x_2,x_3]^\top$. 
Each partial derivative is, by definition:
\begin{align}
%\frac{\partial}{\partial x_i} \biggl[ \iiint\limits_{\D(\x)} \varphi(\u,\x) \, dU \biggr] 
\frac{\partial \F(\x)}{\partial x_i} 
= \lim\limits_{\Delta x_i \to 0} \frac{1}{\Delta x_i} 
\biggl[ \iiint\limits_{\D(\x + \Delta x_i)} \varphi(\u,\x +\Delta x_i) \, dU - \iiint\limits_{\D(\x)} \varphi(\u,\x) \, dU \biggr],
\label{eq:Re_proof1_vel}
\end{align}
where we use the notation $(\x + \Delta x_i)$ to indicate $(x+\Delta x, y,z)$, $(x, y + \Delta y ,z)$ or $(x, y,z+\Delta z)$ respectively for $x_i \in [x_1,x_2,x_3]$. $\D(\x + \Delta x_i)$ is then interpreted as the velocity domain shifted to position $\x + \Delta x_i$. By a one-dimensional Taylor expansion -- or a multi-dimensional expansion with non-zero translation in only one coordinate -- we obtain \cite{Tai_1992}:
\begin{align}
\varphi(\u,\x +\Delta x_i) = \varphi(\u,\x) + \frac{\partial \varphi(\u,\x) }{\partial x_i} \Delta x_i + \frac{1}{2} \frac{\partial^2 \varphi(\u,\x) }{\partial x_i^2} (\Delta x_i)^2 + ...
\label{eq:Taylor_x}
\end{align}
Substitution into \eqref{eq:Re_proof1_vel} gives
\begin{align}
\begin{split}
&\frac{\partial \F(\x)}{\partial x_i} 
%\frac{\partial}{\partial x_i} \biggl[ \iiint\limits_{\D(\x)} \varphi(\u,\x) \, dU \biggr] 
\\&=
\lim\limits_{\Delta x_i \to 0} \frac{1}{\Delta x_i} 
\biggl[ \iiint\limits_{\D(\x + \Delta x_i)}
\biggl( 
\varphi(\u,\x) + \frac{\partial \varphi(\u,\x) }{\partial x_i} \Delta x_i + \frac{1}{2} \frac{\partial^2 \varphi(\u,\x) }{\partial x_i^2} (\Delta x_i)^2 + ...
\biggr) \, dU 
\\& \hspace{20pt} - \iiint\limits_{\D(\x)} \varphi(\u,\x) \, dU  \biggr]
\\
&= \lim\limits_{\Delta x_i \to 0} \frac{1}{\Delta x_i} 
 \iiint\limits_{\D(\x + \Delta x_i)}
\frac{\partial \varphi(\u,\x) }{\partial x_i} \Delta x_i  \, dU
+
\lim\limits_{\Delta x_i \to 0} \frac{1}{\Delta x_i} 
\biggl[ \iiint\limits_{\D(\x + \Delta x_i)} \varphi(\u,\x)  dU - \iiint\limits_{\D(\x)} \varphi(\u,\x) \, dU \biggr]
\\
&= 
\iiint\limits_{\D(\x )}
\frac{\partial \varphi(\u,\x) }{\partial x_i}  \, dU
+
\lim\limits_{\Delta x_i \to 0} \frac{1}{\Delta x_i} 
\iiint\limits_{\D(\x + \Delta x_i) - \D(\x)} \varphi(\u,\x) \, dU,
\end{split}
\label{eq:Re_proof2_vel}
%%\raisetag{30pt}
\end{align}
where again the second-order and higher derivatives vanish in the limit. 

We again see that the second integral reduces to that of a thin domain (of variable sign) in velocimetric space adjacent to the boundary, created by translation of the field (relative to the domain) between $\x$ and $\x+\Delta x_i$. Consider a velocimetric element $dU$ in this boundary region, illustrated schematically in Figure \ref{fig:boundary_els_vel}(b), with spatial displacement in only one component $x_i \in \{x_1,x_2,x_3\}$.  At position $\x$, the velocity gradient relative to the boundary $\partial \D(\x)$ is $\vgrad_{rel,i} = \partial \u_{rel}/\partial x_i$, for consistency here taken as a row vector. Over the distance $\Delta x_i$, this will induce the change in velocity $\vgrad_{rel,i}^\top \Delta x_i$ in the direction described by $\vgrad_{rel,i}^\top$. The velocimetric element $dU$ is therefore the inclined cylinder formed by projection of the boundary element $dB$ over the inclined distance $\vgrad_{rel,i}^\top \Delta x_i$, accounting for its height in the direction of the outward unit normal $\n_B$. This gives the intrinsic length $d \ell_B = \vgrad_{rel,i}^\top \Delta x_i \cdot \n_B$, hence $dU = d \ell_B dB = \vgrad_{rel,i}^\top \Delta x_i \cdot \n_B dB$.  Thus \eqref{eq:Re_proof2_vel} reduces to
\begin{align}
\begin{split}
\frac{\partial \F(\x)}{\partial x_i} 
&= 
\iiint\limits_{\D(\x )}
\frac{\partial \varphi(\u,\x) }{\partial x_i}  \, dU
+
\lim\limits_{\Delta x_i \to 0} \frac{1}{\Delta x_i} 
\oiint\limits_{\partial \D(\x)} \varphi(\u,\x) \, \vgrad_{rel,i}^\top \Delta x_i \cdot \n_B \, dB
\\
&= 
\iiint\limits_{\D(\x )}
\frac{\partial \varphi(\u,\x) }{\partial x_i}  \, dU
+
\oiint\limits_{\partial \D(\x)} \varphi(\u,\x) \, \vgrad_{rel,i}^\top  \cdot \n_B \, dB.
\end{split}
\label{eq:Re_proof3_vel}
\end{align} 
Assembling these into \eqref{eq:Re_proof0_vel}, we obtain the differential
\begin{align}
 d \F(\x) 
= \sum\limits_{i=1}^3  \biggl[
\iiint\limits_{\D(\x )}
\frac{\partial \varphi(\u,\x) }{\partial x_i}  \, dU
+
\oiint\limits_{\partial \D(\x)} \varphi(\u,\x) \, \vgrad_{rel,i}^\top \cdot \n_B \, dB
\biggr] d x_i.
\label{eq:Re_proof4_vel}
\end{align}
The divergence theorem can be extended (strictly, in the form of Stokes' theorem) to any metric space \cite[e.g.,][]{Cartan_1945}. Applying its three-dimensional velocimetric formulation 
%$\oiint\nolimits_{\partial \D(\x)} \varphi(\u,\x) \cdot \n_B \, dB=$
then gives:
\begin{align}
 d \F(\x) 
= \sum\limits_{i=1}^3  \biggl[
\iiint\limits_{\D(\x )}
\biggl[
\frac{\partial \varphi(\u,\x) }{\partial x_i} 
+
\nabla_{\u} \cdot  \bigl( \varphi(\u,\x) \, \vgrad_{rel,i}^\top  \bigr) 
\biggr] dU
\biggr] d x_i.
\label{eq:Re_proof5_vel}
\end{align} 
Reverting to $d^3 \u = dU$, $d^2 \u=\n_B dB$, and using vector and tensor notation based on the selected gradient convention, \eqref{eq:Re_proof4_vel}-\eqref{eq:Re_proof5_vel} give the velocimetric-spatial Reynolds transport theorem in \eqref{eq:Re_tr_u_3D}. $\blacksquare$

\vspace{10pt}

{\bf Alternative proof:} A more direct proof is to recognize $d \F(\x)$ in \eqref{eq:Re_proof0_vel} as the directional derivative $D_{\r} \, \F(\x) = \r \cdot \nabla_{\x} \, \F(\x)$, in the direction of the differential vector $\r=d\x$.   By definition:
\begin{align}
\begin{split}
D_{d\x} \, \F(\x) 
&= \lim\limits_{h \to 0} \frac{\F(\x + h d\x) - \F(\x)}{h} 
\\
&= \lim\limits_{h \to 0} \frac{1}{h} \biggl [ \iiint\limits_{\D(\x + h d\x)} \varphi(\u,\x+ h d\x) \, dU -  \iiint\limits_{\D(\x)} \varphi(\u,\x) \, dU \biggr].
\end{split}
\label{eq:Re_proof1_dir_vel}
\end{align}
Using a multidimensional Taylor expansion:
\begin{align}
\varphi(\u,\x + h d\x) = \varphi(\u,\x) + h d\x^\top \nabla_{\x} \varphi(\u,\x)     + \frac{h^2}{2} d\x^\top \nabla^2_{\x} \varphi(\u,\x) d\x + ...,
\label{eq:Taylor_x_mult}
\end{align}
where $\nabla_{\x}^2=\nabla_{\x} (\nabla_{\x})^\top$ is the second derivative or Hessian operator, 
we obtain:
\begin{align}
\begin{split}
&D_{d\x} \, \F(\x) 
\\&= \lim\limits_{h \to 0} \frac{1}{h} \biggl [ \iiint\limits_{\D(\x + h d\x)} 
\biggl( \varphi(\u,\x) + h d\x^\top \nabla_{\x} \varphi(\u,\x)     + \frac{h^2}{2} d\x^\top \nabla^2_{\x} \varphi(\u,\x) d\x + ... \biggr)
 \, dU  
 \\& \hspace{20pt}
 -  \iiint\limits_{\D(\x)} \varphi(\u,\x) \, dU \biggr]
 \\
&=
 \lim\limits_{h \to 0} \frac{1}{h}  \iiint\limits_{\D(\x + h d\x)} 
h d\x^\top \nabla_{\x} \varphi(\u,\x)    
 \, dU  
+
 \lim\limits_{h \to 0} \frac{1}{h}  \biggl [\iiint\limits_{\D(\x + h d\x)} 
 \varphi(\u,\x)     
 \, dU -  \iiint\limits_{\D(\x)} \varphi(\u,\x) \, dU \biggr]
 \\
 &=
 \iiint\limits_{\D(\x)} 
d\x^\top \nabla_{\x} \varphi(\u,\x)    
 \, dU  
+
 \lim\limits_{h \to 0} \frac{1}{h}  \iiint\limits_{\D(\x + h d\x)-\D(\x)} 
 \varphi(\u,\x)     
 \, dU ,
 \end{split}
%\label{eq:Re_proof1_dir_vel}
%%\raisetag{30pt}
\end{align}
where again the second and higher derivatives vanish. The analysis uses the same directional argument as before, now in resultant form $dU = h d\x^\top \vgrad_{rel} \n_B dB$, giving the limit
\begin{align}
d \F(\x) =
D_{d\x} \, \F(\x) 
&=
 \iiint\limits_{\D(\x)} 
d\x^\top \nabla_{\x} \varphi(\u,\x)    
 \, dU  
+
  \oiint\limits_{\partial \D(\x)}
 \varphi(\u,\x)     
 \, d\x^\top \vgrad_{rel} \n_B dB.
 \label{eq:Re_proof2_dir_vel}
\end{align}
This is identical to \eqref{eq:Re_proof4_vel} and the first part of \eqref{eq:Re_tr_u_3D}. $\blacksquare$

\subsection{Proof 2: Reference Velocity Coordinate Transformation}

For the second proof, we consider the alternative description based on a reference set of velocity coordinates $\u_0=[u_0,v_0,w_0]^\top$ in a spatially fixed velocity domain $\D(\x_0)$. Rewriting the left hand side of \eqref{eq:Re_proof0_vel} gives
\begin{align}
 d \F(\x) 
= \sum\limits_{i=1}^3  \frac{\partial}{\partial x_i} \biggl[ \iiint\limits_{\D(\x)} \varphi(\u,\x) \, dU \biggr] d x_i
= \sum\limits_{i=1}^3  \frac{\partial}{\partial x_i} \biggl[ \iiint\limits_{\D(\x_0)} \varphi(\u(\u_0,\x),\x) \, \biggl | \frac{\partial \u}{\partial \u_0} \biggr |  \, dU_0 \biggr] d x_i,
\label{eq:Re_proof0_vel_B1}
\end{align}
where $dU_0=du_0 dv_0 dw_0$ and $ | {\partial \u}/{\partial \u_0}  |$ is the Jacobian determinant for this coordinate transformation. For the class of time-independent flow systems examined here, we consider the Jacobian ${\partial \u}/{\partial \u_0}$ to be everywhere non-singular. Using the velocity analog of the relation \eqref{eq:det_rel} for independent spatial coordinates $\x$:
\begin{align}
\frac{\partial}{\partial x_i} \biggl | \frac{\partial \u}{\partial \u_0} \biggr | 
= \biggl | \frac{\partial \u}{\partial \u_0} \biggr | \nabla_{\u} \cdot \biggl( \frac{\partial \u}{\partial x_i} \biggr)_{rel}
= \biggl | \frac{\partial \u}{\partial \u_0} \biggr | \nabla_{\u} \cdot \vgrad_{rel,i}^\top,
\label{eq:det_rel_vel}
\end{align}
then from \eqref{eq:Re_proof0_vel_B1}
\begin{align}
\begin{split}
 d \F(\x) 
&= \sum\limits_{i=1}^3   \biggl[ \iiint\limits_{\D(\x_0)} \biggl\{
\biggl( \frac {\partial \varphi }{\partial x_i} + (\nabla_{\u} \varphi) \cdot \vgrad_{rel,i}^\top
\biggr) \biggl | \frac{\partial \u}{\partial \u_0} \biggr |  \, 
+
\varphi \biggl | \frac{\partial \u}{\partial \u_0} \biggr | \nabla_{\u} \cdot \vgrad_{rel,i}^\top
\biggr\}
dU_0 \biggr] d x_i
\\
&= \sum\limits_{i=1}^3   \biggl[ \iiint\limits_{\D(\x)} \biggl\{
 \frac {\partial \varphi }{\partial x_i} 
 + (\nabla_{\u} \varphi) \cdot \vgrad_{rel,i}^\top
+ \varphi  \nabla_{\u} \cdot \vgrad_{rel,i}^\top
\biggr\}
dU \biggr] d x_i.
\end{split}
\label{eq:Re_proof_vel_B2}
\end{align}
This gives the second form of the velocimetric-spatial Reynolds transport theorem in \eqref{eq:Re_tr_u_3D}, with the first form obtained by Gauss' divergence theorem in velocity space. $\blacksquare$

%Apx C
\section{\label{sect:Apx_Re_exterior} Definitions of Operators and Proof of the Multiparameter Reynolds Transport Theorem in Exterior Calculus} 
\setcounter{equation}{0}

We now prove the multiparameter Reynolds transport theorem for differential forms \eqref{eq:Re_tr_diffform}, based on multivariate extensions of exterior calculus operators and the proof of the one-parameter case 
[e.g., \onlinecite{Frankel_2013}, eqs.\ 0.49 and 4.33-4.34]. 
For this we draw on the tools of existing (one-parameter) exterior calculus, for which excellent reviews are available in a number of monographs \cite[e.g.,][]{Kobayashi_Nomizu_1963, Flanders_1963, Guggenheimer_1963, Cartan_1970, Lovelock_Rund_1989, Olver_1993, Lee_2009, Torres_del_Castillo_2012, Bachman_2012, Frankel_2013, Sjamaar_2017}. %In the following, references to the one-parameter version of each concept are cited with ``c.f.''

{\bf Proof:} Consider an $n$-dimensional differentiable manifold $M^n$, described using a patchwork of local coordinates $\X=[X_1, ..., X_n]^\top$ defined in some neighbourhood $N(\s)$ of each point $\s \in M^n$. The coordinates in $\X$ are assumed orthonormal, but need not be Cartesian. Let $\V$ be an $n \times m$ vector or tensor field on the manifold, which is parameterized by the $m$-dimensional vector of parameters $\C =[C_1,..., C_m]^\top$ (which could include time $t$). 
This field will create the $m$-parameter maximal integral curve or ``flow'' within the manifold, defined by the map
[compare \onlinecite[][chap.\ 1]{Kobayashi_Nomizu_1963}; \onlinecite[][\S3.3]{Flanders_1963}; \onlinecite[][\S 2.8]{Lee_2009}; \onlinecite[][\S 2.1]{Torres_del_Castillo_2012}; \onlinecite[][\S 1.4]{Frankel_2013}; \onlinecite[][\S 1.3]{Olver_1993}]:
\begin{align}
{\phi: M^n \times \R^m \to M^n},
\label{eq:flow1}
\end{align}
such that, respectively in vector notation (using the $\partial (\to)/\partial (\downarrow)$ convention)\footnote{For tensor fields, it may be convenient to represent the manifold using higher-order coordinates $\s \in \R^{n_1} \times ... \times \R^{n_k}$. For example, the shear stress tensor $\vec{\tau}$, represented with second order elements $\tau_{ij} \in M^{3 \times 3} \subseteq \R^3 \times \R^3$, can be used to define the third-order tensor field $\partial \vec{\tau}/\partial \C$ with elements $V_{i j c} = \partial \tau_{i j}/\partial C_c$.}:
\begin{align}
\biggl( \frac{\partial \phi (\s,\C)}{\partial \C} \biggr)^\top = \V(\s)
%\hspace{15pt} \text{and} \hspace{15pt} 
\label{eq:flow2}
\end{align}
or for each component $V_{jc}$, expressed in terms of the local coordinates $X_j \in \X$ and parameter components $C_c \in \C$:
\begin{align}
\frac{\partial X_j(\phi (\s, C_c))}{\partial C_c} = V_{jc}(\s)
\label{eq:flow2a}
\end{align}
%Assembling each row vector $\V_{j\cdot}$ into 
The flow $\phi$ satisfies the following properties for all $\s \in M^n$ and all $\B,\C \in \R^m$  
(c.f. \cite{Kobayashi_Nomizu_1963, Olver_1993, Torres_del_Castillo_2012, Frankel_2013}):
\begin{align}
\begin{split}
\phi(\s,\0) &= \s \\
\phi(\phi(\s,\C),\B) &= \phi(\s,\B+\C).
\end{split}
\label{eq:flow3}
\end{align}
By previous custom for the one-parameter case, we write this as the bijection (diffeomorphism) 
(c.f. \cite{Kobayashi_Nomizu_1963, Olver_1993, Torres_del_Castillo_2012, Frankel_2013}):
\begin{align}
\phi^{\C}: M^n \to M^n, \hspace{10pt} \phi^{\C} (\s) = \phi(\s,\C),
\label{eq:diffeo}
\end{align}
which is therefore invertible, and operates linearly $\phi^{\C+\B} = \phi^{\C} \circ \phi^{\B}  =  \phi^{\B} \circ \phi^{\C}$.
Thus if the manifold contains an $r$-dimensional oriented compact submanifold $\Omega^r \subset M^n$, each point in the submanifold at $\C$ can be mapped from the origin at $\C=\0$ by $\Omega^r(\C) = \phi^{\C} \Omega^r (\vec{0})$, and 
vice versa $\Omega^r(\0) = \phi^{-\C} \Omega^r (\vec{C})$. 
Informally, we might describe $\Omega^r(\C)$ as a ``moving domain'' and the map $\phi^{\C}$ as a ``movement", although they each involve a transformation in the parameter vector $\C$ (such as in spatial coordinates) -- reflecting the symmetries of the vector or tensor field -- rather than necessarily in physical time. 

Now consider the $r$-form $\omega^r$, a linear function defined on the cotangent space %$\Lambda^r = T^*_{s}(M^n)$
of the manifold $M^n$, with $r \in \N \cup 0 = \N_0$ such that $0 \le r \le n$. This can be written as
%\footnote{The literature generally refers to $p$-forms, but in this study, this clashes with the symbol for a pdf.} 
(e.g., \onlinecite[][chap.\ 1]{Kobayashi_Nomizu_1963}; \onlinecite[][\S1.1]{Flanders_1963}; \onlinecite[][\S A.3]{Lovelock_Rund_1989})\footnote{Note that many authors adopt an implied summation convention for this and subsequent equations; we do not adopt this here.}:
\begin{align}
\omega^r 
%= \sum\limits_{j=1}^{\binom{n}{r}} w_{j} \sigma_{j}
%= \sum\limits_{j=1}^{\binom{n}{r}} w_{j} \bigwedge\limits_{\begin{smallmatrix} i \in \{ 1,...,n \} \\ \text{max $r$ terms} \end{smallmatrix}} dX_j
%= \sum\limits_{j=1}^{\binom{n}{r}} w_{j} \; dX_{j_1} \wedge ... \wedge dX_{j_r}
= \sum\limits_{j_1 < ... < j_r} w_{j_1...j_r} \; dX_{j_1} \wedge ... \wedge dX_{j_r},
\label{eq:diffform_def}
\end{align}
where $w_{j_1...j_r}$ are scalars (possibly functions of $\X$), $\wedge$ is the exterior or wedge product and the $dX_{j_k}$ are an ordered selection of $r$ terms from the vector $d\X = [dX_1, ..., dX_n]^\top$,  with the sum taken over all increasing combinations of the $dX_{j_k}$. Physically, the wedge product $dX_{j_1} \wedge ... \wedge dX_{j_r}$ is the oriented volume of an infinitesimal $r$-dimensional parallelepided. Integration of $\omega^r$ over the submanifold $\Omega(\C) \subset M^n$:
\begin{align}
 W(\C) = \int\nolimits_{\Omega(\C)} \omega^r,
 \label{eq:diffform_int}
 \end{align}
therefore gives the total oriented quantity $W(\C)$ in the submanifold, as a function of its parameters $\C$. % [e.g., \onlinecite[][chap. 3]{Frankel_2013}]. 
The $r$-form formalism thus extends standard multivariate calculus to the analysis of oriented areas and volumes on manifolds, using a patchwork of local coordinate systems.

For a smooth (infinitely differentiable) map $f: M^n \to N^\ell$ between smooth manifolds $M^n$ and $N^\ell$ (for $\ell, n \in \N$), there exists an important theorem that a smooth $r$-form $\omega^r$ on $N^\ell$ can be mapped to a smooth $r$-form $f^* \omega^r$ on $M^n$, where $f^*$ is known as the {\it pullback} 
[e.g., \onlinecite[][\S 2.7]{Frankel_2013}]. 
In consequence, assuming smoothness, the multiparametric diffeomorphism $\phi^{\C}$ defined in \eqref{eq:diffeo} can be used to define a vector pullback $\phi^{*\C}$, providing an invertible coordinate transformation between $M^n$ and itself in the $\C$ direction (with inverse $\phi^{\C}_*$, known as the {\it pushforward}). Formally, we define 
[compare \onlinecite[][\S 5.5]{Lovelock_Rund_1989}; \onlinecite[][\S 0j and \S 2.7]{Frankel_2013}; \onlinecite[][\S 3.2]{Sjamaar_2017}]:
\begin{align}
\begin{split}
\phi^{*\C} \omega^r 
%\al
&= \sum\limits_{j_1 < ... < j_r} (w_{j_1...j_r} \circ \phi^{\C}) \; d\phi^{\C}_{j_1} \wedge ... \wedge d\phi^{\C}_{j_r}
\\&= \sum\limits_{j_1 < ... < j_r} \sum\limits_{k_1 < ... < k_r} (w_{j_1...j_r} \circ \phi^{\C}) \; 
 \biggl| \frac{\partial(\phi^{\C}_{j_1}, ..., \phi^{\C}_{j_r})}{\partial(X_{k_1}, ..., X_{k_r})} \biggr|
dX_{k_1} \wedge ... \wedge dX_{k_r},
\end{split}
\label{eq:diffform_pullback}
\end{align}
where $\bigl| {\partial(\phi^{\C}_{j_1}, ..., \phi^{\C}_{j_r})}/{\partial(X_{k_1}, ..., X_{k_r})} \bigr|$ is the determinant of the Jacobian matrix between the two coordinate systems, without change of sign. We see that the pullback $\phi^{*\C}$ satisfies linearity, and enables an $r$-form at $\C$ to be mapped back to $\C=\0$, or vice versa using the pushforward $\phi_*^{\C}$.

We next consider the exterior derivative, which when applied to an $r$-form gives 
[e.g., \onlinecite[][chap.\ 1]{Kobayashi_Nomizu_1963}; \onlinecite[][\S3.2]{Flanders_1963}; \onlinecite[][\S A.3]{Lovelock_Rund_1989}; \onlinecite[][\S 8.3]{Lee_2009}; \onlinecite[][\S2.6]{Frankel_2013}]:
\begin{align}
d\omega^r 
= \sum\limits_{j_1 < ... < j_r} dw_{j_1...j_r} \wedge dX_{j_1} \wedge ... \wedge dX_{j_r}.
\label{eq:diffform_diff}
\end{align}
Since the integral in \eqref{eq:diffform_int} is a 0-form, its exterior derivative is its differential:
\begin{align}
dW(\C) 
= \sum\limits_{c=1}^m \frac{\partial W}{\partial C_c}\biggr|_{C_k \ne C_c}  \, dC_c
%= \sum\limits_{c=1}^m \frac{\partial ( \int\nolimits_{\Omega(\C)} \omega^r )}{\partial C_c} \biggr|_{C_k \ne C_c}  \,  dC_c
%= \sum\limits_{c=1}^m \frac{\partial ( \int\nolimits_{\Omega(C_c)} \omega^r )}{\partial C_c} \biggr|_{C_k \ne C_c}  \,  dC_c 
=  \frac{\partial W}{\partial \C} \cdot d\C
=  \frac{\partial ( \int\nolimits_{\Omega(\C)} \omega^r ) }{\partial \C} \cdot  d\C,
\label{eq:diffform_proof1}
\end{align}
which indicates the terms $C_k$, for all $k \ne c$, are held constant in each partial derivative, and which uses the standard dot product. To simplify, the variable domain of integration is first converted to a fixed domain via the pullback: %\begin{align}
%dW(\C) 
%= \frac{\partial ( \int\nolimits_{\Omega(\0)} \phi^{\C} \, \omega^r )}{\partial \C} \cdot d\C
%= \sum\limits_{c=1}^m \frac{\partial ( \int\nolimits_{\Omega(0)} \phi^{C_c} \, \omega^r )} {\partial C_c} \biggr|_{C_k \ne C_c} \,  dC_c
%\label{eq:diffform_proof2}
%\end{align}
%or more elegantly in vector form:
\begin{align}
dW(\C)
= \frac{\partial ( \int\nolimits_{\Omega(\0)} \phi^{*\C} \, \omega^r )}{\partial \C} \cdot d\C
= \sum\limits_{c=1}^m \frac{\partial ( \int\nolimits_{\Omega(0)} \phi^{*C_c} \, \omega^r )} {\partial C_c} \biggr|_{C_k \ne C_c} \,  dC_c,
\label{eq:diffform_proof4}
\end{align}
%which uses the standard dot product. 
using the reference position $\C=\0$, in the (relative) coordinate system chosen for $\C$.
For each component in \eqref{eq:diffform_proof4}, from the definition of the partial derivative and the linearity of the pullback \eqref{eq:diffform_pullback}: 
\begin{align}
 \begin{split}
\frac{\partial }{\partial C_c} \int\nolimits_{\Omega(0)} \phi^{*C_c} \, \omega^r 
 &= \lim\limits_{h \to 0} \frac{\int\nolimits_{\Omega(0)} \phi^{*(C_c+h)} \, \omega^r - \int\nolimits_{\Omega(0)} \phi^{*C_c} \, \omega^r} {h}
\\&
=  \lim\limits_{h \to 0} \int\nolimits_{\Omega(0)} \frac{ \phi^{*C_c} (\phi^{*h} \, \omega^r -   \omega^r)} {h}
%%%\wrap
=   \int\nolimits_{\Omega(C_c)} \biggl \{ \lim\limits_{h \to 0} \frac{ (\phi^{*h} \, \omega^r -   \omega^r)} {h} \biggr \},
\end{split}
\label{eq:diffform_proof3}
\end{align}
where the last step converts back to a variable domain using the pushforward $\phi_*^{C_c}$. The term in braces is the Lie derivative $\mathcal{L}_{\V_{\cdot c}}$ of the differential form $\omega^r$ with respect to the column vector field $\V_{\cdot c} \in \V$ associated with the flow $\phi^{C_c} \in \phi^{\C}$, based on the increment $h$ in the one-dimensional flow parameter $C_c$ 
[\onlinecite[][\S 2.2]{Torres_del_Castillo_2012}; \onlinecite[][\S 4.3a]{Frankel_2013}]. 
Taking a cue from the directional derivative (see  \ref{sect:Apx_Re_spatial}), this could equivalently be defined in terms of the pullback $\phi^{*h dC_c}$ and written as $\mathcal{L}_{\V_{\cdot c}}^{(C_c)}$, to explicitly identify the component $C_c$.  
In consequence, %for a vector or tensor field $\V$ 
we can define an $m$-dimensional multiparameter Lie derivative of an $r$-form with respect to $\V$ over parameter $\C$ by:
\begin{align}
\begin{split}
\mathcal{L}_{\V}^{(\C)} \omega^r
 &= [ \mathcal{L}_{\V_{\cdot 1}}^{(C_1)} , ..., \mathcal{L}_{\V_{\cdot m}}^{(C_m)} ]^\top \,\omega^r
%&=\biggl [ \lim\limits_{h_1 \to 0} \frac{ (\phi^{*h_1} \, \omega^r -   \omega^r)} {h_1}  
%, ... ,
% \lim\limits_{h_m \to 0} \frac{ (\phi^{*h_m} \, \omega^r -   \omega^r)} {h_m} 
 %\biggr]^\top
 %\\
%&
%\al
\\&=  \lim\limits_{h \to 0} \biggl [ \frac{ (\phi^{*h \, dC_1} \, \omega^r -   \omega^r)} {h}  
, ... ,
  \frac{ (\phi^{*h \, dC_m} \, \omega^r -   \omega^r)} {h} 
 \biggr]^\top
%%%\wrap
 =    \lim\limits_{h \to 0} \frac{ (\phi^{*h \, d\C} \, \omega^r -  \1_m \, \omega^r)} {h} .
\end{split}
 \label{eq:Lie_deriv_vector}
\end{align}
where $\1_m$ is an $m$-dimensional vector of 1s. 
Assembling \eqref{eq:diffform_int}-\eqref{eq:Lie_deriv_vector} then gives:
\begin{align}
dW(\C) 
= d \int\limits_{\Omega(\C)}   \omega^r 
=  \biggl[ \int\limits_{\Omega(\C)} \mathcal{L}_{\V}^{(\C)}  \omega^r \biggr] \cdot {d\C} .
%=  \biggl[ \int\limits_{\Omega(\C)}  \, i_{\V}^{(\C)}  \, d  \omega^r + {\oint\limits_{\partial \Omega(\C)}  }  \,   i_{\V}^{(\C)}  \, \omega^r 
%\biggr]  \cdot {d\C} 
\end{align}
This is the first part of \eqref{eq:Re_tr_diffform}. 

Finally, we consider the one-parameter interior product, which effects the contraction of an $r$-form to an $(r-1)$-form, given for $r>0$ by [e.g., \onlinecite[][p100]{Cartan_1970}, \onlinecite[][\S 1.5]{Olver_1993}, \onlinecite[][\S 9.2]{Lee_2009}]:
\begin{align}
\begin{split}
i_{\U} \, \omega^r  
&= \sum\limits_{j_1 < ... <j_r} \sum\limits_{k=1}^n (-1)^{k-1} U_k \, w_{j_1...j_r} dX_{j_1} \wedge ... \wedge dX_{j_{k-1}} \wedge dX_{j_{k+1}} \wedge  ... \wedge dX_{j_r}
\end{split}
\end{align}
based on components $U_k$ of a one-parameter vector field $\U$ defined on $M^n$ with implicit parameter $t$. 
This was shown by Cartan to satisfy the equation $\mathcal{L}_{\U} \omega^r =  i_{\U} \, d \omega^r + d \,  (i_{\U} \, \omega^r )$ 
[e.g., \onlinecite[][\S5.8]{Flanders_1963}, \onlinecite[][\S A.3]{Lovelock_Rund_1989}, \onlinecite[][\S 1.5]{Olver_1993},\onlinecite[][\S 8.6]{Lee_2009}, \onlinecite[][\S 4.2b]{Frankel_2013}].
By component-wise extension, it is possible to define a multiparameter interior product based on the field $\V$ with parameters $\C$:
\begin{align}
\begin{split}
i_{\V}^{(\C)} \, \omega^r  
&= \Bigl[ i_{\V_{\cdot 1}}^{(C_1)} \, \omega^r, ..., i_{\V_{\cdot m}}^{(C_m)} \, \omega^r ]^\top
\\
&= \sum\limits_{j_1 < ... <j_r} \sum\limits_{k=1}^n (-1)^{k-1} \V_{k \cdot}^\top \, w_{j_1...j_r} dX_{j_1} \wedge ... \wedge dX_{j_{k-1}} \wedge dX_{j_{k+1}} \wedge  ... \wedge dX_{j_r}
\end{split}
\label{eq:interior_vect}
\end{align}
based on row vectors $\V_{k \cdot} \in \V$. 
By construction, this satisfies a multiparameter Cartan equation:
\begin{align}
\mathcal{L}_{\V}^{(\C)} \omega^r =  i_{\V}^{(\C)}  \, d \omega^r + d \,  (i_{\V}^{(\C)} \, \omega^r )
\label{eq:multi_Cartan}
\end{align}
Using this result, and the exterior calculus expression of Stokes' theorem $\int\nolimits_{\Omega(\C)} d \omega^r = \oint\nolimits_{\partial \Omega(\C)} \omega^r$ 
[e.g., \onlinecite[][\S 5.5]{Lovelock_Rund_1989}; \onlinecite[][\S 6.2]{Bachman_2012}; \onlinecite[][\S 3.3b]{Frankel_2013}], 
we obtain the third and fourth terms in \eqref{eq:Re_tr_diffform}. $\blacksquare$

{\bf Discussion:} The above proof invokes $m$-parameter vector extensions of the ``flow'' \eqref{eq:flow1}-\eqref{eq:diffeo}, %infinitesimal generator \eqref{eq:infgen}, 
pullback and pushforward  \eqref{eq:diffform_pullback}, Lie derivative \eqref{eq:Lie_deriv_vector} and interior product \eqref{eq:interior_vect}, which follow naturally from their one-parameter definitions. The $r$-form, exterior derivative and dot product are unchanged. The proof also extends naturally to higher-order tensor fields and to vector- or tensor-valued differential forms, %$\vec{\omega}^r$ $\tens{\omega}^r$, 
by component-wise application of operators, in the same manner as does the traditional Reynolds transport theorem \eqref{eq:Re_tr}.
It also can be extended to a parametric tensor $\C$, if desired, using an element-wise (Hadamard) tensor product, or alternatively by the use of trace or higher-order diagonal operators on matrix products (such as in the Frobenius inner product). 

% Apx D 
\section{\label{sect:Apx_Re_exterior_ext} Proof of the Augmented Multiparameter Reynolds Transport Theorem in Exterior Calculus} 
\setcounter{equation}{0}

We first present a definition and several lemmas, and then the main proof.

\begin{definition}
Extending the terminology of one-parameter exterior calculus, the differential operator associated with an $n \times m$ vector or tensor field $\V$ \eqref{eq:flow2} can be defined as [c.f., \onlinecite[][\S A.1]{Lovelock_Rund_1989}, \onlinecite[][\S 1.3]{Olver_1993}, \onlinecite[][\S 2.8]{Lee_2009}, \onlinecite[][\S1.3-1.4]{Frankel_2013}]: 
\begin{align}
\Vop 
= \sum\limits_{j=1}^n \V_{j\cdot}^\top(\X) \frac{\partial}{\partial X_{j}} 
= \V(\X) \cdot \vpartial_{\X} 
%= \V^{\top}(\X) \, \vpartial_{\X}
\label{eq:infgen} 
\end{align}
in which $\Vop$ denotes the differential operator, and we retain the notation $\V$ for the tensor field, where $\V_{j\cdot} (\X)$ is the $j$th row of $\V$ defined at $\X$. We further define $\vpartial_{\X}= [\partial/\partial X_1, ..., \partial/\partial X_n]^\top$ as the vector partial differential operator with respect to $\X$; this notation avoids confusion with the gradient operator $\nabla_{\X}$ for a non-Cartesian coordinate system\footnote{The vector partial derivative operator has variously been denoted using index notation [e.g., \onlinecite[][\S A.1]{Lovelock_Rund_1989}, \onlinecite[][\S 1.3]{Olver_1993}, \onlinecite[][\S 2.8]{Lee_2009}], the bold operator $\vpartial/\vpartial X_j$ [\onlinecite{Frankel_2013}, \S1.3-1.4] or simply by the gradient [\onlinecite[][\S 7.1]{Guggenheimer_1963}, \onlinecite[][\S 5.1]{Bachman_2012}].}. For consistency, we also require the operator \eqref{eq:infgen} to conduct an implicit rotation, to convert the $m$-dimensional row vector into a column vector.
\end{definition}

In \eqref{eq:infgen}, the partial derivative terms can be interpreted as a basis set of tangent vectors at each point, expressed in the local coordinate system, but also act as differential operators on mathematical objects [e.g., \onlinecite[][\S 7.1]{Guggenheimer_1963}, \onlinecite[][\S A.1]{Lovelock_Rund_1989}, \onlinecite[][\S 1.3]{Olver_1993}, \onlinecite[][\S 2.8]{Lee_2009}, \onlinecite[][ \S1.3b-c]{Frankel_2013}]. 

\begin{lemma}
Application of the tensor field operator to a function $f$ gives:
\begin{align}
\Vop(\f) 
= \sum\limits_{j=1}^n \V_{j\cdot}^\top(\X) \frac{\partial \f(\X)}{\partial X_{j}} 
= \V(\X) \cdot \vpartial_{\X} \f(\X)
= D_{\V} \, \f
= \mathcal{L}_{\V}^{(\C)} f
\label{eq:infgen_f3} 
\end{align}
where $D_{\V}$ is a multiparameter directional derivative in the directions of the columns $\V_{\cdot c}$ of $\V$, i.e., with one direction for each component $C_c$, and $\mathcal{L}_{\V}^{(\C)}$ is the multiparameter Lie derivative defined in \eqref{eq:Lie_deriv_vector}.
\end{lemma}

\begin{proof}
Applying $\Vop$ \eqref{eq:infgen} to a differentiable function $\f(\X)$ gives the vector field:
\begin{align}
\Vop(\f)(\X) 
= \sum\limits_{j=1}^n \V_{j\cdot}^\top(\X) \frac{\partial \f(\X)}{\partial X_{j}} 
= \V(\X) \cdot \vpartial_{\X} \f(\X)
\label{eq:infgen_f} 
\end{align}
For a Cartesian local coordinate system $\Y \in M^n$:
\begin{align}
\Vop(\f)(\Y)
= \sum\limits_{j=1}^n \V_{j\cdot}^\top(\Y) \frac{\partial \f(\Y)}{\partial Y_{j}} 
= \V(\Y) \cdot \nabla_{\Y} \f(\Y)
= D_{\V} \, \f(\Y)
%= \mathcal{L}_{\V}^{(\C)} \f 
\label{eq:infgen_f_Cartesian} 
\end{align}
where $\nabla_{\Y}$ is the gradient with respect to $\Y$. The multiparameter directional derivative in \eqref{eq:infgen_f_Cartesian} is obtained by assembling its vector components. By coordinate transformation to any other orthogonal coordinates $\X$, using the definition of $\V$ in \eqref{eq:flow2}-\eqref{eq:flow2a}:
\begin{align}
\begin{split}
\Vop(\f)(\Y)
&= D_{\V} \, \f(\Y)
= \V(\Y)^{\top} \, \nabla_{\Y} \f(\Y)
= \V(\Y)^{\top} \; \mathcal{J} \mathcal{J}^{-1} \; \nabla_{\Y} \f(\Y)
\\&=  \biggl (\frac{\partial \Y}{\partial \C} \biggr)^\top \, \frac{\partial \X}{\partial \Y} \,  \frac{\partial \Y}{\partial \X} \, \nabla_{\Y} \f(\Y)
=  \biggl\{ \biggl (\frac{\partial \Y}{\partial \C} \biggr)^\top \, \frac{\partial \X}{\partial \Y} \biggr\} \biggl\{ \frac{\partial \Y}{\partial \X} \, \nabla_{\Y} \f(\Y) \biggr\}
\\&= \biggl(\frac{\partial \X}{\partial \C} \biggr)^\top \vpartial_{\X} \f(\X) 
= \V(\X)^\top \vpartial_{\X} \f(\X) %= D_{\V} \, \f(\X)
=\Vop(\f)(\X)
\end{split}
\label{eq:infgen_f2} 
%\raisetag{-40pt}
\end{align}
where $\mathcal{J} = \vpartial_{\Y} \X$ is the Jacobian of $\X$ with respect to $\Y$, and $\mathcal{J}^{-1}$ as its inverse. We see that applying the tensor field operator to a function, or equivalently the multiparameter directional derivative, is independent of the coordinate system used [c.f., \onlinecite[][\S A.3]{Lovelock_Rund_1989}]. 

%%NOTE: could explore metric g=J^top J, for J=coord Jacobian, which for orthogonal coordinates is diagonal with elements g_{ii}, hence h_i = sqrt (g_{ii}) for each i=1..3.  Then compare to def of gradient in different coord systems (= 1/h vector dotted with partial deriv vector).  See Wiki pages

Now consider the multiparameter Lie derivative of a function. For each term in \eqref{eq:Lie_deriv_vector}:
\begin{align}
\begin{split}
\mathcal{L}_{\V_{\cdot c}}^{(C_c)} f
 &=    \lim\limits_{h \to 0} \frac{ \phi^{*h \, dC_c} \,f(\X) -   f(\X)} {h} 
 \\&=    \lim\limits_{h \to 0} \frac{ f(\phi^{h \, dC_c}(\X)) -  f(\X)} {h} 
 \\&=    \lim\limits_{h \to 0} \frac{ f(\X + h \frac{\partial \X}{\partial_{C_c}} dC_c) -   f(\X)} {h} 
 \\&=    \lim\limits_{h \to 0} \frac{ f(\X + h \V_{\cdot c} dC_c) -   f(\X)} {h} 
 \\&=  D_{\V_{\cdot c}} \, \f
\end{split}
\label{eq:Lie_deriv_fn}
\end{align}
using the definition of the pullback \eqref{eq:diffform_pullback} and coordinate transformation of the increment $h\, dC_c$, where we recognise $dC_c$ as a scalar quantity (see \eqref{eq:Re_proof1_dir_vel} and the comments after \eqref{eq:diffform_proof3}).  Assembling \eqref{eq:Lie_deriv_fn} into the vector Lie derivative and uniting with \eqref{eq:infgen_f2} gives \eqref{eq:infgen_f3}. 
\end{proof}

Eq.\ \eqref{eq:infgen_f3} extends the known results of one-parameter exterior calculus [\onlinecite[][\S 7.1]{Guggenheimer_1963}, \onlinecite[][\S A.1, A.3]{Lovelock_Rund_1989}, \onlinecite[][\S 1.3-1.5]{Olver_1993}, \onlinecite[][\S 2.8]{Lee_2009}, \onlinecite[][\S1.3b-c, 1.4a, 4.2]{Frankel_2013}], e.g., for the velocity vector field $\u(\x,t)$ with coordinates $\x$ and parameter $t$, $\mathcal{L}_{\u} \f=D_{\u} \, \f$. 

\begin{lemma}
The multiparameter Lie derivative \eqref{eq:Lie_deriv_vector} exhibits the properties of termwise application (derivation), commutativity with the exterior derivative, and linearity with respect to tensor fields, respectively:
\begin{align}
\mathcal{L}_{\V}^{(\C)} (\omega^r + \tau^q) &= (\mathcal{L}_{\V}^{(\C)} \omega^r) \wedge \tau^q + \omega^r \wedge (\mathcal{L}_{\V}^{(\C)} \tau^q)
\label{eq:Lie_termwise}
\\
\mathcal{L}_{\V}^{(\C)} d &= d \mathcal{L}_{\V}^{(\C)}
\label{eq:Lie_extd}
\\
\mathcal{L}_{\V+\W}^{(\C)} \omega^r&= \mathcal{L}_{\V}^{(\C)} \omega^r + \mathcal{L}_{\W}^{(\C)} \omega^r
\label{eq:Lie_linearity}
\end{align}
where $\tau^q$ is a $q$-form defined on $M^n$ with $q \in \N_0$ and $0 \le q \le n$, and $\V$ and $\W$ are two tensor fields with $m$ columns but not necessary the same number of rows. 
\end{lemma}

\begin{proof}
The proofs of \eqref{eq:Lie_termwise} and \eqref{eq:Lie_extd} follow columnwise from their one-parameter counterparts [c.f., \onlinecite[][\S A.3]{Lovelock_Rund_1989},  \onlinecite[][\S 1.5]{Olver_1993}, \onlinecite[][\S 8.6]{Lee_2009}, \onlinecite[][\S 4.2a]{Frankel_2013}]. 
To prove  \eqref{eq:Lie_linearity}, consider the $n_1 \times m$ tensor field $\V$ and $n_2 \times m$ tensor field $\W$ defined on the same manifold $M^n$ with $1 \le n_1, n_2 \le n$, respectively with flows $\phi(\s,\C)$ and $\psi(\s,\C)$ defined by \eqref{eq:flow2}-\eqref{eq:flow2a}. Let the tensor field $\V$ have local $n_1$-dimensional coordinates $\X$ and operator $\Vop$, and let $\W$ have the local $n_2$-dimensional coordinates $\Y$ and operator $\Wop$.
From the definition \eqref{eq:Lie_deriv_vector} of the multiparameter Lie derivative:
\begin{align}
\begin{split}
\mathcal{L}_{\V}^{(\C)} \omega^r + \mathcal{L}_{\W}^{(\C)} \omega^r 
 &=    \lim\limits_{h \to 0} \frac{ \phi^{*h \, d\C} \, \omega^r -  \1_m \, \omega^r} {h} 
 + \lim\limits_{h \to 0} \frac{ \psi^{*h \, d\C} \, \omega^r -  \1_m \, \omega^r} {h} 
 \\&= \lim\limits_{h \to 0} \frac{ \phi^{*h \, d\C} \, \omega^r + \psi^{*h \, d\C} \, \omega^r -  \2_m \, \omega^r} {h} 
 \\&= \lim\limits_{h \to 0} \frac{ (\phi + \psi)^{*h \, d\C} \, \omega^r -  \2_m \, \omega^r} {h} 
 \\&= \lim\limits_{h' \to 0} \frac{ (\phi + \psi)^{*h' \, d\C} \, \omega^r -  \1_m \, \omega^r} {h'} 
 \\&= \mathcal{L}_{\V+\W}^{(\C)} \, \omega^r
\end{split}
 \label{eq:Lie_linearity2}
\end{align}
by linearity of the pullback \eqref{eq:diffform_pullback} and redefinition of the distance $h=2h'$, where $\2_m$ denotes an $m$-dimensional vector of 2s. Note that the sum in the amalgamated Lie derivative is defined in terms of its differential operators, i.e., from \eqref{eq:infgen}:
\begin{align}
\mathcal{L}_{\V+\W}^{(\C)} := \mathcal{L}_{\Vop+\Wop}^{(\C)} = \mathcal{L}_{\V(\X) \cdot  \vpartial_{\X}+\W(\Y) \cdot \vpartial_{\Y}}^{(\C)}
 \label{eq:Lie_linearity3}
\end{align}
showing that the operators are of compatible dimension.
\end{proof}

Eqs.\ \eqref{eq:Lie_linearity2}-\eqref{eq:Lie_linearity3} extend a known result of one-parameter exterior calculus [\onlinecite[][\S 7.1]{Guggenheimer_1963}, \onlinecite[][\S 4.3b]{Frankel_2013}], with greater attention to the handling of vectors or tensors of different length. 

\vspace{10pt}

{\bf Main Proof:} We now consider the proof of \eqref{eq:Re_tr_diffform_aug}, by applying \eqref{eq:Re_tr_diffform} to a parameter-dependent vector or tensor field $\V(\C)$. Extending the analysis given in [\onlinecite[][\S 4.3b]{Frankel_2013}], we embed the manifold in the higher-order differentiable manifold $M^n \times \R^m$, in which $M^n$ is augmented with the domain of $\C$. This invokes the augmented local coordinates $\hat{\X}=[\X,\C]^\top$, again assumed orthonormal but not necessarily Cartesian. This creates the $(n+m) \times m$ vector or tensor field $\V \comp \C$, the maximal integral curves of which can be expressed by the map:
\begin{align}
{\hat{\phi}: M^n \times \R^m \times \R^m \to M^n \times \R^m},
\label{eq:flow1_aug}
\end{align}
such that, for the augmented position $\hat{\s} \in M^n \times \R^m$, augmented local coordinates $\hat{X}_j \in \hat{\X}$ and parameter components $C_c \in \C$:
\begin{align}
\biggl( \frac{\partial \hat{\phi} (\hat{\s},\C)}{\partial \C} \biggr)^\top = \V \comp \C(\hat{\s})
\hspace{15pt} \text{and} \hspace{15pt} 
\frac{\partial \hat{X}_j(\phi (\hat{\s}, C_c))}{\partial C_c} = [V \comp \C]_{jc}(\hat{\s})
\label{eq:flow2_aug}
\end{align}
The map $\hat{\phi}$ satisfies the same linearity properties as the ``flow'' $\phi$ for $\C$-independent systems ( \ref{sect:Apx_Re_exterior}), and so can be applied to the submanifold $\Omega^r$. 
By previous custom, we write the map as the diffeomorphism:
\begin{align}
\hat{\phi}^{\C}: M^n \times \R^m \to M^n \times \R^m, \hspace{10pt} \hat{\phi}^{\C} (\hat{\s}) = \hat{\phi}(\hat{\s},\C).
\label{eq:diffeo_aug}
\end{align}
This allows the definition of the augmented vector pullback $\hat{\phi}^{*\C}$ and pushforward $\hat{\phi}^{\C}_*$, enabling invertible coordinate transformations within $M^n \times \R^m$ parameterized by $\C$, which can be projected into $M^n$.

Examining \eqref{eq:flow2_aug}, $\V \comp \C$ consists of the elements $\partial X_j/\partial C_c=V_{jc}$ based on local coordinates $X_j$ in the top $n$ rows, and $\partial C_k/\partial C_c$ below, giving: 
\begin{align}
\V \comp \C = \begin{bmatrix} \V \\ \I_m \end{bmatrix}
\end{align}
where $\I_m$ is the identity matrix of size $m$. From the definition \eqref{eq:infgen}, the tensor field can be written as the augmented differential operator:
\begin{align}
\hat{\Vop} 
= (\V \comp \C) \cdot \vpartial_{\X,\C} 
= \begin{bmatrix} \V \\ \I_m \end{bmatrix} \cdot \begin{bmatrix} \vpartial_{\X} \\ \vpartial_{\C} \end{bmatrix}
=  \V \cdot \vpartial_{\X} + \I_m \cdot \vpartial_{\C}
=  \Vop + \vpartial_{\C}
\label{eq:infgen2} 
\end{align}
As an example, for the velocity vector field $\u(\x,t)$ with coordinates $\x$ and parameter $t$, \eqref{eq:infgen2} reduces to the operator $\hat{\nu} = \u \cdot \vpartial_{\x} + \partial/\partial t$ [e.g., \onlinecite{Frankel_2013}, \S4.3]. 

We now apply \eqref{eq:Re_tr_diffform} to the augmented system, noting that $\omega$ remains an $r$-form in $\Omega(\C)$:
\begin{align}
\begin{split}
\hat{d} \int\limits_{\Omega(\C)}   \omega^r 
&=  \biggl[ \int\limits_{\Omega(\C)} \mathcal{L}_{\V \comp \C}^{(\C)}  \omega^r \biggr] \cdot {d\C} 
%=  \biggl[ \int\limits_{\Omega(\C)}  \, i_{\V \comp \C}^{(\C)}  \, \hat{d}  \omega^r + {\oint\limits_{\partial \Omega(\C)}  }  \,   i_{\V \comp \C}^{(\C)}  \, \omega^r \biggr]  \cdot {d\C} 
%\\& =   \biggl[  \int\limits_{\Omega(\C)}  i_{\V \comp \C}^{(\C)}  \, \hat{d} \omega^r + \hat{d}  ( i_{\V \comp \C}^{(\C)} \, \omega^r  ) \biggr] \cdot  {d\C},
\end{split}
\label{eq:Re_tr_diffform_aug1}
%%\raisetag{60pt}
\end{align}
where $\hat{d}$ is the exterior derivative based on the augmented coordinates $\hat{\X}$. 
To reduce \eqref{eq:Re_tr_diffform_aug1}, we rewrite the Lie derivative in operator notation \eqref{eq:infgen2}, expand using \eqref{eq:Lie_linearity} and convert back:
\begin{align}
\mathcal{L}_{\V \comp \C}^{(\C)}  \omega^r 
= \mathcal{L}_{\hat{\Vop}}^{(\C)}  \omega^r 
= \mathcal{L}_{\Vop + \vpartial_{\C}}^{(\C)}  \omega^r  
= \mathcal{L}_{\Vop}^{(\C)}  \omega^r  + \mathcal{L}_{\vpartial_{\C}}^{(\C)}  \omega^r  
= \mathcal{L}_{\V}^{(\C)}  \omega^r  + \mathcal{L}_{\I_m}^{(\C)}  \omega^r  
\end{align}
Now from \eqref{eq:Lie_extd} and \eqref{eq:infgen_f3}:
\begin{align}
\mathcal{L}_{\V}^{(\C)} d X_j
= d \mathcal{L}_{\V}^{(\C)} X_j
= d (  \V(\X) \cdot \vpartial_{\X} X_j)
= d \V_{j \cdot}^\top
\end{align}
using $\partial X_j/\partial X_k =1$ if $j=k$ and $0$ if $j \ne k$, hence:
\begin{align}
\mathcal{L}_{\I_m}^{(\C)} d X_j
= d \mathcal{L}_{\I_m}^{(\C)} X_j
%= d (  \I_m(\X) \cdot \vpartial_{\X} X_j)
%= d (  \vpartial_{\X} X_j)
= d (\I_m)_{j \cdot}^\top
= \0_m
\label{eq:L_I_dX}
\end{align}
where the last step gives a zero vector of dimension $m$. Applying the identity Lie derivative to the $r$-form $\omega^r$ in \eqref{eq:diffform_def} then gives:
\begin{align}
\begin{split}
\mathcal{L}_{\I_m}^{(\C)} \omega^r 
&= \sum\limits_{j_1 < ... < j_r} \mathcal{L}_{\I_m}^{(\C)} \bigl( w_{j_1...j_r} \; dX_{j_1} \wedge ... \wedge dX_{j_r} \bigr)
\\
&= \sum\limits_{j_1 < ... < j_r} 
 (\mathcal{L}_{\I_m}^{(\C)}  w_{j_1...j_r} ) \; dX_{j_1} \wedge ... \wedge dX_{j_r} 
+ w_{j_1...j_r} \; (\mathcal{L}_{\I_m}^{(\C)} dX_{j_1}) \wedge ... \wedge dX_{j_r} 
+ ... 
\\& \quad +  w_{j_1...j_r} \; dX_{j_1} \wedge ... \wedge (\mathcal{L}_{\I_m}^{(\C)} dX_{j_r} )
\\&=\sum\limits_{j_1 < ... < j_r} 
 (\mathcal{L}_{\I_m}^{(\C)}  w_{j_1...j_r}) \; dX_{j_1} \wedge ... \wedge dX_{j_r} 
\\&=\sum\limits_{j_1 < ... < j_r} 
(\I_m \cdot \vpartial_{\C}  w_{j_1...j_r}) \; dX_{j_1} \wedge ... \wedge dX_{j_r} 
\\&=\sum\limits_{j_1 < ... < j_r} 
(\vpartial_{\C}  w_{j_1...j_r}) \; dX_{j_1} \wedge ... \wedge dX_{j_r} 
\\& = \vpartial_{\C}  \omega^r 
\end{split}
\raisetag{120pt}
\end{align}
where the second line follows from \eqref{eq:Lie_termwise}, the third line follows from \eqref{eq:L_I_dX}, and the fourth line follows from \eqref{eq:infgen_f3}, using the fact that each $w_{j_1...j_r}$ is a function (0-form) and $\mathcal{L}_{\I_m}^{(\C)}$ invokes the operator $\vpartial_{\C}$. The last line follows by amalgamation into the $r$-form, using $\vpartial_{\C} dX_{j_i}=\0_m$ for all $j$. 
In consequence, \eqref{eq:Re_tr_diffform_aug1} simplifies to:
\begin{align}
\begin{split}
\hat{d} \int\limits_{\Omega(\C)}   \omega^r 
&=  \biggl[ \int\limits_{\Omega(\C)} \mathcal{L}_{\V \comp \C}^{(\C)}  \omega^r \biggr] \cdot {d\C} 
\end{split}
\label{eq:Re_tr_diffform_aug2}
%%\raisetag{60pt}
\end{align}
%forced break
\begin{align*}
\begin{split}
&=  \biggl[ \int\limits_{\Omega(\C)} (\mathcal{L}_{\I_m}^{(\C)}  \omega^r +  \mathcal{L}_{\V}^{(\C)}  \omega^r ) \biggr] \cdot {d\C} 
\\& = \biggl[ \int\limits_{\Omega(\C)} ( \vpartial_{\C}  \omega^r +  \mathcal{L}_{\V}^{(\C)}  \omega^r ) \biggr] \cdot {d\C} 
\\& = \biggl[ \int\limits_{\Omega(\C)} ( \vpartial_{\C}  \omega^r + i_{\V}^{(\C)}  \, d \omega^r + d  ( i_{\V}^{(\C)} \, \omega^r  ) \biggr] \cdot {d\C} 
\end{split}
%\label{eq:Re_tr_diffform_aug2}
%%\raisetag{60pt}
\end{align*}
where the last line follows from the multiparameter Cartan relation \eqref{eq:multi_Cartan}, in which $d$ and $i_{\V}^{(\C)}$ are based on the standard local coordinates $\X$. Eq.\ \eqref{eq:Re_tr_diffform_aug2} connects the first, second and fourth parts of \eqref{eq:Re_tr_diffform_aug}. The third part of \eqref{eq:Re_tr_diffform_aug}, containing the surface integral term, follows from Stokes' theorem.  $\blacksquare$

We note that if $\C$ is expressed in Cartesian coordinates, the first term in the integrand of \eqref{eq:Re_tr_diffform_aug2} can be written as $\nabla_{\C}  \omega^r$. 
The above analysis extends the proof of \eqref{eq:Re_tr_diffform_aug2} for one-parameter systems $\C=t$ given by [\onlinecite[][\S 4.3b]{Frankel_2013}]. The same result appears to have been first reported by Flanders, using a different proof based on $r$-chains and an augmented pullback operator [\onlinecite[][\S8]{Flanders_1973}].

%\newpage
%Apx E
\section{\label{sect:Apx_prob_forms} Probability $r$-forms} 
\setcounter{equation}{0}

There is a complication in the definition of probability $r$-forms, due to the question of orientation 
[e.g.\ \onlinecite[][\S 11.4]{Folland_1999}]. %; \onlinecite{Wiki_density_2018}
This arises from the contradiction between the measure-theoretic definition of a probability density, which is independent of the direction of integration, and the oriented volumes and surfaces encountered in exterior calculus.
To address this, we first define a probability $r$-form by:
\begin{align}
\rho^r 
%= \sum\limits_{j=1}^{\binom{n}{r}} \text{$\thn$}_j \, dX_{j_1} \wedge ... \wedge dX_{j_r}
= \sum\limits_{j_1 < ... < j_r} \text{$\thn$}_{j_1...j_r} \; dX_{j_1} \wedge ... \wedge dX_{j_r},
\label{eq:prob_form_expansion}
\end{align}
where $\text{$\thn$}_{j_1...j_r}$ are scalars and the $dX_{j_k}$ are an ordered selection of $r$ vectors from $[dX_1, ..., dX_n]^\top$. This definition is made subject to local and global constraints, respectively:
\begin{align}
\begin{split}
\rho^r &\ge 0, \hspace{10pt} \forall \s \in M^n \\
%\hspace{10pt} \text{ and } \hspace{10pt}
\int\nolimits_{\Omega(\C)} \rho^r &=1,  \hspace{10pt} \forall \Omega(\C) \in M^n.
\end{split}
\label{eq:prob_form_constraints}
\end{align}
To satisfy these constraints, we define \eqref{eq:prob_form_expansion}-\eqref{eq:prob_form_constraints} only for an oriented compact submanifold $\Omega(\C)$ within an orientable manifold $M^n$, and preclude non-orientable manifolds \cite{Folland_1999}. 
Furthermore, the choices of the $\thn_{j_1...j_r}$ terms and/or the combinations of $dX_{j_k}$ need to be restricted with respect to the orientation of the submanifold $\Omega(\C)$ to satisfy the constraints. 
The $\thn_{j_1...j_r}$ terms can then be interpreted as connected segments or portions of a joint-conditional pdf $\hat{p}(\s |\C)$ defined over all points $\s \in \Omega(\C)$ in the submanifold, using a local coordinate system $\X(\s)$, subject to the conditions $\C$.

The nonnegativity constraint in \eqref{eq:prob_form_constraints} can be achieved in several ways: the simplest method is to take $\rho^r$ as the absolute and normalized value of some $r$-form $\upsilon^r$ defined over the submanifold.  A weaker method would be to impose the equivalence $\sign(\text{$\thn$}_{j_1...j_r}) = \sign(dX_{j_1} \wedge ... \wedge dX_{j_r})$, ensuring non-negative terms in the sum. An even weaker method would be to allow negative local terms $\thn_{j_1...j_r} < 0$ and oriented volume elements $dX_{j_1} \wedge ... \wedge dX_{j_r} < 0$, so long as the constraints \eqref{eq:prob_form_constraints} are satisfied in the sum \eqref{eq:prob_form_expansion}.

% Apx F
\section{\label{sect:Apx_Re_param} Proof of the Generalized Reynolds Transport Theorem in Vector Calculus} 
\setcounter{equation}{0}

To prove \eqref{eq:Re_tr_vect_calc_gen}, consider the augmented Reynolds transport theorem \eqref{eq:Re_tr_diffform_aug} with global Cartesian coordinates $\X$ and parameters $\C$ defined on the space $M \subseteq \R^n$, for which the tensor field $\V$ is a function of $\X$ and $\C$. This theorem is applied to the {\it top form} $\mu^n = \sum\nolimits_{j_1,...,j_n} w_{j_1...j_n} $ $dX_{1} \wedge ... \wedge dX_{n}$ defined on $M$. Since $M$ is orientable, $\mu^n$ is also a {\it volume form}, so we can set $\mu^n = \psi \, dX_{1} \wedge ... \wedge dX_{n}$ based on the non-vanishing density field $\psi(\X, \C) = \sum\nolimits_{j_1,...,j_n} w_{j_1...j_n}$ \cite[][\S 8.7]{Lee_2009}. We further assume that $\psi$ is continuous and continuously differentiable with respect to $\X$ and $\C$ throughout the domain $\Omega(\C) \subset M$, for all coordinates up to its boundary and all parameter values considered. 

Examining the left-hand side of \eqref{eq:Re_tr_diffform_aug}, since the integral is a function (0-form) we see that
$\hat{d} \int\nolimits_{\Omega(\C)}  \mu^n = d \int\nolimits_{\Omega(\C)}  \mu^n$, 
where $d$ is the differential. Now consider each term in the last integrand on the right-hand side of \eqref{eq:Re_tr_diffform_aug} applied to $\mu^n$. Firstly, for Cartesian parameters $\C$, the first term reduces to $\vpartial_{\C}  \mu^r = \nabla_{\C}  \mu^r$.  Secondly, from the definition of the exterior derivative \eqref{eq:diffform_diff}:
\begin{align}
\begin{split}
d\mu^n 
&= d\psi \wedge dX_{1} \wedge ... \wedge dX_{n}
=\sum\limits_{k=1}^n \frac{\partial \psi}{\partial X_k} dX_k \wedge dX_{1} \wedge ... \wedge dX_{n}
=0
\end{split}
\label{eq:diffform_diff2}
\end{align}
since every term contains $dX_k \wedge dX_k=0$ for some $k \in \{1,...,n\}$, by virtue of being a top form. In consequence, the second term $i_{\V}^{(\C)}  \, {d} \mu^n$ in \eqref{eq:Re_tr_diffform_aug} reduces to the zero vector $\0_m$. Thirdly, examining the last term $d i_{\V}^{(\C)} \, \mu^n$ in \eqref{eq:Re_tr_diffform_aug}, using \eqref{eq:interior_vect}, \eqref{eq:diffform_diff} and $dX_i \wedge dX_i =0$ we find that [c.f., \onlinecite[][\S 9.2]{Lee_2009}]:
\begin{align}
\begin{split}
d i_{\V}^{(\C)} \, \mu^n
&= d i_{\V}^{(\C)} \bigl( \psi \, dX_{1} \wedge ... \wedge dX_{n} \bigr)
\\
&= d \sum\limits_{k=1}^n (-1)^{k-1} \, \psi \V_{k \cdot}^\top \;  dX_{1} \wedge ... \wedge dX_{k-1} \wedge dX_{k+1} \wedge  ... \wedge dX_{n}
\\
&= \sum\limits_{k=1}^n (-1)^{k-1} \, d(\psi \V_{k \cdot}^\top) \wedge  dX_{1} \wedge ... \wedge dX_{k-1} \wedge dX_{k+1} \wedge  ... \wedge dX_{n}
\\
&= \sum\limits_{k=1}^n (-1)^{k-1} \biggl( \sum\limits_{j=1}^n \frac{\partial}{\partial X_j} (\psi \V_{k \cdot}^\top) \, dX_j  \biggr)  \wedge dX_{1} \wedge ... \wedge dX_{k-1} \wedge dX_{k+1} \wedge  ... \wedge dX_{n}
\\
&= \sum\limits_{k=1}^n (-1)^{k-1}  \frac{\partial}{\partial X_k} (\psi \V_{k \cdot}^\top) \, dX_k \wedge  dX_{1} \wedge ... \wedge dX_{k-1} \wedge dX_{k+1} \wedge  ... \wedge dX_{n}
\\
&= \sum\limits_{k=1}^n   \frac{\partial}{\partial X_k} (\psi \V_{k \cdot}^\top)  \;  dX_{1} \wedge  ... \wedge dX_{n}
\\
&= \nabla_{\X} \cdot  (\psi \V)  \;  dX_{1} \wedge  ... \wedge dX_{n}
\end{split}
\raisetag{50pt}
\end{align}
where we again define $ \nabla_{\X} \cdot (  \psi \, \V  ) =  [\nabla_{\X}^\top ( \psi \, \V )]^\top$.  Assembling these results into \eqref{eq:Re_tr_diffform_aug}, and recognising that integration over $dX_{1} \wedge  ... \wedge dX_{n}$ is equivalent to integration over $dX_{1} ... dX_{n} = d^n \X$, we establish the equivalence of the first and last terms of \eqref{eq:Re_tr_vect_calc_gen}. The middle term in \eqref{eq:Re_tr_vect_calc_gen}, containing a surface integral, is obtained from the last term by the Gauss-Ostrogradsky divergence theorem $\blacksquare$.

%\section*{References}

%\providecommand{\newblock}{}
%\bibliographystyle{siamplain}

\end{document}